\title{Trust and Betrayals:\\Reputational Payoffs and Behaviors without Commitment}
\author{Harry PEI\footnote{Department of Economics, Northwestern University.
I thank Daron Acemoglu, Ricardo Alonso, Guillermo Caruana, Isa Chaves,
Mehmet Ekmekci, Jeff Ely, Drew Fudenberg, Olivier Gossner, Bruno Jullien, Aditya Kuvalekar,
Elliot Lipnowski, Debraj Ray, Bruno Strulovici, Saturo Takahashi, Jean Tirole, Juuso Toikka, Tristan Tomala, Alex Wolitzky, and Muhamet Yildiz for helpful comments.}}
\date{\today}
\begin{document}
\maketitle
\numberwithin{equation}{section}

\noindent I study a repeated game in which a patient player (e.g., a seller) wants to win the trust of some myopic opponents (e.g., buyers) but can strictly benefit from betraying them. Her benefit from betrayal is strictly positive and is her persistent private information. I characterize every type of patient player's highest equilibrium payoff. Her persistent private information affects this payoff only through the lowest benefit in the support of her opponents' prior belief. I also show that in every equilibrium which is optimal for the patient player, her on-path behavior is nonstationary, and her long-run action frequencies are pinned down for all except two types. Conceptually, my payoff-type approach incorporates a realistic concern that no type of reputation-building player is immune to reneging temptations.
Compared to commitment-type models, the incentive constraints for all types of patient player lead to a sharp characterization of her highest attainable payoff and novel predictions on her behaviors.\\

\noindent \textbf{Keywords:} rational reputational types, lack-of-commitment problem, equilibrium behavior, reputation\\
\noindent \textbf{JEL Codes:} C73, D82, D83

\newtheorem{Proposition}{\hskip\parindent\bf{Proposition}}[section]
\newtheorem{Theorem}{\hskip\parindent\bf{Theorem}}
\newtheorem*{Theorem1}{\hskip\parindent\bf{Theorem 1'}}
\newtheorem*{Theorem2}{\hskip\parindent\bf{Theorem 2'}}
\newtheorem*{Observation}{\hskip\parindent\bf{Observation}}
\newtheorem{Lemma}{\hskip\parindent\bf{Lemma}}[section]
\newtheorem*{Lemma2}{\hskip\parindent\bf{Lemma 3.7.2 in MS (2006)}}
\newtheorem{Corollary}{\hskip\parindent\bf{Corollary}}
\newtheorem{Definition}{\hskip\parindent\bf{Definition}}
\newtheorem{Assumption}{\hskip\parindent\bf{Assumption}}
\newtheorem{Condition}{\hskip\parindent\bf{Condition}}
\newtheorem{Claim}{\hskip\parindent\bf{Claim}}
\begin{spacing}{1.5}
\section{Introduction}
I study people's incentives to build reputations when the role models that they want to imitate are also strategic and are tempted to deviate.
For example,
consider a seller who promises to deliver high-quality products to her clients.
After the clients agree to purchase, the seller is tempted to undercut quality on aspects that are hard to verify.
Suppose all clients know for sure that the seller faces a strictly positive cost to supply high quality but does not know its exact magnitude,
can the seller benefit from reputations for having a low cost?
How does a low-cost seller's temptation to undercut quality affect the incentives to build and milk reputations?


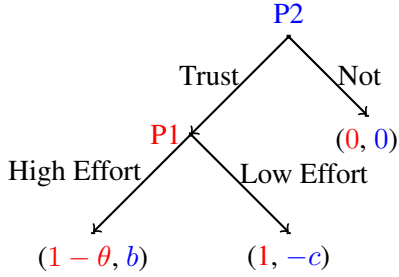
\begin{figure}
\begin{center}
\begin{tikzpicture}[scale=0.26]
\draw [->, thick] (0,10)--(-5,5);
\draw [->, thick] (0,10)--(4,6);
\draw [->, thick] (-5,5)--(-10,0);
\draw [->, thick] (-5,5)--(0,0);
\draw (0,10)--(-2,8)node[left]{Trust};
\draw (0,10)--(2,8)node[right]{Not};
\draw (-5,5)--(-7,3)node[left]{High Effort};
\draw (-5,5)--(-3,3)node[right]{Low Effort};
\draw [ultra thick] (0,9.9)--(0,10.1)node[above, blue]{P2};
\draw [ultra thick] (-5.1,5)--(-4.9,5)node[left, red]{P1};
\draw [ultra thick] (4,6)--(4,5.9)node[below]{({\color{red}{$0$}}, {\color{blue}{$0$}})};
\draw [ultra thick] (-10,0)--(-10,-0.1)node[below]{({\color{red}{$1-\theta$}}, {\color{blue}{$b$}})};
\draw [ultra thick] (0,0)--(0,-0.1)node[below]{({\color{red}{1}}, {\color{blue}{$-c$}})};
\end{tikzpicture}
\caption{The stage game, where $\theta \in (0,1)$, $b>0$, $c>0$}
\end{center}
\end{figure}

To address these questions,
I analyze a \textit{payoff-type reputation model} in which there is no commitment type, and
all types of the reputation-building player are rational and have strict incentives to misbehave. My baseline model
features a \textit{sequential-move trust game} (Figure 1, page 2) played repeatedly  between a patient player $1$ (e.g., a seller) and an infinite sequence of player $2$s (e.g., buyers), arriving one in each period and each plays the game only once.\footnote{My results are robust when players move \textit{simultaneously} in the stage game, and are robust under perturbations of player $2$'s discount factor (see section 5 and Appendix \ref{secA}). The assumption that buyers are myopic fits into applications such as durable good markets where each buyer has unit demand, and online platforms such as Uber where a buyer is unlikely to interact with the same seller twice.}
In every period, player $1$ wants to win her opponents' trust, but has a strict incentive to exert low effort once trust is granted. Her cost of exerting high effort is \textit{perfectly persistent} and is her private information,
which I call her \textit{type}. Every player $2$ observes the outcomes of all past interactions, and prefers to trust player $1$
only when he expects
high effort will be chosen
with probability above some cutoff.

My analysis suggests that all types of the patient player's temptations to misbehave
introduce additional incentive constraints, which lead to new predictions on the value of reputations
as well as novel behaviorial patterns to build and milk reputations.
My results contrast to the conclusions in
existing reputation models with \textit{commitment types} (e.g., Fudenberg and Levine 1989,1992, Benabou and Laroque 1992),
given that commitment types face no incentive constraint and mechanically play some exogenous commitment strategies.

My first result (Theorem \ref{Theorem3.1}) characterizes every type of patient player's \textit{highest equilibrium payoff},\footnote{Since players move \textit{sequentially} in the stage game, the patient player's lowest equilibrium payoff is $0$. This is also the case in models with a small probability of commitment types, and is a general feature of sequential-move stage games where future player $2$s can only observe the terminal node reached in each period, but cannot observe player $1$'s (extensive-form) stage-game  strategy. This motivates my analysis of player $1$'s \textit{highest equilibrium payoff} instead of her \textit{lowest equilibrium payoff}.} which is the product of her \textit{complete information Stackelberg payoff} and an \textit{incomplete-information multiplier}.
The latter is a sufficient statistic for the effects of incomplete information, which is strictly below one, and depends \textit{only} on the lowest possible cost in the support of player $2$s' prior belief.

By comparing every type of player $1$'s highest equilibrium payoff with that in the repeated complete information game,
my result suggests that every type of player $1$, except for the lowest-cost type, strictly benefits from incomplete information.
This observation is economically interesting since obtaining high payoff requires every high-cost type to exert low effort while receiving player $2$'s trust,\footnote{Fudenberg, Kreps and Maskin (1990)'s result implies that if in equilibrium, a type exerts high effort with positive probability every time she receives player $2$'s trust, then her equilibrium payoff cannot exceed her highest payoff under complete information, equals $1-\theta$.} and player $2$'s incentive to trust
suggests that the low-cost types need to exert high effort with high enough probability.
Therefore,
in every period where player $1$ can extract information rent, her behavior inevitably reveals information about her cost, which undermines her future informational advantage.
As her discount factor increases,
the number of periods in which she needs to extract information rent  (in order to attain a given payoff) \textit{grows without bound}. Conceptually,
how can she reveal information about her persistent type
for unboundedly many times while preserving her informational advantage?

My next two results examine
the patient player's behavior in her optimal equilibria. The motivation is to understand how she builds and milks reputations
to benefit from private information in the long run.
Theorem \ref{Theorem3.2} shows that if player $1$ has two or more types, then
in \textit{each of her optimal equilibrium}, no type mixes between high and low effort at every on-path history. This result extends to a type whose cost of exerting high effort is zero. An implication is that every type of player $1$ must have \textit{strict incentives} at some on-path histories, even when she is indifferent between high and low effort in the one-shot game. This conclusion contrasts
to the \textit{Stackelberg commitment types} in canonical reputation models that mechanically play the same mixed action in every period.

My proof develops a single-crossing argument in repeated games, which suggests that the conclusion of Theorem \ref{Theorem3.2} hinges on
the incentive constraints of \textit{all types}.
For a snapshot of the proof, suppose by way of contradiction that a type who does not have the lowest cost mixes between high and low effort at every on-path history. Then the lowest-cost type exerts high effort with probability one at every on-path history. As a result, \textit{the second-lowest cost type} separates from the lowest-cost type as soon as she exerts low effort, after which
her cost becomes the lowest one
in the support of player $2$s' posterior belief, and her continuation payoff \textit{cannot} exceed that in the repeated complete information game.
This contradicts Theorem 1, which
suggests that the second-lowest cost type obtains strictly higher payoff relative to complete information in her optimal equilibrium.

Next, suppose the lowest-cost type mixes at every on-path history.
Then the highest-cost type exerts low effort for sure, which implies that after observing low effort for a bounded number of periods,
player $2$s believe that low effort occurs with probability close to one in all future periods, after which player $2$s will stop trusting player $1$, leaving the latter a payoff close to zero. This again contradicts the optimality of equilibrium.

Theorem \ref{Theorem3.3} derives bounds on player $1$'s action frequencies that apply to \textit{all of her equilibrium best replies}. I show that first,
if player $1$'s cost is not the highest one in the support,
then the relative frequency between high and low effort \textit{cannot fall below} the ratio between their probabilities in the Stackelberg action (i.e., the \textit{critical ratio}); and
second,
if player $1$'s cost is not the lowest one in the support, then
the relative frequency between high and low effort \textit{cannot exceed} the aforementioned critical ratio.
The two bounds together pin down the action frequencies for all types of the patient player,
except for ones with the highest cost and the lowest cost.

The proof of Theorem \ref{Theorem3.3} uses a similar single-crossing argument.
A merit of my approach is that the resulting bounds on behavior
apply to \textit{all equilibrium best replies},
which have stronger testable implications compared to the ones derived
in commitment-type reputation models that apply only to the patient player's \textit{equilibrium strategy}.
This distinction is economically meaningful in dynamic signaling games where information is gradually revealed over time and the informed player's strategy inevitably involves nontrivial mixing. This is because in practice,
researchers can usually observe a sample of the dynamic interaction, corresponding to a realized path of a player's actions, but not the entire distribution over her action paths, corresponding to her equilibrium strategy.

My proof of Theorem \ref{Theorem3.1} is constructive, which illustrates how the patient player builds and milks reputations in order to
maximize
her benefit from persistent private information in the long run.
Every equilibrium I construct starts from an \textit{active learning phase} in which player $2$s trust and slowly learn about player $1$'s type. Play gradually reaches an \textit{absorbing phase} after which learning stops and no type of player $1$ can extract information rent.

In the active learning phase, the lowest-cost type mixes between high and low effort in every period.
For every high-cost type,
if player $2$'s posterior belief attaches probability close to one to the lowest-cost type,
then she exerts low effort for sure;
otherwise, she plays a mixed action, with the probability of exerting high effort being strictly lower than that of the lowest-cost type.
This arrangement enables every high-cost type to \textit{rebuild} her reputation after milking it, and reduces her reputational loss
when she extracts information rent.
The equilibria I construct have an appealing feature
that
player $1$'s reputation in the active learning phase depends \textit{only} on the number of times she has exerted high and low effort in the past, but not on other more complicated metrics.


Despite every type of player $1$ can flexibly choose
her actions in the active learning phase, her action choices affect the time with which play reaches the absorbing phase as well as her continuation payoff in the absorbing phase.
The constructed timing and continuation payoffs provide all types of player $1$ the incentives to take their equilibrium actions.
The expected duration of the active learning phase
decreases with player $1$'s lowest possible cost,
increases with her patience, and increases with player $2$'s propensity to trust player $1$.

\paragraph{Related Literature:}
This paper contributes to the literature on reputations and
rational-type repeated games.

The reputation models of Fudenberg and Levine (1989, 1992) and Gossner (2011) fix the behavior of at least one type of patient player (i.e., \textit{commitment types}), and derive payoff lower and upper bounds for a strategic-type patient player.
If the patient player's stage-game actions are \textit{statistically identified} (e.g., in simultaneous-move stage game), and there exists a commitment type who mechanically plays a mixed action,
then the strategic patient player can attain strictly higher payoffs compared to
her highest payoff in the repeated complete information game.
In addition, if there exists a commitment type who mechanically plays her Stackelberg action,
then the strategic player approximately attains
her Stackelberg payoff in all equilibria,
in which case the commitment-type approach leads to a tight characterization of the patient player's optimal equilibrium payoff.
Mailath and Samuelson (2006)'s textbook provides an excellent summary of these results.
However,
\begin{enumerate}
  \item If the patient player's stage-game actions are \textit{not statistically identified}, such as in sequential-move stage games,
  then the existing results in commitment-type models \textit{do not imply} what the patient player's highest equilibrium payoff is, nor do they imply whether the patient player can strictly benefit from incomplete information. In the trust game example,
  the commitment-type reputation results only imply that the patient player's payoff is between her \textit{minmax payoff $0$} and \textit{her Stackelberg payoff}, but there is no guarantee that this payoff upper bound is attainable, nor do these results imply whether a patient player can attain strictly greater payoff compared to her highest payoff in the repeated complete information game.
  \item Even if the patient player's actions are statistically identified, such as in simultaneous-move stage games with perfect monitoring,
 characterizing her optimal equilibrium payoff and reaching the conclusion that she can
strictly benefit from incomplete information require the presence of \textit{commitment types who play mixed actions}.
  The plausibility of such types is somewhat questionable (i.e., whether they can arise from maximizing reasonable payoff functions),
      and so are the reputation results that rely on such types.
  \item The commitment-type approach is not well-suited to answer questions related to players' behaviors. This is because first,
  rational-type patient player's behavior is sensitive to
how the commitment types behave,
and the latter are exogenously assumed instead of endogenously derived. Second, players' commitment strategies
  do not respond to changes in the payoff environment, which include, but not limited to,
 the patient player's discount factor and her initial reputation.
\end{enumerate}

I propose a complementary approach in which all types of the reputation-building player face lack-of-commitment problems.
In my payoff-type model, all types' behaviors are endogenously determined and respond to changes in the payoff environment.
This include the most efficient type (i.e., lowest-cost type) that others want to imitate.

In terms of results, I characterize every type of patient player's optimal equilibrium payoff as the value of a constrained optimization problem. My characterization applies both to sequential-move and simultaneous-move stage games.
The constraints in my program unveil how the reputational type's incentive constraints as well as the uninformed players' learning affect
a patient player's returns from reputation building.

My approach also answers novel questions related to the patient player's \textit{equilibrium behavior}, such as
how the reputational types behave when they also face lack-of-commitment problems, and how the other types behave in order to exploit those rational reputational types.
By exploiting the incentive constraints for all types of the patient player, Theorems 2 and 3 derive new behavioral
predictions on how a patient player builds and milks reputations. These results
have not been obtained in reputation models with commitment types.
The lowest-cost type's equilibrium behaviors in my payoff-type model
differ from the classic examples of Stackelberg commitment types. This suggests
that my results on behavior can help to
evaluate which of the many commitment strategies are more plausible in the sense that they can arise from maximizing reasonable payoff functions.

One concern is that my payoff-type approach cannot rule out low-payoff equilibria, for example, equilibria in which all types of patient player receive their minmax payoffs.
This is a disadvantage relative to commitment-type models when players move simultaneously in the stage-game.
It is \textit{not} a disadvantage
in sequential-move stage-games since
neither approach can rule out low-payoff equilibria, in which case the only role of reputation is to allow the patient player to attain higher payoffs relative to complete information.

Related to this paper,
Weinstein and Yildiz (2016) provide a strategic foundation for \textit{nonstationary pure-strategy commitment types} in finitely repeated games
using strategic types who maximize their discounted average payoffs.
In contrast, my paper rationalizes \textit{mixed-strategy commitment types} in infinitely repeated games using strategic types
that are not only required to maximize their discounted average payoffs, but are also required to know their own payoffs (i.e., private values), and to share the same ordinal preferences over stage-game outcomes.
Caruana and Einav (2008) endogenize players' commitment in dynamic games using switching costs. My approach differs from theirs since the patient player faces no direct cost to change her behavior over time.

My paper is related to the rational-type repeated game literature
pioneered by Aumann and Maschler (1995).
Hart (1985) characterizes the set of equilibrium payoffs in repeated games with one-sided private information and no discounting.
Shalev (1994) simplifies Hart's characterization in games with private values.
P\c{e}ski (2014) studies private-value repeated games with discounting and focuses on the case in which all players'
discount factors are arbitrarily close to $1$.
Cripps and Thomas (2003) show that
when the informed player is arbitrarily patient and the uninformed player's discount factor is bounded away from $1$,
Shalev's characterization remains to be
a \textit{necessary condition}
for being an equilibrium payoff. However, it is \textit{not sufficient} in general, which means that some payoffs in that set are not attainable in equilibrium no matter how patient the informed player is.\footnote{An exception is zero-sum games, in which Aumann and Maschler (1995)'s characterization applies
under any discount factor of the uninformed player(s).
However, the stage-game studied in this paper is not zero sum. In settings with short-lived players, the belief-free equilibrium approach in H\"{o}rner, Lovo, and Tomala (2011) is not applicable. This is because at every history in which some types of the patient player can extract information rent, the short-lived player's best reply depends on his belief about the patient player's type.}

My Theorem \ref{Theorem3.1} contributes to this literature by identifying conditions that are both \textit{necessary and sufficient} for being an equilibrium payoff.
I characterize every type of
patient informed player's highest equilibrium payoff when her opponent's discount factor
is close to or equal zero. My characterization applies to an interesting class of games, which include but not limited to product choice games and entry deterrence games.


\section{The Baseline Model}\label{sec2}
My baseline model is an infinitely repeated
trust game, which highlights the lack-of-commitment problem in economic interactions. Different from canonical reputation models with commitment types, all types of the reputation-building player are rational and have qualitatively similar payoff functions.
This captures, for instance, all types of sellers are tempted to undercut quality but can strictly benefit from buyers' purchases.

In section 5, I examine the robustness of my insights under variations of my baseline model, which include simultaneous-move stage games, the uninformed player being forward-looking, and imperfect monitoring of the informed player's actions.
Generalizations of my results beyond $2 \times 2$ trust games are stated in Appendix D.

\paragraph{Stage Game:} Consider the trust game in Figure 1 (on page 2) between an informed seller (player $1$, she) and an uninformed buyer (player $2$, he). The buyer moves first, deciding whether to trust the seller (action $T$) or not (action $N$). If he chooses $N$, then both players' payoffs are normalized to $0$. If he chooses $T$, then the seller chooses between high effort (action $H$) and low effort (action $L$). If the seller chooses $L$, then her payoff is $1$ and the buyer's payoff is $-c$. If the seller chooses $H$, then her payoff is $1-\theta$ and the buyer's payoff is $b$, where:
\begin{itemize}
  \item $b>0$ is the buyer's benefit from the seller's high effort, and $c>0$ is the buyer's loss from the seller's low effort, both of which are common knowledge among players.
  \item $\theta \in \Theta \equiv \{\theta_1,...\theta_m\} \subset (0,1)$ is the seller's cost of high effort and is her private information. Without loss of generality, I assume that $0<\theta_1 < \theta_2 < ... <\theta_m<1$.
   My results extend when $\theta_1=0$ and/or $\theta_m=1$.
\end{itemize}
The unique stage-game equilibrium outcome is $N$ and the
seller's payoff is $0$.
This is because every type of seller has a strict incentive to choose $L$ after the buyer plays $T$. If the seller can optimally commit to an action $\alpha_1 \in \Delta \{H,L\}$ before the buyer moves, then every type's optimal commitment is to
play $H$ with probability
$\gamma^* \equiv \frac{c}{b+c}$ and $L$ with probability $1-\gamma^*$.
Type $\theta_j$'s optimal commitment payoff (or \textit{Stackelberg payoff}) is
\begin{equation}\label{3.2.5}
v_j^{**} \equiv 1-\gamma^* \theta_j \textrm{ for every } j \in \{1,2,...,m\},
\end{equation}
with $\gamma^* H + (1-\gamma^*) L$ her Stackelberg action.
The above comparison between the seller's Nash equilibrium payoff and her Stackelberg payoff highlights a \textit{lack-of- commitment} problem, which is of first order importance not only in business transactions (Mailath and Samuelson 2001), but also in fiscal and monetary policies (Barro 1986, Phelan 2006) and political economy (Tirole 1996).

The rest of this article sets up a repeated version of this game in which \textit{all types of sellers have strict incentives to betray}.
I examine the extent to which a patient seller can overcome her lack-of-commitment problem, as well as different types of sellers' behaviors in seller-optimal equilibria.

\paragraph{Repeated Game:} Time is discrete, indexed by $t=0,1,2,...$. A long-lived seller interacts with an infinite sequence of buyers, arriving one in each period and each plays the game only in the period he arrives.
Both $b$ and $c$ are common knowledge among players. The seller's cost $\theta$ is perfectly persistent and is her private information.  The buyers have a full support prior belief $\pi \in \Delta (\Theta)$.

Let $y_t \in \{N,H,L\}$ be the stage-game outcome in period $t$, in which $H$ stands for the buyer chooses $T$ and the seller chooses $H$, and $L$ stands for the buyer chooses $T$ and the seller chooses $L$.
Let $h^t=\{y_{s}\}_{s=0}^{t-1} \in \mathcal{H}^t$ be a public history with $\mathcal{H} \equiv \bigcup_{t=0}^{+\infty} \mathcal{H}^t$ the set of public histories.\footnote{There is \textit{no public randomization device} in my baseline model. My results apply as long as future buyers \textit{cannot perfectly monitor the seller's mixed actions}, which is a standard assumption in the repeated game and reputation literature. This is satisfied in my baseline model, repeated simultaneous-move games with and without public randomizations, and repeated sequential-move games in which the public randomization is realized before player $2$ moves or after player $1$ moves. This assumption is violated
when players move sequentially in the stage-game and the public randomization is realized in between player $2$'s and player $1$'s moves.}
Let $A_1 \equiv \{H,L\}$ and $A_2 \equiv \{T,N\}$.
Let $\sigma_2: \mathcal{H} \rightarrow \Delta(A_2)$ be the buyers' strategy.
Let $\sigma_1 \equiv (\sigma_{\theta})_{\theta \in \Theta}$,
with type $\theta$ seller's strategy being
$\sigma_{\theta}: \mathcal{H} \rightarrow \Delta (A_1)$, which
specifies this type of seller's action choices at every public history conditional on the buyer choosing $T$.

The seller's stage-game payoff is denoted by $u_1(\theta,y_t)$, which depends on her type $\theta$ and the stage-game outcome $y_t \in \{N,H,L\}$.
The seller's discount factor is $\delta \in (0,1)$.
Type $\theta$ seller chooses $\sigma_{\theta}$ in order to maximize:
\begin{equation}\label{3.2.1}
  \mathbb{E}^{(\sigma_{\theta},\sigma_2)}\Big[  \sum_{t=0}^{\infty}(1-\delta)\delta^t u_1(\theta,y_{t}) \Big],
\end{equation}
with $\mathbb{E}^{(\sigma_{\theta},\sigma_2)}[\cdot]$ the expectation over $\mathcal{H}$ under the probability measure induced by $(\sigma_{\theta},\sigma_2)$.

\section{Results}
Theorem \ref{Theorem3.1} characterizes every type of patient seller's highest equilibrium payoff.
Theorem \ref{Theorem3.2} shows that no type of seller uses stationary strategies or completely mixed strategies in any seller-optimal equilibrium.
Theorem \ref{Theorem3.3} derives bounds on the seller's action frequencies
that apply to all of her equilibrium best replies.
These novel predictions rely on \textit{all types of seller's incentive constraints}, i.e., their incentives to betray and to receive trust.
These are not obtained in commitment-type models since commitment types face no incentive constraints.

\subsection{Patient Player's Optimal Equilibrium Payoff}
Since there are $m$ types,
the seller's  \textit{payoff} is an $m$-dimensional vector $v \equiv (v_1,...,v_m) \in \mathbb{R}^m$, in which the $j$th entry of $v$ represents the discounted average payoff of type $\theta_j$. Let
$v^* \equiv (v_1^*,...,v_m^*)$, with:
\begin{equation}\label{3.3.1}
    v_j^* \equiv \underbrace{(1-\gamma^* \theta_j)}_{\textrm{Type $\theta_j$'s Stackelberg payoff}} \underbrace{\frac{1-\theta_1}{1-\gamma^* \theta_1}}_{\textrm{incomplete-information multiplier}},  \textrm{ for every } j \in \{1,2,...,m\}.
\end{equation}
\begin{Theorem}\label{Theorem3.1}
If $\pi$ has full support, then for every $\varepsilon>0$, there exists $\underline{\delta} \in (0,1)$ such that for every $\delta \in (\underline{\delta},1)$,
\begin{itemize}
\item[1.] There exists no Bayes Nash equilibrium (BNE) in which type $\theta_1$'s payoff is strictly more than $v_1^*$.\\  There exists no BNE in which type $\theta_j$'s payoff is more than $v_j^*+\varepsilon$ for some $j \in \{2,...,m\}$.
  \item[2.] There exists a sequential equilibrium in which the seller's payoff is within $\varepsilon$ of $v^*$.\footnote{The notion of sequential equilibrium in this infinitely repeated game is introduced by P\c{e}ski (2014).}
\end{itemize}
\end{Theorem}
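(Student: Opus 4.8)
The plan is to prove the two parts separately: an impossibility (upper-bound) argument for Part~1 and an explicit construction for Part~2. For both, the workhorse is the accounting identity that, writing $Q_H,Q_L,Q_N$ for the discounted frequencies with which a given type's on-path play produces outcomes $H$, $L$, and $N$ (so that $Q_H+Q_L+Q_N=1$ because $\sum_t(1-\delta)\delta^t=1$), type $\theta_j$'s payoff equals $V_j=(1-Q_{N,j})-\theta_j Q_{H,j}$. With this identity the sharp bound $V_1\le v_1^\ast=1-\theta_1$ is algebraically equivalent to the \emph{reputation-budget} inequality $\theta_1 Q_{L,1}\le(1-\theta_1)Q_{N,1}$: every betrayal by the lowest type must be paid for by a commensurate future loss of trust. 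I would first record this equivalence, since it converts a payoff bound into a statement purely about belief dynamics.

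The first pillar of Part~1 is a single-crossing lemma: in any BNE, at every on-path history at which the seller is trusted, a lower-cost type exerts high effort with weakly higher probability than a higher-cost type. The immediate gain from $L$ over $H$ is exactly $\theta_j$, increasing in the type, so a type plays $H$ only when the reputational continuation gain exceeds $\frac{1-\delta}{\delta}\theta_j$; the difficulty is that the continuation-value difference is itself type-dependent, so monotonicity does not follow from the one-shot comparison. I would close this by an inductive comparison of the continuation values the two types generate after the public outcomes $H$ and $L$ (the public history moves identically regardless of which type realizes it), establishing that $\theta_1$ is the ``most-$H$'' type at every history. Two consequences are used repeatedly: the buyer's trust constraint (posterior-average probability of $H$ at least $\gamma^\ast$) forces $\theta_1$'s own probability of $H$ to be at least $\gamma^\ast$ whenever trusted, and each realized $L$ strictly lowers, while each $H$ raises, the posterior weight on $\theta_1$, so that this weight is a bounded submartingale under $\theta_1$'s own measure.

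The second pillar is a learning-budget argument. Because the posterior on $\theta_1$ is a bounded submartingale, its total expected variation is bounded by one; translating belief movements into the outcome counts yields $\theta_1 Q_{L,1}\le(1-\theta_1)Q_{N,1}$ and hence the exact bound for $\theta_1$. For the other types I would run the same accounting from $\theta_j$'s vantage point: whenever $\theta_j$ plays $L$ she separates downward from $\theta_1$ by single-crossing, so the discounted number of periods in which she can still extract rent while trusted is capped by the posterior mass on $\theta_1$ she can credibly borrow; pushing this through shows that the rent she collects above her complete-information value is proportional to $\theta_1$'s own suppression, which is exactly the multiplier $\frac{1-\theta_1}{1-\gamma^\ast\theta_1}$. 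The $\varepsilon$ (rather than exact) slack for $j\ge2$ enters precisely here: the initial learning phase, whose length grows without bound as $\delta\to1$ but whose discounted weight vanishes, is what separates the attained rent from the clean target $v_j^\ast$. I expect this quantitative step---extracting exactly the multiplier for the high types---to be the main obstacle, together with establishing single-crossing against type-dependent continuations.

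For Part~2 I would construct a sequential equilibrium in two phases. During an \emph{active-learning phase} the public state is just the counts of past $H$'s and $L$'s, which determine the buyer's posterior; type $\theta_1$ mixes every period, and each higher type either milks (plays $L$ for sure) when the posterior already puts probability near one on $\theta_1$, or mixes with a strictly lower probability of $H$ than $\theta_1$ otherwise. Play then transits to an \emph{absorbing phase} in which learning halts and no type extracts rent. The mixing probabilities and the random transition time are pinned down by requiring (i) the buyer's posterior-average probability of $H$ to equal $\gamma^\ast$ at every trusted history, making trust sequentially rational, and (ii) indifference for the mixing types together with strict optimality for the milking types, which ties each type's continuation payoff to the transition time. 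I would then check Bayes consistency of the counts-based beliefs on path, assign off-path beliefs (reverting to the no-trust continuation after unexpected deviations) to secure sequential rationality, and verify that the induced payoff vector converges to $v^\ast$ as $\delta\to1$. The main obstacle here is to exhibit a single belief system and transition rule that simultaneously solves all $m$ types' incentive constraints while hitting the target payoffs in the limit; I would address it by solving the indifference conditions recursively in the counts and showing the resulting system admits a consistent, probability-valued solution.
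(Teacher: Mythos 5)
Your Part 2 outline is, in broad strokes, the paper's own construction (counts-based beliefs in an active-learning phase, the lowest-cost type mixing every period, higher types milking when the posterior on $\theta_1$ is near one, an absorbing phase where rent extraction stops, off-path reversion to no trust), so that half is a matter of supplying the machinery the paper builds: the decomposition of continuation values into convex weights on $v^N,v^H,v^L$, the classification of histories by the weight on $v^L$, explicit belief-updating formulas with a slow learning rate $\lambda$, and the Fudenberg--Maskin payoff-sequence lemma in the absorbing phase. The genuine gaps are in Part 1.

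First, your ``learning-budget'' pillar rests on a false premise and a missing translation. A $[0,1]$-bounded submartingale does \emph{not} have expected total variation bounded by one: only the increasing (compensator) part of its Doob decomposition is so bounded, while the martingale part can oscillate with unbounded expected total variation (take $X_{t+1}=X_t\pm\epsilon$ with equal probabilities). More importantly, even granting some variation bound, no argument is given---and none is apparent---that converts belief movements into the inequality $\theta_1 Q_{L,1}\le(1-\theta_1)Q_{N,1}$. That inequality links the frequency of betrayal to the frequency of \emph{no-trust}, but the buyer's decision depends on the posterior-average probability of $H$, not on the posterior weight on $\theta_1$, so a drop in the posterior on $\theta_1$ does not by itself generate $N$-outcomes. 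Indeed, the two consequences you can actually use (per-history single-crossing plus the trust cutoff) only give $Q_{L,1}\le(1-\gamma^*)(1-Q_{N,1})$, hence $V_1\le(1-\gamma^*\theta_1)(1-Q_{N,1})\le 1-\gamma^*\theta_1$: the Stackelberg bound, strictly weaker than the required $1-\theta_1$. The paper reaches the tight bound by a different, purely incentive-based route: induction on the size of the support of player 2's belief, evaluating type $\theta_1$'s equilibrium payoff along the best reply ``play the highest on-path action whenever it lies in the support of your own strategy,'' whose outcomes are only $(T,H)$ and $N$ until $\theta_1$ exits the support, at which point the induction hypothesis applies to the new lowest-cost type (Proposition B.1). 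Relatedly, your per-history single-crossing lemma for \emph{all} BNE is far stronger than anything the paper uses (Theorem 2 needs only monotonicity of best replies when some type is completely mixed, via Liu and Pei (2020)), and your one-line plan for proving it against type-dependent continuations is not a proof.

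Second, for $j\ge 2$ your sketch never invokes the step that actually produces the multiplier: in equilibrium type $\theta_1$ must not gain by deviating to type $\theta_j$'s strategy, so the outcome distribution $\alpha^j$ induced by $(\sigma_{\theta_j},\sigma_2)$ must satisfy $(1-\theta_1)\alpha^j(H)+\alpha^j(L)\le 1-\theta_1$ (constraint (\ref{1.2})); combined with the Gossner-style merging bound forcing $\alpha^j(H)/\alpha^j(L)$ to be at least roughly $\gamma^*/(1-\gamma^*)$ (constraint (\ref{1.3})), the factor $\frac{1-\theta_1}{1-\gamma^*\theta_1}$ falls out of the linear program solved in Lemma \ref{L4.1}. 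Your ``posterior mass she can credibly borrow'' paragraph asserts the multiplier rather than deriving it---as you yourself concede. As written, Part 1 of the proposal would not yield the theorem.
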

The proof is in Appendix A and B.
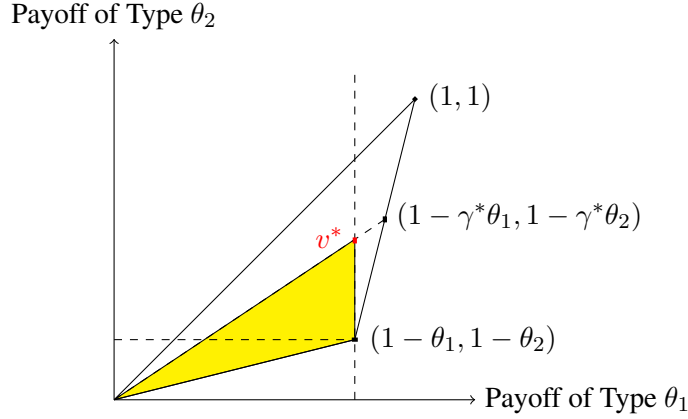
\begin{figure}\label{Figure1}
\begin{center}
\begin{tikzpicture}[scale=0.4]
\draw [fill=yellow] (0,0)--(8,2)--(8,5.33333333333333333)--(0,0);
\draw [->] (0,0)--(12,0)node[right]{Payoff of Type $\theta_1$};
\draw [->] (0,0)--(0,12)node[above]{Payoff of Type $\theta_2$};
\draw (0,0)--(10,10)--(8,2)--(0,0);
\draw [dashed] (8,0)--(8,11);
\draw [dashed] (0,0)--(9,6);
\draw [dashed] (0,2)--(8,2);
\draw [ultra thick] (7.9,2)--(8.1,2)node[right]{$(1-\theta_1,1-\theta_2)$};
\draw [ultra thick] (9.95,9.95)--(10.05,10.05)node[right]{$(1,1)$};
\draw [ultra thick] (9,5.9)--(9,6.1)node[right]{$(1-\gamma^* \theta_1,1-\gamma^* \theta_2)$};
\draw [ultra thick, red] (8,5.4)node[left]{$v^*$}--(8,5.2);
\end{tikzpicture}
\caption{When $m=2$, player $1$'s highest equilibrium payoff $v^*$ in red and her
equilibrium payoff set in yellow.}
\end{center}
\end{figure}
According to Theorem \ref{Theorem3.1}, $v_j^*$ is type $\theta_j$ patient seller's highest equilibrium payoff, and the highest equilibrium payoffs for all types can be (approximately) attained in the same equilibrium. I use different solution concepts in the two statements of Theorem 1 to strengthen my result. In particular, the necessary conditions for being an equilibrium payoff  applies under a weak solution concept (BNE), and the equilibrium that attains $v^*$ can survive demanding refinements such as sequential equilibrium.


Equation (\ref{3.3.1}) provides a tractable formula for type $\theta_j$ seller's highest equilibrium payoff,
which is the product of her \textit{complete information Stackelberg payoff} and an \textit{incomplete information multiplier}.
This multiplier is strictly below $1$ and is common for all types.
Interestingly, it depends only on the \textit{lowest possible cost} in the support of buyers' prior belief, but not on the other types in the support and the probability of each type.

To understand the intuition behind formula (\ref{3.3.1}), I state a lemma which
relates $v_j^*$ to the value of a constrained optimization problem, with proof in Appendix B.3:
\begin{Lemma}\label{L4.1}
For every $j \in \{1,2,...,m\}$, the value of the following constrained optimization problem is $v_j^*$:
\begin{equation}\label{1.1}
    \max_{\alpha \in \Delta \{N, H,L\}} \Big\{ (1-\theta_j) \underbrace{\alpha(H)}_{\textrm{probability of outcome H}} +\underbrace{\alpha(L)}_{\textrm{probability of outcome L}}\Big\},
\end{equation}
subject to:
\begin{equation}\label{1.2}
    (1-\theta_1) \alpha(H) +\alpha(L) \leq 1-\theta_1,
\end{equation}
and
\begin{equation}\label{1.3}
 \alpha(H) \geq \frac{\gamma^*}{1-\gamma^*}\alpha(L).
\end{equation}
\end{Lemma}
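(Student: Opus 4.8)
The plan is to treat (\ref{1.1})--(\ref{1.3}) as a two-variable linear program in $(\alpha(H),\alpha(L))$, eliminating $\alpha(N)=1-\alpha(H)-\alpha(L)$, and to pin down the value $v_j^*$ by exhibiting a feasible maximizer together with a matching upper bound obtained from a nonnegative combination of the two constraints (\ref{1.2}) and (\ref{1.3}). The guiding intuition is that (\ref{1.3}) is the buyer's willingness-to-trust constraint and (\ref{1.2}) caps the lowest type's payoff at her complete-information value $1-\theta_1$, and that the optimum sits exactly where both are tight.

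For the upper bound I would look for multipliers $\lambda_1,\lambda_2\ge 0$ such that adding $\lambda_1$ times (\ref{1.2}) to $\lambda_2$ times the rearranged form of (\ref{1.3}) reproduces the objective on the left and $v_j^*$ on the right. Writing (\ref{1.3}) as $-(1-\gamma^*)\alpha(H)+\gamma^*\alpha(L)\le 0$, I need the coefficient-matching conditions $\lambda_1(1-\theta_1)-\lambda_2(1-\gamma^*)=1-\theta_j$ and $\lambda_1+\lambda_2\gamma^*=1$. Solving these gives $\lambda_2=\frac{\theta_j-\theta_1}{1-\gamma^*\theta_1}\ge 0$ and $\lambda_1=\frac{1-\gamma^*\theta_j}{1-\gamma^*\theta_1}>0$, both nonnegative precisely because $\theta_1\le\theta_j$ and $\gamma^*\theta_1<1$, and the resulting right-hand side is $\lambda_1(1-\theta_1)=\frac{(1-\gamma^*\theta_j)(1-\theta_1)}{1-\gamma^*\theta_1}=v_j^*$. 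Since every feasible $\alpha$ satisfies both (\ref{1.2}) and (\ref{1.3}) and the multipliers are nonnegative, this yields $(1-\theta_j)\alpha(H)+\alpha(L)\le v_j^*$ for all feasible $\alpha$; note this half of the argument uses neither $\alpha(H)+\alpha(L)\le 1$ nor the nonnegativity of $\alpha(H),\alpha(L)$.

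For achievability I would exhibit the candidate at which both (\ref{1.2}) and (\ref{1.3}) bind, namely $\alpha(H)=\frac{\gamma^*(1-\theta_1)}{1-\gamma^*\theta_1}$ and $\alpha(L)=\frac{(1-\gamma^*)(1-\theta_1)}{1-\gamma^*\theta_1}$, obtained by solving the two binding equations simultaneously. The verification is routine: (\ref{1.3}) holds with equality since $\alpha(H)/\alpha(L)=\gamma^*/(1-\gamma^*)$; (\ref{1.2}) holds with equality; and $\alpha(H)+\alpha(L)=\frac{1-\theta_1}{1-\gamma^*\theta_1}<1$, so $\alpha(N)>0$ and $\alpha$ is a genuine distribution. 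Substituting into the objective returns $v_j^*$, so combining the two steps gives that the value equals $v_j^*$.

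The argument is largely bookkeeping once the decomposition is found, so the only real content --- the ``obstacle'' --- is identifying which constraints bind and hence which multipliers to use. For $j=1$ the problem collapses, since the objective then coincides with the left-hand side of (\ref{1.2}) and the bound $v_1^*=1-\theta_1$ is immediate; the multiplier computation is genuinely needed only for $j\ge 2$, where the tension between wanting $\alpha(L)$ large and respecting the trust constraint (\ref{1.3}) is real. I would present the upper-bound step as an explicit weighted sum of the two displayed inequalities rather than invoking a linear-programming duality theorem, which keeps the proof self-contained.
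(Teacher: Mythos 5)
Your proof is correct and is essentially the paper's own argument: the paper's chained substitution (using (\ref{1.3}) to bound $\alpha(H)$ from below, rewriting the objective, then applying (\ref{1.2})) amounts to exactly your nonnegative combination with multipliers $\lambda_1=\frac{1-\gamma^*\theta_j}{1-\gamma^*\theta_1}$ and $\lambda_2=\frac{\theta_j-\theta_1}{1-\gamma^*\theta_1}$, and the attaining distribution you exhibit is the same one the paper uses. The only difference is presentational (explicit coefficient matching versus sequential inequalities), not substantive.
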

Next, I map the choice variable in the optimization problem $\alpha$  into the repeated incomplete information game. For given Bayes Nash equilibrium $\sigma \equiv \big(
(\sigma_{\theta})_{\theta \in \Theta},\sigma_2
\big)$, type $\theta_j$'s equilibrium payoff in the repeated game equals her expected payoff in a one-shot interaction under outcome distribution $\alpha^j \in \Delta \{N, H,L\}$, where
\begin{equation*}
 \alpha^j (y) \equiv  \mathbb{E}^{(\sigma_{\theta_j},\sigma_2)} \big[
    \sum_{t=0}^{\infty} (1-\delta)\delta^t \mathbf{1}\{y_t=y\}
    \big], \textrm{ for every } y \in \{N,H,L\}.
\end{equation*}
Replacing $\alpha$ with $\alpha^j$, the objective function (\ref{1.1}) is type $\theta_j$'s equilibrium payoff in the repeated game.
According to Lemma \ref{L4.1}, the necessity of constraints (\ref{1.2}) and (\ref{1.3}) implies that
type $\theta_j$'s equilibrium payoff \textit{cannot exceed} $v_j^*$.
I explain intuitively why constraints (\ref{1.2}) and (\ref{1.3}) are necessary for $\alpha^j$, with formal proofs in Appendix B.

Constraint (\ref{1.2}) requires that by adopting the equilibrium strategy of type $\theta_j$, type $\theta_1$ cannot receive payoff strictly higher than $1-\theta_1$, the latter is type $\theta_1$'s
highest equilibrium payoff when her type is common knowledge (Fudenberg, Kreps and Maskin 1990).
In Appendix B.1, I show by induction on the number of types that type $\theta_1$'s payoff cannot exceed $1-\theta_1$.
Intuitively, type $\theta_1$ is the most efficient type, she has no good candidate to imitate, so she cannot strictly benefit from incomplete information.
The distribution $\alpha^j$ needs to satisfy constraint (\ref{1.2}) since in equilibrium, type $\theta_1$ cannot receive strictly higher payoff by deviating to the strategy of type $\theta_j$.
Such a constraint is missing in commitment-type models since commitment types face no
incentive constraints.

Constraint (\ref{1.3}) arises from buyers' incentive constraints and their learning. This is because first, in all except for a bounded number of periods, buyers' predictions about the seller's actions are arbitrarily close to the latter's equilibrium actions
(according to the seller's true type).
Second, a buyer has no incentive to play $T$ at $h^t$ unless he expects $H$ to be played at $h^t$ with probability at least $\gamma^*$. The two parts together imply that for every $\varepsilon>0$, there exists $\underline{\delta} \in (0,1)$, such that in every equilibrium where $\delta>\underline{\delta}$,
      \begin{equation*}
\frac{\alpha^j(H)}{\alpha^j(L)}=        \frac{\mathbb{E}^{(\sigma_{\theta_j},\sigma_2)} \Big[
    \sum_{t=0}^{\infty} (1-\delta)\delta^t \mathbf{1}\{y_t=H\}
    \Big]}{\mathbb{E}^{(\sigma_{\theta_j},\sigma_2)} \Big[
    \sum_{t=0}^{\infty} (1-\delta)\delta^t \mathbf{1}\{y_t=L\}
    \Big]} \geq \frac{\gamma^*-\varepsilon}{1-\gamma^*+\varepsilon}, \textrm{ for every } j \in \{1,2,...,m\}.
      \end{equation*}
The necessity of
constraint (\ref{1.3}) is obtained by sending $\varepsilon$  to $0$.

After understanding the \textit{necessity} of these constraints, the challenging step is to establish their \textit{sufficiency}, i.e.,
there exist equilibria that approximately attain $v^*$.
This is somewhat puzzling since for every $j \geq 2$, type $\theta_j$ needs to extract information rent (i.e., playing $L$ for sure in periods where the buyer plays $T$)
for unbounded number of periods to receive discounted average payoff strictly greater than $1-\theta_j$. For this to be incentive compatible from the buyers' perspective, other types of the seller need to play $H$ with high enough probability.
As a result, a type that extracts information rent inevitably \textit{reveals information about her type}. This
undermines
her ability to extract information rent in the future since her informational advantage deteriorates every time she receives high stage-game payoff.
In section 4.1,
I explain  how to
construct equilibrium in which every high-cost type can extract information rent for unboundedly many times
while preserving her informational advantage.

Theorem \ref{Theorem3.1} has three implications.
First, aside from  the lowest-cost type $\theta_1$, every type can \textit{strictly} benefit from incomplete information, i.e., for every $j \geq 2$, type $\theta_j$'s highest equilibrium payoff in the repeated incomplete information game $v_j^*$ is strictly greater than her highest equilibrium payoff in the repeated complete information game $1-\theta_j$.
Second, the multiplier term converges to $1$ as $\theta_1$ vanishes to $0$, and every type's highest equilibrium payoff
converges to her
Stackelberg payoff.
This suggests that
although no type of seller is immune to reneging temptations, every type of patient seller can approximately attain
her optimal commitment payoff.
\begin{Corollary}\label{Cor3.3.1}
For every $\varepsilon>0$, there exist $\underline{\delta} \in (0,1)$ and $\overline{\theta}_1 >0$ such that when $\delta >\underline{\delta}$ and $\theta_1 < \overline{\theta}_1$, there exists a sequential equilibrium in which type $\theta_j$'s equilibrium payoff is no less than $v_j^{**}-\varepsilon$ for all $j \in \{1,...,m\}$.
\end{Corollary}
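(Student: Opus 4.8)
The plan is to derive the corollary by combining the existence statement of Theorem~\ref{Theorem3.1} (part 2) with an elementary computation showing that the incomplete-information multiplier converges to $1$ as $\theta_1 \to 0$. The only substantive input is Theorem~\ref{Theorem3.1}; everything else is algebra and a triangle inequality.

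First I would isolate the gap between each type's Stackelberg payoff $v_j^{**}=1-\gamma^*\theta_j$ and her highest equilibrium payoff $v_j^*$. Writing $M(\theta_1)\equiv \frac{1-\theta_1}{1-\gamma^*\theta_1}$ for the multiplier, so that $v_j^*=v_j^{**}M(\theta_1)$, a direct computation gives
\begin{equation*}
v_j^{**}-v_j^*=v_j^{**}\bigl(1-M(\theta_1)\bigr)=(1-\gamma^*\theta_j)\,\frac{(1-\gamma^*)\theta_1}{1-\gamma^*\theta_1}.
\end{equation*}
Since $1-\gamma^*\theta_j\le 1$ for every $j$, this gap is bounded above, uniformly in $j$, by $g(\theta_1)\equiv \frac{(1-\gamma^*)\theta_1}{1-\gamma^*\theta_1}$, which is strictly increasing in $\theta_1$ and tends to $0$ as $\theta_1\to 0$. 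Hence I can pick $\overline{\theta}_1>0$ small enough that $g(\theta_1)<\varepsilon/2$ for all $\theta_1<\overline{\theta}_1$, which simultaneously yields $v_j^*>v_j^{**}-\varepsilon/2$ for every type.

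Next I would invoke Theorem~\ref{Theorem3.1}(2) with tolerance $\varepsilon/2$: for $\delta$ above a threshold $\underline{\delta}$ there is a sequential equilibrium whose payoff vector lies within $\varepsilon/2$ of $v^*$, so type $\theta_j$'s equilibrium payoff satisfies $v_j^{\mathrm{eqm}}\ge v_j^*-\varepsilon/2$. Chaining the two bounds gives
\begin{equation*}
v_j^{\mathrm{eqm}}\ge v_j^*-\tfrac{\varepsilon}{2}>\bigl(v_j^{**}-\tfrac{\varepsilon}{2}\bigr)-\tfrac{\varepsilon}{2}=v_j^{**}-\varepsilon
\end{equation*}
for every $j\in\{1,\dots,m\}$, which is the desired conclusion. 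I would emphasize that the target tolerance relative to the \emph{attainable} payoff $v^*$ stays fixed at $\varepsilon/2$ even as $\theta_1$ shrinks, because $v^*$ itself rises toward $v^{**}$; this is what lets a single application of Theorem~\ref{Theorem3.1} suffice rather than a tolerance that must vanish with $\theta_1$.

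I expect the one point requiring care to be the order of the quantifiers, namely whether the patience threshold $\underline{\delta}$ can be taken uniform over $\theta_1\in(0,\overline{\theta}_1)$. The threshold delivered by Theorem~\ref{Theorem3.1} depends on the environment through $\theta_1$, and since the expected duration of the active-learning phase grows as $\theta_1\to 0$, a naive reading could force $\underline{\delta}\to 1$. The clean resolution is to read the statement with $\theta_1$ fixed first and $\delta$ chosen afterward, so that $\underline{\delta}$ may depend on the selected $\theta_1<\overline{\theta}_1$. If a genuinely uniform $\underline{\delta}$ is wanted, one must return to the construction behind Theorem~\ref{Theorem3.1} and verify that reaching within the \emph{fixed} tolerance $\varepsilon/2$ of $v^*$ (rather than within a vanishing tolerance) requires a learning phase whose length, and hence required patience, stays bounded away from the $\delta\to 1$ limit as $\theta_1<\overline{\theta}_1$ varies.
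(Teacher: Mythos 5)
Your proposal is correct and is essentially the paper's own argument: the paper offers no separate proof of this corollary, treating it as an immediate consequence of Theorem~\ref{Theorem3.1}(2) together with the observation that the multiplier $\frac{1-\theta_1}{1-\gamma^*\theta_1}$ converges to $1$ as $\theta_1 \to 0$, which is exactly your decomposition into the uniform gap bound $g(\theta_1)=\frac{(1-\gamma^*)\theta_1}{1-\gamma^*\theta_1}<\varepsilon/2$ plus $\varepsilon/2$-attainability of $v^*$. Your closing caveat about whether $\underline{\delta}$ can be taken uniform over $\theta_1<\overline{\theta}_1$ (rather than chosen after fixing $\theta_1$) flags a genuine quantifier subtlety that the paper's statement glosses over, so your treatment is, if anything, more careful than the paper's.
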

Third, the lowest-cost type seller \textit{fully reveals her private information}
for unboundedly many times in every equilibrium
that is approximately optimal for the seller.
Formally, let $\mathcal{H}^{(\sigma_{\theta},\sigma_2)}$ be the set of histories that occur with positive probability under $(\sigma_{\theta},\sigma_2)$. A subset of histories $\mathcal{H}'$ is called an \textit{independent set} if no pair of elements in $\mathcal{H}'$ can be ranked via the predecessor-successor relationship. The result is stated as Corollary 2.
\begin{Corollary}
For every $N \in \mathbb{N}$ and $v =(v_1,...,v_m)$ such that $v_j > 1-\theta_j$ for every $j \in \{2,...,m\}$, there exists $\underline{\delta} \in (0,1)$ such that in every BNE that attains $v$ when $\delta > \underline{\delta}$,
there exists an independent set $\mathcal{H}'$ with $|\mathcal{H}'|>N$ and $\mathcal{H}' \subset \mathcal{H}^{(\sigma_{\theta_1},\sigma_2)}$, such that player $2$'s belief attaches probability $1$ to type $\theta_1$ for every $h^t \in \mathcal{H}'$.
\end{Corollary}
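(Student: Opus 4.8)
The plan is to reduce the statement to producing a large \emph{antichain} (independent set) of full-revelation histories on type $\theta_1$'s tree. Observe first that the event ``the posterior equals the Dirac measure on $\theta_1$'' is preserved under passing to successors, so it suffices to exhibit more than $N$ pairwise prefix-incomparable histories, each reached with positive probability under $\theta_1$, at which every type in $\{\theta_2,\dots,\theta_m\}$ has already been excluded. I would manufacture these branch points out of the periods in which type $\theta_2$ extracts information rent, combining three ingredients: the payoff cap of Theorem \ref{Theorem3.1} applied to continuation games, a counting argument driven by discounting, and the single-crossing argument developed for Theorem \ref{Theorem3.2}. Throughout I only use the hypothesis for $j=2$, namely $v_2>1-\theta_2$.

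First I would bound how often type $\theta_2$ can profit. Fix a BNE attaining $v$ and let $\tau$ be the first period, under $(\sigma_{\theta_2},\sigma_2)$, at which the posterior drops $\theta_1$ from its support. By Theorem \ref{Theorem3.1} applied to the continuation game at $\tau$, whose belief has lowest type $\theta_2$ and whose associated $v_2^*$ therefore equals $1-\theta_2$, type $\theta_2$'s continuation payoff is at most $1-\theta_2+\varepsilon$ once $\theta_1$ is excluded. Decomposing her payoff before and after $\tau$ and writing $u_1(\theta_2,\cdot)=(1-\theta_2)+\theta_2\mathbf 1\{\cdot=L\}-(1-\theta_2)\mathbf 1\{\cdot=N\}$, I obtain that the discounted pre-$\tau$ frequency of the outcome $L$ is at least $\eta\equiv\big(v_2-(1-\theta_2)-\varepsilon\big)/\theta_2>0$. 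Since each period contributes at most $1-\delta$ to a discounted frequency, the \emph{expected} number of pre-$\tau$ periods in which $\theta_2$ is trusted and plays $L$ is at least $\eta/(1-\delta)$, which exceeds $N+2$ once $\delta$ is close enough to $1$. Hence there is a positive-probability path of $(\sigma_{\theta_2},\sigma_2)$ carrying histories $g_0\prec g_1\prec\cdots\prec g_{N+1}$, at each of which $\theta_2$ is trusted, plays $L$, and $\theta_1$ is still in the support.

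Next I would turn each $g_k$ with $k\le N$ into a full-revelation branch. Because the outcome $L$ is realized at $g_k$ while $\theta_1$ still lies in the support at the later history $g_{k+1}$, type $\theta_1$ must assign positive probability to $L$ at $g_k$ (otherwise the realized $L$ would itself exclude her). Invoking the single-crossing argument of Theorem \ref{Theorem3.2}, that the continuation reputational gain $V_\theta(g_k\cdot H)-V_\theta(g_k\cdot L)$ is nonincreasing in the cost of effort so that whenever $\theta_1$ is willing to play $L$ every higher-cost type strictly prefers $L$, I conclude that \emph{all} of $\theta_2,\dots,\theta_m$ play $L$ with probability one at $g_k$. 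At the same time, the buyer's willingness to play $T$ at $g_k$ forces the conditional probability of $H$ to be at least $\gamma^*$, which (since every higher type plays $L$) can only come from $\theta_1$; hence $\theta_1$ also plays $H$ with positive probability at $g_k$. Consequently $g_k\cdot H\in\mathcal H^{(\sigma_{\theta_1},\sigma_2)}$ and its posterior is the Dirac measure on $\theta_1$. Finally, for $k<k'\le N$ the history $g_{k'}$ extends $g_k$ through the outcome $L$ at $g_k$'s period, whereas $g_k\cdot H$ extends $g_k$ through $H$; the two disagree at that coordinate and are thus prefix-incomparable. The set $\mathcal H'\equiv\{g_k\cdot H:0\le k\le N\}$ is then an independent set of size $N+1>N$ with the required properties.

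The main obstacle is the single-crossing step, that is, establishing the monotonicity of the continuation reputational gain in the type so that ``$\theta_1$ mixes toward $L$'' forces \emph{every} higher type to play $L$ with probability one; this is precisely what lets a single realized $H$ collapse the posterior onto $\theta_1$ for arbitrary $m$, and it is exactly the repeated-game single-crossing lemma the paper develops. A secondary point requiring care is making the continuation cap from Theorem \ref{Theorem3.1} uniform across the endogenous stopping time $\tau$ and its (random) terminal belief, and tracking how the threshold $\underline\delta$ depends jointly on $N$ and $v$ through the counting bound $\eta/(1-\delta)>N+2$.
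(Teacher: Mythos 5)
Your outline---count type $\theta_2$'s rent-extraction periods before $\theta_1$ is excluded, then branch each such history into a full-revelation successor via $H$---follows the same skeleton as the paper's two-sentence argument, and two of your ingredients are sound: the counting step (though the continuation cap at $\tau$ should cite Proposition B.1, which explicitly covers posteriors without full support, rather than Theorem \ref{Theorem3.1}, whose hypotheses fail in the continuation game), and the prefix-incomparability of the family $\{g_k\cdot H\}$. The step that does all the real work, however, is wrong. You invoke a history-by-history single-crossing property: whenever $\theta_1$ is willing to play $L$ at $g_k$, every higher-cost type strictly prefers $L$ at $g_k$, hence plays $L$ with probability one, hence observing $H$ collapses the posterior onto $\theta_1$. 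No such lemma exists in the paper, and it is false. The single-crossing used in the proof of Theorem \ref{Theorem3.2} compares two \emph{repeated-game strategies} (``$H$ at every on-path history'' versus ``$L$ at every on-path history'') under the hypothesis that some type is indifferent at \emph{every} on-path history; it cannot be localized to one history, because the continuation gains $V_\theta(g_k\cdot H)-V_\theta(g_k\cdot L)$ are endogenous, type-dependent objects on which equilibrium imposes no monotonicity in $\theta$.

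The paper's own construction is a counterexample to your lemma and to what you deduce from it. At Class 1 histories (including $h^0$), the continuation values (\ref{4.17})--(\ref{4.18}) make \emph{every} type exactly indifferent between $H$ and $L$ whenever $\eta(h^t,H)<1$, and all types mix; outcome $L$ occurs with positive probability and leaves $\theta_1$ in the support by (\ref{4.16}); yet the posterior after $H$ is given by (\ref{4.15}) and is strictly below $1$. These histories are precisely your $g_k$'s: $\theta_1$ is willing to play $L$ there, the higher-cost type does \emph{not} strictly prefer $L$, and $g_k\cdot H$ is not a revelation history. Since that equilibrium attains $v(\gamma)$ with $v_j(\gamma)>1-\theta_j$ for all $j\geq 2$, it lies in the class your proof quantifies over, so your argument proves a false statement about it. The underlying issue is structural: full revelation can occur only at histories where every type other than $\theta_1$ plays $L$ with probability one while player $2$ trusts, and by the merging argument of Appendix B.2 such periods---where player $2$'s prediction is badly wrong about the true type's action---are bounded in expected number uniformly in $\delta$. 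They therefore cannot coincide with the order-$\eta/(1-\delta)$ rent periods you count; they are rare branch points off the typical rent-extraction path. Producing more than $N$ incomparable such branch points requires an argument about where the high types' indifference must break (as in the paper's Class 1 versus Class 2 distinction), not a per-period single-crossing, so the gap is not patchable within your scheme.
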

This is because for each high-cost type, there exists a pure strategy in the support of her equilibrium strategy according to which her stage-game payoff exceeds $1-\theta$ only at histories where her equilibrium strategy prescribes $L$ and player $2$ plays $T$. Doing so requires the lowest-cost type to play $H$ with positive probability, after which she fully reveals her private information.

\subsection{Patient Player's Behavior in High-Payoff Equilibria}
I focus on settings in which the seller has \textit{at least two types}, i.e., $m \geq 2$. I derive properties of the patient seller's
\textit{on-path behaviors}
that apply to all equilibria that approximately attain her highest equilibrium payoff $v^* \in \mathbb{R}^m$.

First, I show that no matter how low the cost of playing $H$ is,
no type of patient seller has completely mixed equilibrium best replies, and moreover, every type's equilibrium strategy exhibits nontrivial history dependence.

For every $\sigma \equiv (\{\sigma_{\theta}\}_{\theta \in \Theta},\sigma_2)$, I say
$\sigma_{\theta}$ is \textit{stationary}
(with respect to $\sigma$)
if
it prescribes the same action at every history that
occurs with positive probability under $(\sigma_{\theta},\sigma_2)$; I say
$\sigma_{\theta}$ is \textit{completely mixed} (with respect to $\sigma$) if
it prescribes a nontrivially mixed action at every history that
occurs with positive probability under $(\sigma_{\theta},\sigma_2)$.
\begin{Theorem}\label{Theorem3.2}
When $m \geq 2$, for every small enough $\varepsilon >0$, there exists $\underline{\delta} \in (0,1)$, such that when $\delta > \underline{\delta}$, no type of player $1$
uses stationary strategies or completely mixed strategies in any BNE that attains payoff within $\varepsilon$ of $v^*$.
Moreover, no type has a completely mixed best reply in any such BNE.\footnote{Neither $m \geq 2$ nor the seller attains payoff approximately $v^*$ is redundant for the conclusion of Theorem 2. Counterexamples are available upon request.}
\end{Theorem}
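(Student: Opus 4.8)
The plan is to argue by contradiction: fix a BNE $\sigma=((\sigma_\theta)_{\theta},\sigma_2)$ whose payoff is within $\varepsilon$ of $v^*$ and suppose some type violates one of the conclusions. The engine is a single-crossing lemma for the \emph{betrayal incentive}. For a type $\theta$ and an on-path history $h$ at which player $2$ trusts, write $\Delta_\theta(h)\equiv \delta\big(V_\theta(h,H)-V_\theta(h,L)\big)-(1-\delta)\theta$, where $V_\theta(h,y)$ is type $\theta$'s optimal continuation value against $\sigma_2$ after outcome $y$; type $\theta$ weakly prefers $H$ iff $\Delta_\theta(h)\ge 0$, strictly prefers $L$ iff $\Delta_\theta(h)<0$, and can mix only if $\Delta_\theta(h)=0$. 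The lemma I would prove is that $\Delta_\theta(h)$ is strictly decreasing in $\theta$ at every such $h$. I would prove the strongest claim (no completely mixed best reply), which subsumes the completely-mixed-strategy claim since any equilibrium strategy is a best reply.

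I would first record two workhorse lemmas. Lemma A: if $\theta_1$ plays $H$ at every trusted on-path history, then $\theta_2$'s payoff is at most $1-\theta_2$. Since $\theta_1$ never produces outcome $L$, the first time $\theta_2$ plays $L$ she separates from $\theta_1$; in the continuation the posterior is supported on $\{\theta_2,\dots,\theta_m\}$, so by Theorem \ref{Theorem3.1}(1) applied there (with $\theta_2$ now the lowest type) her continuation payoff is at most $1-\theta_2$. Splitting any realized path at its first $L$, the flows before it are at most $1-\theta_2$ and the continuation after it is at most $1-\theta_2$, so $\theta_2$'s payoff is at most $1-\theta_2<v_2^*$. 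Lemma B: if a type plays $L$ at every trusted on-path history, then on her all-$L$ path player $2$'s posterior weight on her grows by a factor at least $1/(1-\gamma^*)$ in each trusted period, because trust requires predicting $H$ with probability at least $\gamma^*$, forcing $\Pr(L\mid h_t)\le 1-\gamma^*$; hence she is trusted at most $\bar t$ times, with $\bar t$ depending only on $\gamma^*$ and her prior weight (not on $\delta$), so her payoff is at most $\bar t(1-\delta)\to 0<v_\theta^*$.

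I then assemble the contradictions. Suppose type $\theta_k$ has a completely mixed best reply, so $\Delta_{\theta_k}(h)=0$ at every history on its path. If $k\ge 2$, single crossing gives $\Delta_{\theta_1}(h)>0$ at every such $h$; chaining along the all-$H$ path forces $\theta_1$ to play $H$ at every trusted on-path history, and Lemma A contradicts $\theta_2$'s payoff being near $v_2^*>1-\theta_2$. If $k=1$, single crossing gives $\Delta_{\theta_m}(h)<0$ along the all-$L$ path, forcing $\theta_m$ to play $L$ at every trusted on-path history, and Lemma B contradicts $\theta_m$'s payoff being near $v_m^*>0$. For the stationary claim: stationary pure $L$ is handled by Lemma B; stationary pure $H$ of a type $\theta_j$ with $j\ge 2$ yields payoff at most $1-\theta_j<v_j^*$ directly, and of $\theta_1$ by Lemma A; stationary nontrivially mixed is completely mixed and already excluded. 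Choosing $\varepsilon$ below the fixed gaps $\min_j\{v_j^*-(1-\theta_j)\}$ and $\min_j v_j^*$, and $\underline\delta$ so that $\bar t(1-\delta)$ is small, makes every contradiction bite uniformly.

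The main obstacle is the single-crossing lemma itself, since $V_\theta(h,H),V_\theta(h,L)$ are endogenous and type-dependent. I would prove it by imitation: for $\theta'<\theta$, type $\theta'$ can copy $\theta$'s continuation after $H$ and type $\theta$ can copy $\theta'$'s continuation after $L$, which bounds the cross-type differences of the two continuation values by $(\theta-\theta')$ times the high-effort frequencies $\phi_H$ (of $\theta$'s post-$H$ play) and $\psi_H$ (of $\theta'$'s post-$L$ play). These bounds reduce the monotonicity of $\Delta$ to $(1-\delta)+\delta(\phi_H-\psi_H)>0$, i.e.\ essentially $\phi_H\ge\psi_H$: the high-effort frequency generated after an $H$ is at least that generated after an $L$. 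This is the crux, and I would establish it from the monotone structure of the game, namely that outcome $H$ shifts player $2$'s posterior toward low-cost types while $L$ shifts it toward high-cost types, together with player $2$'s threshold ($\gamma^*$) best reply, so the post-$L$ continuation sustains weakly less trust and hence weakly less high effort. Making this comparison rigorous at arbitrary on-path histories, rather than only in the limit $\delta\to 1$, is where the real work lies.
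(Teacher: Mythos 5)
Your proposal follows the same architecture as the paper's proof: a contradiction from a completely mixed best reply, a type-monotonicity (single-crossing) step forcing the extreme types onto pure stationary play, and then two contradictions --- an all-$L$ type earns a vanishing payoff (your Lemma B, which is a clean, self-contained Bayesian counting version of the paper's appeal to Fudenberg and Levine (1989)), and lower types playing all-$H$ force separation after the first $L$, capping the deviator's continuation payoff near $1-\theta_j$ via the lowest-type-in-support bound (your Lemma A, which is exactly the paper's argument via its Proposition B.1). The stationary claim is dispatched identically. One harmless slip in Lemma A: the period of the first $L$ itself pays $1$, so the bound is $(1-\delta)+\delta(1-\theta_2)$ rather than $1-\theta_2$; this still contradicts $v_2^*-\varepsilon$ for $\delta$ large.

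The genuine gap is the single-crossing lemma, and you have correctly located but not closed it. The paper does not prove this step internally either --- it imports strategy monotonicity from Liu and Pei (2020) and Pei (2019) as a known result for monotone-supermodular (repeated) signaling games. Your substitute reduces monotonicity of $\Delta_\theta(h)$ to the inequality $\phi_H\ge\psi_H$, and then argues it from ``$H$ shifts the posterior toward low-cost types while $L$ shifts it toward high-cost types, so the post-$L$ continuation sustains less trust.'' This is circular: the direction in which outcomes shift the posterior is determined by the type-monotonicity of the equilibrium strategies, which is precisely what the lemma is supposed to deliver. And even granting monotone posteriors, player $2$'s threshold rule does not order trust across the two continuation games: $(h,H)$ and $(h,L)$ are different public histories whose continuation play is sustained by different equilibrium promises, and $\phi_H$, $\psi_H$ are frequencies under two \emph{different types'} optimal play after these \emph{different} histories. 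Nothing in an arbitrary BNE prevents more trust after $L$ than after $H$. Your own closing sentence concedes the step is unproven; as submitted, the crux of the proof is missing.

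The gap is fixable with the paper's own ingredients, avoiding history-by-history single crossing altogether. First, if type $\theta_k$ mixes at every on-path history, then both actions are optimal continuations at every reachable history, so \emph{every} on-path-measurable strategy is a best reply for $\theta_k$ and $U_{\theta_k}(\sigma_1)$ is constant across all $\sigma_1$. For $k=1$: linearity gives $U_{\theta}(\sigma_1)=U_{\theta_1}(\sigma_1)-(\theta-\theta_1)\alpha_H(\sigma_1)$ for $\theta>\theta_1$, so every best reply of $\theta_m$ must minimize the $H$-frequency $\alpha_H$, i.e.\ attain $\alpha_H=0$; hence $\sigma_{\theta_m}$ plays $L$ at every trusted on-path history and your Lemma B yields the contradiction. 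For $k\ge 2$: all-$H$ is a best reply for $\theta_k$, so $\alpha_H(\text{all-}H)(1-\theta_k)=U^{eq}_{\theta_k}\ge v_k^*-\varepsilon$; meanwhile type $\theta_1$ imitating all-$H$ earns $\alpha_H(\text{all-}H)(1-\theta_1)=U_{\theta_k}(\text{all-}H)+(\theta_k-\theta_1)\alpha_H(\text{all-}H)\le 1-\theta_1$ by Theorem~\ref{Theorem3.1}(1), which rearranges to $\alpha_H(\text{all-}H)\le \gamma^*\frac{1-\theta_1}{1-\gamma^*\theta_1}+\frac{\varepsilon}{\theta_k-\theta_1}$; substituting back gives $U^{eq}_{\theta_k}\le(1-\theta_k)\gamma^*\frac{1-\theta_1}{1-\gamma^*\theta_1}+O(\varepsilon)$, which falls short of $v_k^*$ by the fixed amount $(1-\gamma^*)\frac{1-\theta_1}{1-\gamma^*\theta_1}>0$ --- a contradiction for $\varepsilon$ small. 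With this replacement your Lemmas A and B and the stationary bookkeeping complete the proof.
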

As will become clear later in the proof, the conclusion of
Theorem \ref{Theorem3.2} applies more broadly. For example,
it extends to simultaneous-move stage games, as well as to a type whose cost of exerting high effort equals $0$, i.e., $\theta_1=0$.
It suggests that in every seller-optimal equilibrium, every type has \textit{strict incentives} at some on-path histories. This is the case \textit{even when} she is indifferent between exerting high and low effort in the one-shot game.

The conclusion of Theorem \ref{Theorem3.2} contrasts to the \textit{Stackelberg commitment types} in Fudenberg and Levine (1992) and Gossner (2011), who mechanically play the same mixed action in every period. To further strengthen the motivation for excluding such types in my model,
I allow
player $1$ to have \textit{arbitrary stage-game payoff functions}
in Appendix C.2.
I show that
if a type plays a nontrivially mixed action at every on-path history, then she must be
\textit{indifferent across all outcomes} in the one-shot game (Proposition \ref{PropC.1}).
In the buyer-seller application, it translates into a type of seller who
faces \textit{zero cost} of supplying high quality,
and receives \textit{zero benefit} from buyers' purchases. Therefore, the stationary Stackelberg behavior can arise in equilibrium \textit{only} under a knife-edge stage-game payoff function.

\begin{proof}[Proof of Theorem 2:]
Suppose toward a contradiction that type $\theta_j$'s best reply is to mix at every on-path history.
Then both playing $L$ at every on-path history and playing $H$ at every on-path history
are her best replies against $\sigma_2$.
Since types with lower costs enjoy comparative advantages in playing $H$ and vice versa,
every type with cost higher than $\theta_j$ plays $L$ with probability $1$ at every on-path history,
and every type with cost lower than $\theta_j$ plays $H$ with probability $1$ at every on-path history.\footnote{If we order the states and actions according to $T \succ N$, $H \succ L$ and $\theta_1 \succ \theta_2 \succ ... \succ \theta_m$, then the stage-game payoff satisfies a monotone-supermodularity condition in Liu and Pei (2020). This is sufficient to guarantee the monotonicity of the sender's strategy with respect to her type in one-shot signalling games.
I use an implication of their result on repeated signalling games (Pei 2019).}

In what follows, I show by contradiction that none of these
pure stationary strategies are compatible with the requirement that
type $\theta_j$'s payoff is strictly above $1-\theta_j$ for every $j \geq 2$.

First, suppose there exists a type $\theta_k$ that plays $L$ with probability $1$ at every on-path history. According to Fudenberg and Levine (1989), if the seller is of type $\theta_k$, then the buyers
will eventually believe that $L$ will be played with probability greater than $1-\gamma^*$ in all future periods, after which they will have a strict incentive to play $N$. This implies that type $\theta_k$ seller's discounted average payoff is close to $0$ when $\delta$ is close to $1$.

Next, suppose $j \geq 2$ and types $\theta_1$ to $\theta_{j-1}$ play
$H$ with probability $1$ at every on-path history. Then after type $\theta_j$ plays $L$ for one period, she becomes the lowest-cost type in the support of
buyers' posterior belief. I show in Proposition B.1 of Appendix B that type $\theta_j$'s payoff in the continuation game is at most $1-\theta_j$, and therefore, her discounted average payoff cannot exceed $(1-\delta)+\delta (1-\theta_j)$.
The latter is strictly lower than $v_j^*$ when $\delta \rightarrow 1$ given that $v_j^*>1-\theta_j$ for every $j \geq 2$.

The next step is to rule out stationary strategies. Stationary (non-trivially) mixed strategies have been ruled out since I have shown that no type of seller plays completely mixed strategies in equilibrium. Stationary pure strategies have been ruled out by the above proof given that it  has established that neither playing $H$ for sure at every on-path history and playing $L$ for sure at every on-path history can be equilibrium strategies
of any type of the patient seller in equilibria that approximately attains $v^*$.
\end{proof}

Focusing on seller-optimal equilibria, my next result derives lower and upper bounds on the seller's action frequencies that apply not only to all of her equilibrium strategies, but also to all of her \textit{equilibrium best replies}.
\begin{Theorem}\label{Theorem3.3}
For every small enough $\varepsilon>0$, there exists $\underline{\delta} \in (0,1)$, such that
for every $\delta > \underline{\delta}$,
in every BNE $\sigma \equiv \big((\sigma_{\theta})_{\theta \in \Theta},\sigma_2\big)$
that attains payoff within $\varepsilon$ of $v^*$,
\begin{enumerate}
  \item For every $\theta \neq \theta_m$ and for every best reply $\widehat{\sigma}_{\theta}$ of type $\theta$'s against $\sigma_2$:
\begin{equation}\label{3.3.4}
    \frac{\mathbb{E}^{(\widehat{\sigma}_{\theta},\sigma_2)} \Big[\sum_{t=0}^{\infty} (1-\delta)\delta^t \mathbf{1}\{y_t=H\}\Big] }{ \mathbb{E}^{(\widehat{\sigma}_{\theta},\sigma_2)} \Big[\sum_{t=0}^{\infty} (1-\delta)\delta^t \mathbf{1}\{y_t=L\}\Big]}
    \geq \frac{\gamma^*-\varepsilon}{1-(\gamma^*-\varepsilon)}.
\end{equation}
  \item For every $\theta \neq \theta_1$ and for every best reply $\widehat{\sigma}_{\theta}$ of type $\theta$'s against $\sigma_2$:
\begin{equation}\label{3.3.5}
    \frac{\mathbb{E}^{(\widehat{\sigma}_{\theta},\sigma_2)} \Big[\sum_{t=0}^{\infty} (1-\delta)\delta^t \mathbf{1}\{y_t=H\}\Big] }{ \mathbb{E}^{(\widehat{\sigma}_{\theta},\sigma_2)} \Big[\sum_{t=0}^{\infty} (1-\delta)\delta^t \mathbf{1}\{y_t=L\}\Big]}
    \leq \frac{\gamma^*+\varepsilon}{1-(\gamma^*+\varepsilon)}.
\end{equation}
\end{enumerate}
\end{Theorem}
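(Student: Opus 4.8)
The plan is to turn both bounds into \emph{cross-type revealed-preference} inequalities, using the exact payoff values $v^*$ from Theorem \ref{Theorem3.1} as the only quantitative input, together with a single-crossing wedge of the same flavor as in the proof of Theorem \ref{Theorem3.2}. Throughout I fix the buyers' equilibrium strategy $\sigma_2$ and, for any seller strategy $\sigma$, write $p_H(\sigma)\equiv\mathbb{E}^{(\sigma,\sigma_2)}[\sum_{t}(1-\delta)\delta^t\mathbf{1}\{y_t=H\}]$ and $p_L(\sigma)$ analogously, so that the ratio in \eqref{3.3.4}--\eqref{3.3.5} is exactly $p_H(\widehat\sigma_\theta)/p_L(\widehat\sigma_\theta)$. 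Because $u_1(\theta,H)=1-\theta$, $u_1(\theta,L)=1$ and $u_1(\theta,N)=0$, type $\theta$'s payoff from any strategy $\sigma$ against $\sigma_2$ is $U_\theta(\sigma)=(1-\theta)p_H(\sigma)+p_L(\sigma)$, and since the induced outcome frequencies depend on $(\sigma,\sigma_2)$ but not on the type,
\begin{equation*}
U_\theta(\sigma)-U_{\theta'}(\sigma)=(\theta'-\theta)\,p_H(\sigma)\qquad\text{for all types }\theta,\theta'.
\end{equation*}
This is the single-crossing wedge driving everything: a higher frequency of $H$ is worth relatively more to a lower-cost type.

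The first fact I would record is that, against a fixed $\sigma_2$, every best reply of a given type yields that type's equilibrium payoff (the equilibrium strategy is itself a best reply), and by hypothesis this payoff lies within $\varepsilon$ of the corresponding coordinate of $v^*$. Hence for the best reply $\widehat\sigma_\theta$ in the statement, $U_\theta(\widehat\sigma_\theta)\in[v^*_\theta-\varepsilon,\,v^*_\theta+\varepsilon]$. The second fact is the incentive constraint of a \emph{different} type: in any BNE no type gains by mimicking $\widehat\sigma_\theta$, so $U_{\theta'}(\widehat\sigma_\theta)\le(\text{type }\theta'\text{'s equilibrium payoff})\le v^*_{\theta'}+\varepsilon$ for every type $\theta'$, again by the hypothesis.

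For part 1, take $\theta=\theta_i\neq\theta_m$ and compare with a higher type $\theta_k$ ($k>i$), which exists. Subtracting the mimicking constraint $(1-\theta_k)p_H+p_L\le v^*_k+\varepsilon$ from the payoff bound $(1-\theta_i)p_H+p_L\ge v^*_i-\varepsilon$ (writing $p_H\equiv p_H(\widehat\sigma_{\theta_i})$, $p_L\equiv p_L(\widehat\sigma_{\theta_i})$) gives $(\theta_k-\theta_i)p_H\ge(v^*_i-v^*_k)-2\varepsilon$. The crucial computation is that, since $v^*_j=\tfrac{1-\theta_1}{1-\gamma^*\theta_1}(1-\gamma^*\theta_j)$ is \emph{affine} in $\theta_j$, the difference quotient is independent of the comparison type:
\begin{equation*}
\frac{v^*_i-v^*_k}{\theta_k-\theta_i}=\gamma^*\frac{1-\theta_1}{1-\gamma^*\theta_1},
\end{equation*}
which is exactly the $H$-coordinate $\alpha(H)$ of the maximizer in Lemma \ref{L4.1}. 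Thus $p_H\ge\alpha(H)-O(\varepsilon)$; feeding this into the payoff upper bound $p_L\le v^*_i+\varepsilon-(1-\theta_i)p_H$ gives $p_L\le\alpha(L)+O(\varepsilon)$ with $\alpha(L)=(1-\gamma^*)\tfrac{1-\theta_1}{1-\gamma^*\theta_1}$. Since $\alpha(H)/\alpha(L)=\gamma^*/(1-\gamma^*)$, the ratio $p_H/p_L$ is at least $\gamma^*/(1-\gamma^*)-O(\varepsilon)$, which dominates $\tfrac{\gamma^*-\varepsilon}{1-(\gamma^*-\varepsilon)}$ once $\varepsilon$ is small. Part 2 is the mirror image: for $\theta=\theta_i\neq\theta_1$ I compare with a lower type $\theta_k$ ($k<i$), whose mimicking constraint now bounds $p_H$ from \emph{above} by $\alpha(H)+O(\varepsilon)$, hence $p_L$ from below, delivering the upper bound on the ratio.

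The conceptual content is that the cross-type incentive constraints, anchored to the affine value function $v^*(\cdot)$ whose slope is precisely $\alpha(H)$, pin the frequencies of \emph{every} best reply near the Stackelberg split $(\alpha(H),\alpha(L))$; this is why the bounds hold for all equilibrium best replies rather than only the equilibrium strategy, and why the excluded types are exactly $\theta_m$ (no higher type to invoke) and $\theta_1$ (no lower type). The step I expect to be most delicate is the final $\varepsilon$-accounting: the $O(\varepsilon)$ error terms carry constants of order $1/(\theta_k-\theta_i)$ that blow up when the comparison types are close, so to land the stated bound with the \emph{same} $\varepsilon$ as in the hypothesis one should choose the comparison type to maximize the cost gap (e.g.\ $\theta_k=\theta_m$ in part 1 and $\theta_k=\theta_1$ in part 2), and if necessary tighten the hypothesis tolerance relative to the conclusion tolerance. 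I would also verify the boundary cases $p_L=0$ and $p_H=0$ separately, where the relevant ratio is infinite and the claim is either vacuous or excluded by the established strict positivity $p_L\ge\alpha(L)-O(\varepsilon)>0$ (respectively $p_H\ge\alpha(H)-O(\varepsilon)>0$).
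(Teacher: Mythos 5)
Your proposal is correct, and for part 2 it is essentially the paper's own argument: Appendix C.1 compares type $\theta\neq\theta_1$ with type $\theta_1$, pairing the mimicking bound $p(1-\gamma\theta_1)\leq 1-\theta_1$ with the payoff requirement $p(1-\gamma\theta)\geq v_\theta^*-\varepsilon$, which is your part 2 with the comparison type fixed at $\theta_1$. For part 1, however, your route is genuinely different. The paper never uses the comparison type's payoff \emph{upper} bound or the closed form of $v^*$: it first shows by the two-type revealed-preference (single-crossing) inequality that if some type $\theta_i\neq\theta_m$ had a best reply with $H$:$L$ ratio below the cutoff, then every pure best reply of the highest-cost type $\theta_m$ --- hence her equilibrium strategy $\sigma_{\theta_m}$ --- would inherit a ratio at least as low, and it then derives a contradiction from the Fudenberg--Levine/Gossner merging argument applied to $\sigma_{\theta_m}$ (buyers eventually predict her play and trust only when $H$ has probability at least $\gamma^*$), using $v_m^*-\varepsilon>1-\theta_m$ only to bound her trust frequency away from zero. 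You avoid the learning machinery entirely: you use the hypothesis twice (type $\theta_i$'s payoff from below, the higher comparison type's from above) together with the observation that $j\mapsto v_j^*$ in (\ref{3.3.1}) is affine in $\theta_j$ with slope exactly $-\gamma^*\tfrac{1-\theta_1}{1-\gamma^*\theta_1}=-\alpha(H)$, so the cross-type difference quotient pins $p_H$ near $\alpha(H)$ and the own-payoff identity then pins $p_L$ near $\alpha(L)$. What your approach buys is a more elementary and unified treatment --- both parts become the same algebra, and you obtain the level bounds of Corollary 3 on $p_H$ and $p_L$ directly rather than only the ratio; what it costs is portability, since it leans on the exact affine formula for $v^*$ and so would not extend to the general games of Appendix \ref{secA}, where the paper's learning-based argument still runs. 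On the $\varepsilon$-accounting you flag: the concern is real --- your constants degrade like $1/(\theta_k-\theta_i)$, so the conclusion tolerance is a primitive-dependent multiple of the hypothesis tolerance --- but the paper's own part 2 has exactly the same feature (it fixes $\gamma>\gamma^*$ first and then shrinks $\varepsilon$), while its part 1 escapes it because the learning slack is a separate parameter absorbed into $\underline{\delta}$; your proposed fix (extreme comparison types plus decoupling the two tolerances) is the right one and is in the spirit of what the paper itself does.
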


The proof is in Appendix \ref{secC.1}. The two bounds in Theorem \ref{Theorem3.3} pin down the action frequencies for all types of sellers except for the lowest-cost and the highest-cost types. Since
type $\theta_j$'s payoff is approximately $v_j^*$, it also
pins down the discounted average frequency of every stage-game outcome under every equilibrium best reply:
\begin{Corollary}
For every small enough $\varepsilon>0$, there exists $\underline{\delta} \in (0,1)$ such that
for every $\delta > \underline{\delta}$,
in every BNE $\sigma \equiv \big((\sigma_{\theta})_{\theta \in \Theta},\sigma_2\big)$
that attains payoff within $\varepsilon$ of $v^*$, for every $\widehat{\sigma}_{\theta}$ that is type $\theta \in \{\theta_2,...,\theta_{m-1}\}$'s best reply against $\sigma_2$, we have:
\begin{equation}\label{3.3.6}
   \mathbb{E}^{(\widehat{\sigma}_{\theta},\sigma_2)} \big[\sum_{t=0}^{\infty} (1-\delta)\delta^t \mathbf{1}\{y_t=H\}\big]
   \in \Big( \gamma^* \frac{1-\theta_1}{1-\gamma^* \theta_1} -\varepsilon, \textrm{ }\gamma^* \frac{1-\theta_1}{1-\gamma^* \theta_1} +\varepsilon \Big),
\end{equation}
\begin{equation}\label{3.3.7}
    \mathbb{E}^{(\widehat{\sigma}_{\theta},\sigma_2)} \big[\sum_{t=0}^{\infty} (1-\delta)\delta^t \mathbf{1}\{y_t=L\}\big]
    \in \Big( (1-\gamma^*) \frac{1-\theta_1}{1-\gamma^* \theta_1} - \varepsilon, \textrm{ } (1-\gamma^*) \frac{1-\theta_1}{1-\gamma^* \theta_1} +\varepsilon \Big),
\end{equation}
and
\begin{equation}\label{3.3.8}
\mathbb{E}^{(\widehat{\sigma}_{\theta},\sigma_2)} \big[\sum_{t=0}^{\infty} (1-\delta)\delta^t \mathbf{1}\{y_t=N\}\big]
\in \Big( \frac{(1-\gamma^*) \theta_1}{1-\gamma^* \theta_1}- \varepsilon,  \textrm{ } \frac{(1-\gamma^*) \theta_1}{1-\gamma^* \theta_1} +\varepsilon \Big).
\end{equation}
\end{Corollary}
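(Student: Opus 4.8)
The plan is to obtain this Corollary as a purely algebraic consequence of Theorem \ref{Theorem3.3} together with the payoff hypothesis, with no further equilibrium analysis required. Fix $j$ with $2 \le j \le m-1$ and a best reply $\widehat{\sigma}_{\theta}$ of type $\theta_j$ against $\sigma_2$. Write $\alpha(y) \equiv \mathbb{E}^{(\widehat{\sigma}_{\theta},\sigma_2)}[\sum_{t=0}^{\infty} (1-\delta)\delta^t \mathbf{1}\{y_t=y\}]$ for $y \in \{N,H,L\}$, exactly as in the paragraph preceding Lemma \ref{L4.1}; these are nonnegative and sum to one. Because $\theta_j$ is neither the lowest-cost nor the highest-cost type, \emph{both} parts of Theorem \ref{Theorem3.3} apply to $\widehat{\sigma}_{\theta}$, so the ratio $r \equiv \alpha(H)/\alpha(L)$ is trapped in the interval $[\frac{\gamma^*-\varepsilon}{1-(\gamma^*-\varepsilon)}, \frac{\gamma^*+\varepsilon}{1-(\gamma^*+\varepsilon)}]$.

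First I would record the payoff equation. Since $\widehat{\sigma}_{\theta}$ and the equilibrium strategy $\sigma_{\theta_j}$ are both best replies against the same $\sigma_2$, they earn type $\theta_j$ identical payoffs, and by hypothesis this common payoff lies within $\varepsilon$ of $v_j^*$. Evaluating type $\theta_j$'s stage payoffs ($1-\theta_j$ at $H$, $1$ at $L$, $0$ at $N$) against the outcome frequencies gives $p \equiv (1-\theta_j)\alpha(H)+\alpha(L) \in (v_j^*-\varepsilon, v_j^*+\varepsilon)$.

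The heart of the argument is then to invert these two relations. From $\alpha(H)=r\,\alpha(L)$ I would write $\alpha(L) = p/\big(1+(1-\theta_j)r\big)$ and $\alpha(H)=r\,\alpha(L)$. As $\varepsilon \to 0$ the bounds force $r \to \gamma^*/(1-\gamma^*)$ and $p \to v_j^*$; since $1+(1-\theta_j)\tfrac{\gamma^*}{1-\gamma^*} = \tfrac{1-\gamma^*\theta_j}{1-\gamma^*}$ and $v_j^* = (1-\gamma^*\theta_j)\tfrac{1-\theta_1}{1-\gamma^*\theta_1}$, continuity yields $\alpha(L) \to (1-\gamma^*)\tfrac{1-\theta_1}{1-\gamma^*\theta_1}$, $\alpha(H) \to \gamma^*\tfrac{1-\theta_1}{1-\gamma^*\theta_1}$, and hence $\alpha(N)=1-\alpha(H)-\alpha(L) \to \tfrac{(1-\gamma^*)\theta_1}{1-\gamma^*\theta_1}$, which are precisely the three interval centers in (\ref{3.3.6})--(\ref{3.3.8}). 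The conclusion follows by taking $\varepsilon$ small and $\underline{\delta}$ at least as large as the threshold supplied by Theorem \ref{Theorem3.3}, so that the values corresponding to the extreme points of the $r$- and $p$-intervals all fall inside the asserted open intervals.

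The one point requiring care, and the step I expect to be the main obstacle, is the error propagation: I must argue that the $O(\varepsilon)$ slack in $r$ and $p$ translates into an $O(\varepsilon)$ error in each frequency, uniformly in $j$. This reduces to showing that the map $(r,p)\mapsto(\alpha(H),\alpha(L))$ is Lipschitz on the relevant region, i.e. that the denominator $1+(1-\theta_j)r$ stays bounded away from zero. That is immediate here because $r$ ranges over a compact subinterval of $(0,\infty)$ and $p$ is bounded below by $v_j^*-\varepsilon>0$ for small $\varepsilon$. Ruling out the degenerate case $\alpha(L)=0$ is also automatic, since $\alpha(L)=0$ together with payoff near $v_j^*>0$ would force $\alpha(H)>0$ and hence an infinite ratio, contradicting the upper bound in Theorem \ref{Theorem3.3}; likewise $\alpha(H)=0$ is excluded by the lower bound. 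Thus $\alpha(H)$ and $\alpha(L)$ are both bounded away from zero, the inversion is well-behaved, and the error estimate goes through.
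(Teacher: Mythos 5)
Your route is exactly the paper's: the paper offers no separate proof of this corollary beyond the remark preceding it, namely that the two bounds of Theorem \ref{Theorem3.3}, combined with type $\theta_j$'s payoff being approximately $v_j^*$, pin down the outcome frequencies. Your inversion of the system $\alpha(H)=r\,\alpha(L)$, $(1-\theta_j)\alpha(H)+\alpha(L)=p$ is precisely the algebra behind that remark, your interval centers are correct, and the best-reply-payoff identity and the exclusion of $\alpha(L)=0$ are handled properly.

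There is, however, a genuine gap at the step you yourself flagged as the main obstacle, and it is not cured by ``taking $\varepsilon$ small.'' Lipschitz continuity of $(r,p)\mapsto(\alpha(H),\alpha(L))$ yields an error of order $C\varepsilon$, whereas the corollary asserts an error below the \emph{same} $\varepsilon$ that appears in its hypothesis and in the bounds of Theorem \ref{Theorem3.3}; since the slack and the tolerance both scale linearly in $\varepsilon$, shrinking $\varepsilon$ changes nothing, and everything turns on whether $C\le 1$. It is not in general: to first order, the smallest value of $\alpha(L)$ compatible with $p\ge v_j^*-\varepsilon$ and $r\le\frac{\gamma^*+\varepsilon}{1-\gamma^*-\varepsilon}$ is
\begin{equation*}
(1-\gamma^*)\frac{1-\theta_1}{1-\gamma^*\theta_1}
\;-\;\varepsilon\,\frac{(1-\gamma^*)+M(1-\theta_j)}{1-\gamma^*\theta_j}+O(\varepsilon^2),
\qquad M\equiv\frac{1-\theta_1}{1-\gamma^*\theta_1},
\end{equation*}
and the coefficient of $\varepsilon$ exceeds $1$ for many parameter values (e.g.\ $\theta_1=0.1$, $\theta_j=0.2$, $\gamma^*=1/2$ gives roughly $1.4$), so the image of the extreme point of your $(r,p)$-box lies \emph{outside} the asserted open interval for every $\varepsilon>0$. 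What your argument actually delivers is the two-tolerance statement: for every $\varepsilon>0$ there exist $\varepsilon'>0$ and $\underline{\delta}$ such that payoff within $\varepsilon'$ of $v^*$ implies frequencies within $\varepsilon$ of the centers. Either prove and cite that version, or bring in additional constraints (such as type $\theta_1$'s incentive constraint $(1-\theta_1)\alpha(H)+\alpha(L)\le 1-\theta_1$, which binds exactly at the centers and does tighten the upper deviations, though not the lower deviation of $\alpha(L)$) and verify that the constant drops below one. In fairness, the paper's own one-sentence justification glosses over exactly the same point, but a proof written out in full cannot.
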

Theorem \ref{Theorem3.3} and Corollary 3 differ from the following implication of Fudenberg and Levine (1992)'s result, that
according to every type of seller's \textit{equilibrium strategy},
the ratio between the discounted average frequency of outcome $H$ and the discounted average frequency of outcome $L$
is no less than $\gamma^*/(1-\gamma^*)$.

The reason is that
Fudenberg and Levine (1992)'s result only applies to the seller's \textit{equilibrium strategies}, while the bounds in Theorem 3 apply to
all of her \textit{equilibrium best replies}, which include but not limited to her equilibrium strategies.
As an illustrative example, the strategy of playing the Stackelberg action at every on-path history satisfies the requirement in Fudenberg and Levine (1992), but has been ruled out by Theorem \ref{Theorem3.3}. This is because for some pure strategies in the support of the seller's equilibrium strategy (such as playing $L$ at every history), the relative frequency of outcome $L$ exceeds (\ref{3.3.4}). For other pure strategies in its support (such as playing $H$ at every history), the relative frequency of outcome $L$ falls below (\ref{3.3.5}).

Conceptually,
the bounds that apply to \textit{all equilibrium best replies} have stronger testable implications compared to the ones that apply only to \textit{equilibrium strategies}. This distinction is economically important given that in many markets, researchers can only observe one or a few \textit{realized paths} of the game's outcomes instead of an entire distribution over outcome paths. The predictions of Theorem 3 and Corollary 3 apply to every pure-strategy best reply, and therefore, can be tested by observing a realized path of equilibrium play.
For example, Corollary 3 implies that the frequencies of outcome $H$ and outcome $L$ are strictly decreasing in $\theta_1$, and the frequency of outcome $N$ is strictly increasing in $\theta_1$. By observing the frequencies of outcomes along a realized path of play,
a researcher can identify whether there are more efficient types  under the buyers' belief, and if yes, he can infer the value of $\theta_1$ based on his estimation of $\gamma^*$ and the observed frequencies of different outcomes.

\section{Constructing High-Payoff Equilibria}
Let $v^N \equiv (0,0,...,0)$, $v^H \equiv (1-\theta_1,1-\theta_2,...,1-\theta_m)$, and $v^L \equiv (1,1,...,1)$, which
are the seller's payoffs from outcomes $N$, $H$, and $L$, respectively.
For every $\gamma \in [\gamma^*,1]$, let
\begin{equation}\label{4.12}
v(\gamma) \equiv   \frac{\theta_1 (1-\gamma)}{1-\gamma \theta_1} v^N
    + \frac{(1-\theta_1) \gamma}{1-\gamma \theta_1} v^H
    + \frac{(1-\theta_1) (1-\gamma)}{1-\gamma \theta_1} v^L \in \mathbb{R}^m.
\end{equation}
I depict $v(\gamma)$ in Figure 3.
One can verify that $v(\cdot)$ is a continuous function of $\gamma$ with $v(\gamma^*)=v^*$ and $v(1)=v^H$.
Statement  2 of Theorem 1 is implied by Proposition 4.1, which is shown in Appendix A:
\begin{figure}\label{Figure2}
\begin{center}
\begin{tikzpicture}[scale=0.34]
\draw [fill=yellow] (0,0)--(8,2)--(8,5.33333333333333333)--(0,0);
\draw [->] (0,0)--(12,0)node[right]{Payoff of Type $\theta_1$};
\draw [->] (0,0)--(0,12)node[above]{Payoff of Type $\theta_2$};
\draw (0,0)--(10,10)--(8,2)--(0,0);
\draw [dashed] (8,0)--(8,11);
\draw [dashed] (0,0)--(9,6);
\draw [dashed] (0,2)--(8,2);
\draw [ultra thick] (9.95,9.95)--(10.05,10.05)node[right]{$(1,1)$};
\draw [ultra thick] (9,5.9)--(9,6.1)node[right]{$(1-\gamma^* \theta_1,1-\gamma^* \theta_2)$};
\draw [blue, ultra thick] (7.9,4)--(8.1,4);
\draw [blue, dashed, ->] (8.1,4)--(10,4)node[right]{$v (\gamma)$};
\draw [ultra thick] (8,5.4)node[left]{$v^*$}--(8,5.2);
\draw [ultra thick] (7.9,2)--(8.1,2)node[right]{$(1-\theta_1,1-\theta_2)$};
\end{tikzpicture}
\caption{In an example with two types, the set of attainable payoffs in yellow and $v(\gamma)$ in blue.}
\end{center}
\end{figure}
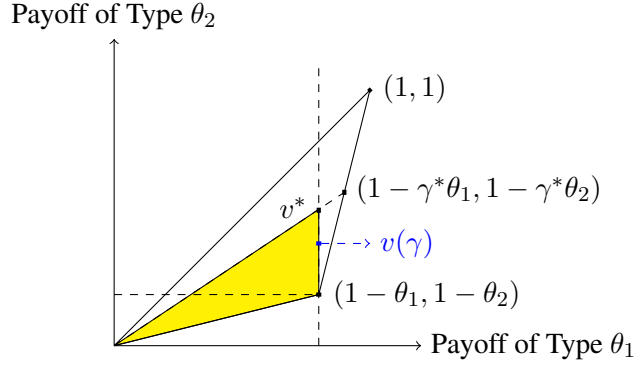
\begin{Proposition}\label{Prop3.4.1}
For every $\eta \in (0,1)$
and $\gamma \in (\gamma^*,1)$, there exists $\underline{\delta} \in (0,1)$, such that when $\delta > \underline{\delta}$ and the prior belief $\pi$ attaches probability  more than $\eta$ to type $\theta_1$, there exists equilibrium that attains $v(\gamma)$.
\end{Proposition}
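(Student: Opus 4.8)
The plan is to exhibit an explicit \emph{belief-based (Markov) equilibrium} and verify it directly. The first step is to reinterpret the target. Writing out the coefficients in (\ref{4.12}), one checks that $v(\gamma)$ is precisely the profile of type payoffs generated by the single stage-game outcome distribution
\[
\big(\alpha(N),\alpha(H),\alpha(L)\big)=\Big(\tfrac{\theta_1(1-\gamma)}{1-\gamma\theta_1},\ \tfrac{(1-\theta_1)\gamma}{1-\gamma\theta_1},\ \tfrac{(1-\theta_1)(1-\gamma)}{1-\gamma\theta_1}\Big),
\]
so that $v_j(\gamma)=(1-\theta_j)\alpha(H)+\alpha(L)$. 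Two features drive everything. First, the conditional probability of $H$ among trusting outcomes is $\alpha(H)/(\alpha(H)+\alpha(L))=\gamma>\gamma^*$, so whenever the buyers' forecast of $H$ is near its path value they \emph{strictly} prefer to play $T$; this slack $\gamma>\gamma^*$ is exactly the incentive room that is absent at $v^*$ itself. Second, $v_1(\gamma)=1-\theta_1$ for \emph{every} $\gamma$, i.e.\ type $\theta_1$ is to receive exactly her complete-information ceiling (Fudenberg, Kreps and Maskin 1990; the $j=1$ bound of Theorem \ref{Theorem3.1}), while every higher type is to receive $v_j(\gamma)>1-\theta_j$ and so must extract strictly positive information rent.

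I would then specify a two-phase profile on a state equal to the posterior $\mu_t$ on $\theta_1$, arranged so that $\mu_t$ is a deterministic function of the \emph{counts} $(n_H,n_L)$ of past high- and low-effort outcomes. In the \emph{active phase} (interior beliefs) buyers play $T$; type $\theta_1$ mixes at every history, and each higher type $\theta_j$ uses an $H$-probability weakly below type $\theta_1$'s — playing $L$ for sure in the ``milking'' region where $\mu_t$ is near one (so a single $H$ reveals $\theta_1$ and jumps $\mu_t$ up) and mixing to ``rebuild'' where $\mu_t$ is lower. The phase terminates at two absorbing boundaries: an upper boundary where beliefs concentrate on $\theta_1$, after which all types play $H$ forever under the off-path punishment ``revert to $N$ permanently'' — a complete-information equilibrium delivering $1-\theta_j$ and incentive compatible once $\delta>\theta_m$ — and a lower boundary where buyers play $N$ forever, delivering $0$. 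The monotone structure used in the proof of Theorem \ref{Theorem3.2} (Liu and Pei 2020) is what lets me prescribe this type-ordered behavior and later check it is simultaneously a best reply for every type.

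The analytical core is to solve for the continuation values and the transition (mixing) probabilities \emph{simultaneously}. Writing $W_j(\mu)$ for type $\theta_j$'s value, type $\theta_1$'s indifference at each active history reads $W_1(\mu^H)-W_1(\mu^L)=(1-\delta)\theta_1/\delta$ together with $W_1(\mu)=(1-\delta)(1-\theta_1)+\delta W_1(\mu^H)$, where $\mu^H,\mu^L$ are the Bayesian updates under the prescribed mixing; each higher type is either held indifferent in the rebuilding region (the analogous equation with $\theta_j$) or strictly prefers $L$ in the milking region. I would treat this as a boundary value problem for the pair (value functions, $H$-probabilities) with the absorbing-boundary values above, and show it admits a solution in which every $W_j$ lies in $[0,1]$, the $H$-probabilities are type-monotone and keep the buyers' conditional forecast of $H$ at least $\gamma^*$ at every active history, and the induced belief process is absorbed in finite expected time. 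The remaining freedom — the two boundary locations and the milking/rebuilding split — is then tuned so that, at the \emph{given} prior, the occupation measure of the absorbed chain under each true type reproduces $\alpha$ exactly, equivalently so that $W_1(\mathrm{prior})=1-\theta_1$ and $W_j(\mathrm{prior})=v_j(\gamma)$. Here $\pi(\theta_1)>\eta$ guarantees the prior sits strictly inside the active region, so buyers trust from period $0$ and a single threshold $\underline{\delta}$ works uniformly, while $\delta\to1$ makes the per-step indifference gaps $O(1-\delta)$ small enough to be absorbed by bounded values and lets the active phase run long enough to accumulate the required rent.

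I expect the main obstacle to be exactly this simultaneous satisfaction of \emph{all} types' incentive constraints. Because a higher type's temptation to play $L$ is governed by $\theta_j>\theta_1$, keeping her on the prescribed path while (i) buyers keep trusting, (ii) the posterior remains a function of the counts alone, and (iii) each type's value hits its exact target is delicate: the milking/rebuilding regions and the boundary locations must be co-designed so that the single value-function system is consistent and monotone in the type. Establishing that this boundary value problem has a solution with values in $[0,1]$ and that no type has a profitable one-shot deviation at any history, on or off path, is the technical heart, and is where essentially all of the work lies; the payoff bookkeeping and the buyer's problem are comparatively routine once the value functions are in hand.
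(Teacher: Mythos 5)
Your reduction of $v(\gamma)$ to the outcome distribution $\alpha$, and your reading of the two key forces (buyer slack from $\gamma>\gamma^*$, and $v_1(\gamma)=1-\theta_1$ pinning type $\theta_1$ to her complete-information ceiling), are both correct. But the structure you build on them has a fatal flaw: a belief-Markov profile in which the posterior — and hence every continuation value $W_j(\cdot)$ — is a deterministic function of the counts $(n_H,n_L)$ cannot satisfy type $\theta_1$'s indifference at every active history. Write $x$ for a count-node in the active region and $e_H,e_L$ for the two increments. Indifference at $x$ (with buyers playing $T$) gives $W_1(x+e_H)=\big[W_1(x)-(1-\delta)(1-\theta_1)\big]/\delta$ and $W_1(x+e_L)=\big[W_1(x)-(1-\delta)\big]/\delta$. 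If indifference also holds at $x+e_H$ and $x+e_L$ — as it must wherever type $\theta_1$ mixes and both successors stay active — then computing $W_1(x+e_H+e_L)$ along the two orders gives $\big[W_1(x)-(1-\delta)(1-\theta_1)-\delta(1-\delta)\big]/\delta^2$ versus $\big[W_1(x)-(1-\delta)-\delta(1-\delta)(1-\theta_1)\big]/\delta^2$, and equating them forces $(1-\delta)\theta_1=0$, a contradiction. Since reputation \emph{rebuilding} requires both the $HL$ and $LH$ orders to be on path, your active region necessarily contains such $2\times 2$ squares, so the system you call a "boundary value problem" has no solution. Under discounting, the continuation value owed to player $1$ depends on the \emph{order} in which $H$ and $L$ were played, not only on their counts. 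This is precisely why the paper's construction carries the promised-value weights $p^N(h^t),p^H(h^t),p^L(h^t)$ on $v^N,v^H,v^L$ as a state variable \emph{separate from} the belief: in the paper only the actions and the belief are count-based in the active phase, while values and phase transitions follow the order-sensitive bookkeeping $p^L(h^t,L)=[p^L(h^t)-(1-\delta)]/\delta$ versus $p^L(h^t,H)=p^L(h^t)/\delta$ in (\ref{4.17})--(\ref{4.18}), which makes every type indifferent exactly where required and delivers $v(\gamma)$ by promise-keeping rather than by tuning boundaries.

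Two further discrepancies confirm that the repair is the paper's construction rather than a patch of yours. First, your lower absorbing boundary (buyers play $N$ forever once the belief falls to a threshold) cannot coexist with type $\theta_1$'s indifference when the learning speed is fixed: the number of consecutive $L$'s from the prior to a fixed belief threshold is a constant $T$ independent of $\delta$, so the value of "always $L$" is $1-\delta^T\rightarrow 0$, and indifference would drag type $\theta_1$'s equilibrium value to $0$ instead of $1-\theta_1$. In the paper there is no lower belief boundary at all — the update (\ref{4.16}) keeps $\eta(h^t)>\eta^*$ forever — and the end of rent extraction is triggered by the bookkeeping event $p^L(h^t)=0$, which takes on the order of $p^L(h^0)/(1-\delta)$ periods of playing $L$ and therefore scales correctly with patience. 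Second, your upper boundary hard-wires the continuation at exactly $(1-\theta_1,\ldots,1-\theta_m)$, whereas the paper needs the continuation at belief-one histories to be the calibrated mixture of $v^H$ and $v^N$ in (\ref{4.20})--(\ref{4.19}) that keeps type $\theta_1$ exactly indifferent; and exact attainment of $v(\gamma)$ is achieved not by tuning discrete boundary locations (which generically cannot hit an exact target on a belief grid) but by initializing the promised value at $v(\gamma)$ and verifying promise-keeping via Lemmas A.1--A.5 and the Fudenberg--Maskin payoff-decomposition at Class 3 histories.
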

In section 4.1,
I provide an overview of the equilibrium construction in an environment with \textit{two types}.
In section 4.2,
I summarize the main ideas behind the construction and
discuss the key features of players' equilibrium strategies.
In section 4.3,
I compare the equilibrium dynamics with other reputation models.

\subsection{Equilibrium Construction with Two Types: Strategies and Continuation Values}
I specify players' actions and the evolution of player $1$'s continuation value at \textit{on-path histories}.
At every \textit{off-path history}, player $2$ plays $N$, and all types of player $1$ play $L$ if player $2$ plays $T$.
I keep track of two state variables:
\begin{enumerate}
  \item $\eta(h^t) \in [0,1]$, which is the probability of type $\theta_1$ under player $2$'s belief at $h^t$,
  and is called player $1$'s \textit{reputation}, with $\eta(h^0)$ equals the prior probability of type $\theta_1$.
  \item $v(h^t) \in \mathbb{R}^2$, which is
  P1's continuation value at $h^t$, with $v(h^0) \equiv v(\gamma)$.
I verify in Appendix A.4.1 that for every $h^t$,
    $v(h^t)$
  is a convex combination of $v^N$, $v^H$ and $v^L$, i.e., $v(h^t)=p^N(h^t) v^N +p^H(h^t) v^H +p^L(h^t) v^L$.
  Hence, keeping track of $v(h^t)$ is equivalent to keeping track of
   $p^N(h^t)$, $p^H(h^t)$, and $p^L(h^t)$.
\end{enumerate}
I partition the set of histories into \textit{three classes}, depending on the value of $p^L(h^t)$:
\begin{itemize}
  \item \textbf{Class 1 Histories:} $p^L(h^t) \geq 1-\delta$.
  \item \textbf{Class 2 Histories:} $p^L(h^t) \in (0, 1-\delta)$.
  \item \textbf{Class 3 Histories:} $p^L(h^t) =0$.
\end{itemize}
Play starts from a Class 1 history, and reaches a Class 3 history in finite time, after which it stays there forever.
Active learning about player 1's type happens at Class 1 and Class 2 histories (i.e., call it an \textit{active learning phase}), but stops at Class 3 histories (i.e., call it an \textit{absorbing phase}).

\paragraph{Class 1 Histories:} Play starts from a Class 1 history where $p^L(h^t) \geq 1-\delta$. If $h^t$ belongs to Class 1, then
\begin{itemize}
  \item Player $2$ plays $T$ for sure.
  \item Player $2$'s posterior beliefs, $\eta(h^t,H)$ and $\eta(h^t,L)$, are functions of $\eta(h^t)$:
 \begin{equation}\label{4.15}
 \eta (h^t,H) = \eta^*+ \min \Big\{1-\eta^*, \big(1+\lambda (1-\gamma^*) \big) \big(\eta (h^t)-\eta^* \big) \Big\},
\end{equation}
\begin{equation}\label{4.16}
\textrm{and} \quad \eta (h^t,L)  = \eta^* + (1-\lambda \gamma^*) (\eta (h^t)-\eta^*).
 \end{equation}
\begin{itemize}
  \item $\eta^* \in (\gamma^* \eta(h^0),\eta(h^0))$ is a constant.
  Given that $\eta(h^0)>\eta^*$,
  one can verify by induction that $\eta(h^t)>\eta^*$ for every Class 1 (and Class 2) history $h^t$.
  \item $\lambda >0$ is a constant that measures the speed of player $2$'s learning.
I require
  \begin{equation}
    \big(1-\lambda \gamma^*\big)^{1-\gamma} \big(1+\lambda (1-\gamma^*)\big)^{\gamma} > 1.
\end{equation}
Since $\gamma > \gamma^*$, Taylor's theorem implies that
(4.4) is satisfied when $\lambda$ is sufficiently small.
\end{itemize}
\end{itemize}
Equations (\ref{4.15}) and (\ref{4.16}) pin down both types of player $1$'s actions at $h^t$. This is because according to Bayes Rule,
the probability with which type $\theta_1$ plays $H$ at $h^t$ is:
\begin{equation}\label{4.13}
 \frac{\eta(h^t)-\eta(h^t,L)}{\eta(h^t,H)-\eta(h^t,L)} \cdot \frac{\eta(h^t,H)}{\eta(h^t)},
\end{equation}
and the probability with which type $\theta_2$ plays $H$ at $h^t$ is:
\begin{equation}\label{4.14}
    \frac{\eta(h^t)-\eta(h^t,L)}{\eta(h^t,H)-\eta(h^t,L)} \cdot \frac{1-\eta(h^t,H)}{1-\eta(h^t)}.
\end{equation}
Plugging (\ref{4.15}) and (\ref{4.16}) into
 (\ref{4.13}) and (\ref{4.14}), every type's action at $h^t$ can be written as a function of $\eta(h^t)$.

Let $p_H(h^t)$ be
the probability of $H$ being played at $h^t$ according to player $2$'s belief at $h^t$.
Since player $2$'s belief is a martingale, $p_H(h^t) \eta (h^t,H) +(1-p_H(h^t)) \eta (h^t,L)=\eta(h^t)$.
Equation (\ref{4.16}) and $\eta(h^t)>\eta^*$ imply that  $\eta(h^t,L) \neq \eta(h^t)$. This together with the martingale condition implies that:
\begin{equation*}
 \frac{1-p_H(h^t)}{p_H(h^t)}=   \frac{\eta (h^t,H)-\eta(h^t)}{\eta(h^t)-\eta (h^t,L)}.
\end{equation*}
Plugging
(\ref{4.15})
and (\ref{4.16}) into the above equation, we have:
\begin{equation}\label{4.4}
  \frac{1-p_H(h^t)}{p_H(h^t)}=  \frac{\eta (h^t,H)-\eta(h^t)}{\eta(h^t)-\eta (h^t,L)} \leq \frac{1-\gamma^*}{\gamma^*}.
\end{equation}
This implies that $p_H(h^t) \geq \gamma^*$, which suggests that player $2$
has an incentive to play $T$ at $h^t$.

Player 1's continuation value after playing $L$ at $h^t$ is:
  \begin{equation}\label{4.17}
v(h^t,L)= \frac{p^N(h^t)}{\delta} v^N +  \frac{p^L(h^t)-(1-\delta)}{\delta} v^L + \frac{p^H(h^t)}{\delta} v^H.
\end{equation}
If $h^t$ is such that $\eta(h^t,H)<1$, then player $1$'s continuation value after playing $H$ at $h^t$ is given by:
\begin{equation}\label{4.18}
v(h^t,H)=\frac{p^N(h^t)}{\delta} v^N  + \frac{p^L(h^t)}{\delta}v^L+ \frac{p^H(h^t)-(1-\delta)}{\delta} v^H.
\end{equation}
If $h^t$ is such that $\eta(h^t,H)=1$, then player $1$'s continuation value after playing $H$ at $h^t$ is given by:
\begin{equation}\label{4.20}
  v(h^{t},H) = \frac{v_{1}(h^{t},H)}{1-\theta_1} v^H +\Big(1-\frac{v_{1}(h^{t},H)}{1-\theta_1}\Big)v^N,
\end{equation}
\begin{equation}\label{4.19}
 \textrm{with }  v_1(h^{t},H) \equiv \frac{v_1(h^{t})-(1-\delta)(1-\theta_1)}{\delta} \textrm{ and } v_1(h^t) \textrm{ is the first entry of } v(h^t).
\end{equation}
If $h^t$ is such that $\eta(h^t,H)<1$, then (\ref{4.17}) and (\ref{4.18}) imply that
both types of player $1$ are indifferent between $H$ and $L$.
If $h^t$ is such that $\eta(h^t,H)=1$, then (\ref{4.17}), (\ref{4.20}) and (\ref{4.19}) imply that
type $\theta_1$ is indifferent while type $\theta_2$ strictly prefers $L$.
Player $1$'s incentive constraints at $h^t$ are satisfied since
when $\eta(h^t,H)<1$,
both types are required to mix; and when $\eta(h^t,H)=1$, type $\theta_1$ is required to mix while type $\theta_2$ plays $L$ for sure.

\paragraph{Class 2 Histories:} At every Class 2 history $h^t$, i.e., one in which
$p^L(h^t) \in (0,1-\delta)$,
\begin{itemize}
  \item Player 2 plays $T$ for sure.
  \item Type $\theta_1$ plays $H$ for sure. Type $\theta_2$
plays $L$ with probability
$\min\{1, \frac{1-\gamma^*}{1-\eta(h^t)}\}$. Therefore, the probability with which $L$ being played at $h^t$ is
$(1-\eta(h^t)) \min\{1, \frac{1-\gamma^*}{1-\eta(h^t)}\}$, which is no more than $1-\gamma^*$.
This implies that player $2$ has an incentive to play $T$.
\end{itemize}
After player $1$ plays $L$ at $h^t$, $\eta(h^t,L)=0$ given that type $\theta_1$ plays $H$ for sure, and
player $1$'s continuation value is given by:
  \begin{equation}\label{4.23}
  v\big(h^t,L \big) \equiv   \frac{Q(h^t)}{\delta} v^H + \frac{\delta-Q(h^t)}{\delta} v^N,
  \end{equation}
  where
  \begin{equation}\label{4.24}
    Q(h^t) \equiv p^H(h^t) - \frac{1-\delta-p^L(h^t)}{1-\theta_2}
  \end{equation}
Player $1$'s continuation value after playing $H$ depends on whether $\eta(h^t,H)$ equals $1$ or not, the latter can be computed according to Bayes Rule
via $\eta(h^t)$ and different types' mixing probabilities specified above.
\begin{enumerate}
  \item If $\eta(h^t,H)<1$, then player $1$'s continuation value after playing $H$ at $h^t$ is given by (\ref{4.18}).
  \item If $\eta(h^t,H)=1$, then player $1$'s continuation value after playing $H$ at $h^t$ is given by (\ref{4.20}).
\end{enumerate}
If $h^t$ is such that $\eta(h^t,H)<1$, then (\ref{4.23}) and (\ref{4.18}) imply that
type $\theta_2$ is indifferent and type $\theta_1$ strictly prefers to play $H$.
If $h^t$ is such that $\eta(h^t,H)=1$, then (\ref{4.23}) and (\ref{4.20}) imply that
type $\theta_2$ strictly prefers to play $L$ and type $\theta_1$ strictly prefers to play $H$.
Player $1$'s incentive constraints at $h^t$ are satisfied since type $\theta_1$ is required to play $H$, while type $\theta_2$ is required to mix only if
$\eta(h^t,H)<1$, and is required to play $L$ if $\eta(h^t,H)=1$.

\paragraph{Class 3 Histories:}
If $h^t$ is such that $p^L(h^t)=0$, then
all types of player $1$ play the same action at $h^t$ so learning about her type stops. Moreover,
her continuation value at every subsequent history
is also a convex combination of $v^H$ and $v^N$, i.e., all subsequent histories belong to Class 3.
The construction of equilibrium play after reaching any Class 3 history uses Lemma 3.7.2 in Mailath and Samuelson (2006, page 99):
\begin{Lemma2}
For all $\varepsilon>0$, there exists $\underline{\delta} \in (0,1)$ such that for every $\delta \in (\underline{\delta},1)$ and every $v\in \mathbb{R}^m$ that is a convex combination of $v(1)$, $v(2)$,...,$v(k)$, there exists $\{v^s\}_{s=0}^{\infty}$ with $v^s \in \{v(1),...,v(k)\}$
such that (1) $v=\sum_{s=0}^{\infty} (1-\delta)\delta^s v^s$, and (2) for every $l \in \mathbb{N}$,
$\sum_{s=l}^{\infty} (1-\delta)\delta^{s-l} v^s$ is within $\varepsilon$ of $v$.
\end{Lemma2}
Since $v(h^t)$ is a convex combination of $v^H$ and $v^N$, the above lemma implies that when $\delta$ is above some cutoff, there exists an infinite sequence
$\{v^s\}_{t=0}^{\infty}$ with $v^s \in \{v^N,v^H\}$ such that:
\begin{itemize}
  \item[1.] $v(h^t)=(1-\delta) \sum_{s=0}^{\infty} \delta^s v^s$,
  \item[2.] For every $l \in \mathbb{N}$,
$(1-\delta) \sum_{s=l}^{\infty} \delta^{s-l} v^s$ is within $\varepsilon$ of $v(h^t)$.
\end{itemize}
The continuation play following $h^t$ is:
\begin{itemize}
  \item For every $s \in \mathbb{N}$ such that $v^s=v^H$,
  P2 plays $T$ and all types of P1 play $H$ in period $t+s$.
  \item For every $s \in \mathbb{N}$ such that $v^s=v^N$,
    P2 plays $N$ and all types of P1 play $L$ in period $t+s$.
\end{itemize}
Player $2$'s incentive constraints at Class 3 histories are trivially satisfied.
To verify player $1$'s incentive constraints, I show in Lemma A.5 of Appendix A
that $p^H(h^t)$ is bounded from below by some strictly positive number for every $h^t$ belonging to Class 1 or Class 2. This implies that
if  $h^t$ belongs to Class 3 but none of its predecessors belong to Class 3, then
 $p^H(h^t)$ is also bounded from below by a positive number. Pick $\varepsilon$ in the above lemma to be small enough, one can ensure that player $1$'s continuation value at every on-path history succeeding $h^t$ is strictly bounded away from $0$.
This implies that
 at every on-path history where player $1$ is asked to play $H$, all types of patient player $1$ have strict incentives to comply. This is because every type's continuation payoff equals $0$ if she does not comply, and her continuation payoff is strictly bounded away from $0$ if she complies.

\paragraph{Promise Keeping:} I have verified player $2$s' incentive constraints, and
have constructed continuation values under which player $1$'s incentive constraints are satisfied. What remains to be verified is the promise keeping constraint, that the continuation play at every on-path history delivers all types of player $1$ their respective continuation values.
This is established by
showing that under the above strategy profile,
 play reaches a Class 3 history in finite time with probability $1$ (implied by Lemmas A.1 and A.4 in Appendix A),\footnote{Nevertheless, the expected duration of the active learning phase (Class 1 and 2 histories) grows without bound as $\delta$ approaches $1$.}
 and player $1$'s continuation value at Class 3 histories can be delivered via a sequence of payoffs consisting only of $v^H$ and $v^N$.

\paragraph{Remark:} For illustration, I verify the conclusions of Theorems 2 and 3 in the constructed equilibrium.
For Theorem 2,
type $\theta_1$ mixes at Class 1 histories but has strict incentives at Class 2 and Class 3 histories; type $\theta_2$ mixes in the active learning phase when her reputation is low, but has a strict incentive to play $L$ when her reputation is high. She also has a strict incentive in the absorbing phase. Verifying the conclusion of Theorem 3 requires a careful examination of player $1$'s continuation values, in particular, the convex weights of $v^H$, $v^N$ and $v^L$. These three state variables keep track of the discounted average frequency of each stage-game outcome that has occurred before. In the constructed equilibrium that attains $v(\gamma)$,
\begin{itemize}
  \item Consider the discounted frequency of outcomes when type $\theta_1$ plays one of her pure-strategy best replies.
  If the posterior belief $\eta(h^t)$ never reaches $1$, then the discounted average frequency of outcome $H$ is $\gamma \frac{1-\theta_1}{1-\gamma \theta_1}$, the discounted average frequency of outcome $L$ is $(1-\gamma) \frac{1-\theta_1}{1-\gamma \theta_1}$. 
  If the posterior belief $\eta(h^t)$ reaches $1$ for some $t \in \mathbb{N}$, then the discounted average frequency of outcome $H$ is strictly higher than $\gamma \frac{1-\theta_1}{1-\gamma \theta_1}$, and the discounted average frequency of outcome $L$ is strictly lower than $(1-\gamma) \frac{1-\theta_1}{1-\gamma \theta_1}$.
  \item Consider the discounted frequency of outcomes when type $\theta_2$ plays one of her pure-strategy best replies.
  If play was at
   a Class 1 history in the period before entering the absorbing phase,
  then the discounted average frequency of outcome $H$ is $\gamma \frac{1-\theta_1}{1-\gamma \theta_1}$, the discounted average frequency of outcome $L$ is $(1-\gamma) \frac{1-\theta_1}{1-\gamma \theta_1}$. 
  If play was at a Class 2 history in the period before entering the absorbing phase, then the discounted average frequency of outcome $H$ is strictly lower than $\gamma \frac{1-\theta_1}{1-\gamma \theta_1}$, and the discounted average frequency of outcome $L$ is strictly higher than $(1-\gamma) \frac{1-\theta_1}{1-\gamma \theta_1}$.
\end{itemize}
The conclusion of Theorem \ref{Theorem3.3} is verified once
taking $\gamma$ to be arbitrarily close to $\gamma^*$.

\subsection{Summary of Ideas Behind the Construction}
To start with, learning about player $1$'s type is indispensable for her to attain payoff $v^*$, or anything above $v^H$.
According to Corollary 2,
player $1$ needs to be able to \textit{rebuild} her reputation after playing $L$.\footnote{This is true except for histories such that $p^L(h^t) \leq 1-\delta$, which happens only if $L$ has been played too frequently before.} This is because otherwise,
she
cannot extract information rent for unboundedly many periods and as a result, cannot obtain discounted average payoff significantly above $1-\theta$.
To make reputation rebuilding feasible, both types of player $1$ mix between $H$ and $L$, except for histories where $\eta(h^t)$ is close to $1$, in which case the low-cost type $\theta_1$ mixes between $H$ and $L$ while the high-cost type $\theta_2$ plays $L$ for sure. Intuitively,
this arrangement reduces the high-cost type's reputation loss when she shirks for sure and extracts information rent.\footnote{Histories at which type $\theta_2$ has a strict incentive to play $L$ is indispensable for her to attain payoff $v_2^*$. This is because otherwise, playing $H$ at every history where $T$ is played with positive probability is one of type $\theta_2$'s equilibrium best reply, and under this strategy of hers, her payoff in every period cannot exceed $1-\theta_2$, which is strictly lower than $v_2^*$.}

The presence of the absorbing phase (Class 3 histories) is to provide all types of player $1$ the incentives to mix in the active learning phase (Class 1 \& 2 histories). Despite player $1$ can flexibly choose her actions when active learning takes place, her action choices affect her continuation payoff after reaching the absorbing phase, as well as the calendar time at which play enters the absorbing phase.
For example, if player $1$ plays $L$ too frequently in the beginning, then $p^L(h^t)$ decreases more quickly and play reaches a Class 3 history as soon as $p^L(h^t)=0$, after which player $1$'s continuation value is low and can no longer extract information rent.

A more subtle situation occurs when player $1$ plays $H$ too frequently, which decreases $p^H(h^t)$ and increases $p^L(h^t)$. It raises a concern that  $p^H(h^t)/p^L(h^t)$ may fall below $\frac{\gamma^*}{1-\gamma^*}$, after which player $1$'s continuation payoff cannot be delivered in any equilibrium. To address this, play enters the absorbing phase when $\eta(h^t)$ reaches $1$, after which
the continuation play consists only of outcomes $N$ and $H$.
Type $\theta_1$ is indifferent between entering the absorbing phase (by playing $H$) and remaining in the active learning phase (by playing $L$), while type $\theta_2$ strictly prefers the latter option. This is because type $\theta_1$ has a comparative advantage in playing $H$.
Requirement (4.4) on $\lambda$  implies that given $p^H(h^0)/p^L(h^0)$ is strictly greater than $\frac{\gamma^*}{1-\gamma^*}$ and play remains in the active learning phase at $h^t$, the ratio of the convex weights $p^H(h^t)/p^L(h^t)$ is also strictly greater than $\frac{\gamma^*}{1-\gamma^*}$. This is formally stated as Lemma A.1 in Appendix A, which ensures that the equilibrium play eventually reaches a Class 3 history.

What needs to be considered next is the speed of learning $\lambda$. In order to maximize player $1$'s equilibrium payoff, one needs to maximize the frequency of outcome $L$ while simultaneously providing incentives for player $2$s to trust.
This leads to the role of \textit{slow learning}, i.e., $\lambda$ being low.
To understand why, first,
player $2$'s incentive to trust translates into an upper bound on the \textit{relative rate of learning}, given by (4.7).
In a nutshell, it requires
that the magnitude of reputation improvement after playing $H$ to be small enough relative to the magnitude of reputation deterioration after playing $L$.
Second, fixing the relative rate of learning and the long-run frequencies of outcomes $H$ and $L$,
the amount of reputation loss per period vanishes as the \textit{absolute speed of learning} goes to zero.
As a result, lowering the absolute speed of learning improves
player $1$'s long-term reputation without compromising on  player $2$s' willingness to
trust. This allows for an increase in the long-run frequency of outcome $L$ without sacrificing player $1$'s reputation, which helps to improve player $1$'s payoff.
\subsection{Comparing Equilibrium Dynamics}\label{sub4.3}
I compare the equilibrium dynamics in my model to those in models with behavioral biases (Jehiel and Samuelson 2012), reputation cycles (Sobel 1985, Phelan 2006, Liu 2011, Liu and Skryzpacz 2014), gradual learning (Benabou and Laroque 1992, Wiseman 2005, 2012,
Ekmekci 2011),  mixed-strategy commitment types (Mathevet, Pearce and Stacchetti 2019),
capital-theoretic models of reputations (Board and Meyer-ter-Vehn 2013, Bohren 2018, Dilm\'{e} 2018), and models of reputation sustainability (Cripps, Mailath and Samuelson 2004).

\paragraph{Analogical-Based Reasoning Equilibria:} The patient player alternates between her actions to manipulate her opponents' belief is reminiscent of the \textit{analogy-based reasoning equilibria} in Jehiel and Samuelson (2012).
In their model, there are multiple commitment types who are playing stationary mixed strategies, and one strategic type who can flexibly choose her actions. The short-run players mistakenly believe that the strategic type is playing a stationary strategy. In the trust game, their results imply that the strategic long-run player's behavior experiences a \textit{reputation building phase} in which she plays $H$ for a bounded number of periods, followed by a \textit{reputation manipulation phase} that resembles the active learning phase in my model where she alternates between $H$ and $L$ according to the Stackelberg frequencies.
The short-run players' posterior belief fluctuates within a small neighborhood of the cutoff belief, implying that
the long-run player's type is never fully revealed.

Comparing my model to theirs,
there are two qualitative differences in the reputation dynamics
that highlight the distinctions between rational and analogical-based short-run players.
First, learning stops in finite time in my model while it lasts forever in theirs.
This is driven by the
constraint that type $\theta_1$'s equilibrium payoff cannot exceed $1-\theta_1$, which comes from
the rational short-run players' ability to correctly predict the long-run player's average action in \textit{every period}.
This constraint is absent
when short-run players use analogy-based reasoning
since they can only correctly predict \textit{the long-run player's average action across all periods}.
Second, the short-run players learn the true state with positive probability in every high-payoff equilibrium of my model,
while in Jehiel and Samuelson (2012), the probability with which they learn the true state is zero.
This is because analogy-based short-run players' posterior beliefs depend only on the empirical frequencies of the observed actions. That is to say, their beliefs are not responsive enough to each individual observation.

\paragraph{Reputation Building-Milking Cycles:} The behavioral pattern that a patient player builds her reputation in order to milk it in the future has been identified in commitment-type models with either changing types (Phelan 2006), or limited memories (Liu 2011, Liu and Skrzypacz 2014).
In terms of differences, first, the reputation cycles in Phelan (2006), Liu (2011) and Liu and Skrzypacz (2014) can last forever while
learning stops in finite time in mine.
This is driven by the constraint that the lowest-cost type's equilibrium payoff cannot exceed $1-\theta_1$, which
arises because this type is rational and has strict incentives to misbehave.

Second, reputations are built and milked \textit{gradually} in my model while in theirs, the agent's reputation falls to its lower bound every time she milks it. This is because the commitment types in their models never betray. As a result, one misbehavior reveals the long-run player's rationality.
In my model, the behaviors of good and bad types are close
since all types share the same ordinal preferences over stage-game outcomes and face strict temptations to misbehave. In the long-run player's optimal equilibrium, the lowest-cost type misbehaves with positive probability for unbounded number of periods, which does not reduce her own payoff while at the same time, covering up for the other types when they milk reputations.

This feature of gradual learning is supported empirically by several studies of online markets. As documented in Dellarocas (2006) and Bar-Isaac and Tadelis (2008), consumers judge the quality of sellers based on their reputation scores, which are usually obtained via averaging the ratings they obtained in the past. Empirical works have documented that one recent negative rating neither significantly affects the amount of sales nor the prices of a reputable seller who has obtained many positive ratings in the past, which matches the
 dynamics in my model.

\paragraph{Reputation Models with Gradual Learning:} Benabou and Laroque (1992) and  Ekmekci (2011) study reputation games with commitment types and the long-run player's actions are imperfectly monitored.
In their equilibria, learning also happens gradually since the short-run players cannot tell the difference between intended cheating and exogenous noise. In contrast, my model has perfect monitoring of stage-game outcomes but has no commitment type. Gradual learning occurs since the reputational type cheats with positive probability.
The different driving forces behind gradual learning also lead to different long-run outcomes.
In my model, reputation building-milking cycles stop in finite time while in theirs, reputation cycles last forever.
As mentioned before, this is driven by the rational reputational type's strict incentive to misbehave, which implies an upper bound on her equilibrium payoff.
Another difference is that the short-run players never fully learn the long-run player's type
in Benabou and Laroque (1992) and  Ekmekci (2011), while in mine, the lowest-cost type fully reveals her private information for unboundedly many times
 in every high-payoff equilibrium (Corollary 2).

The class of equilibria I construct start from  a phase where active learning takes place followed by an absorbing phase where learning stops.
Different from the learning phases in Wiseman (2005, 2012), during which players experiment and learn about their payoffs,
the learning phase in my model is constructed so that the patient player can extract information rent and can attain strictly higher payoff compared to the complete information benchmark.
As a result, the length of the learning phase in my model increases with the discount factor, while it does not vary with the discount factor in Wiseman (2005, 2012).

\paragraph{Mixed-Strategy Commitment Types:}
Mathevet, Pearce and Stacchetti (2019) construct patient-player-optimal equilibria in a repeated communication game in which the stage-game follows from the leading example of Kamenica and Gentzkow (2011). They examine a commitment-type model in which with positive probability,
the sender is committed and sends
 messages according to her optimal disclosure policy in every period.

A qualitative difference between their equilibrium and mine emerges \textit{when the long-run player's reputation is low}, i.e., after she has cheated too much in the past. In their equilibrium, the normal type sender can still extract information rent in the future, she never reveals her rationality, and learning about her type does not stop at low reputations. In my equilibrium, the high-cost type cannot extract information rent anymore in the future, her private information can be perfectly revealed, and learning about her type stops at low reputations.

This difference is driven by the distinction between rational reputational type in my model and non-strategic committed type in theirs. In my rational type model, if player $1$ has shirked too much in the past and has a low reputation,
then she cannot be allowed to extract information rent any more in the future.
This is because otherwise, even the \textit{lowest-cost type} will have \textit{a strict incentive to shirk}, making it no longer valuable for other types to build a reputation for being the lowest-cost type, and the reputational equilibrium unravels.

In contrast, the \textit{mixed-strategy commitment type} in their model faces no incentive constraint,
plays a mixed action in every period.  As a consequence, the
rational long-run player never fully reveals her type,
and can extract information rent in the future
no matter how much she has cheated in the past.

\paragraph{Capital-Theoretic Reputation Models:} Reputation cycles also occur in the Poisson good news models of Board and Meyer-ter-Vehn (2013) and Dilm\'{e} (2018).
They characterize Markov equilibria in which
the long-run player exerts effort when her reputation is above some cutoff.
Different from my model, reputation jumps up
immediately after the arrival of good news.
Moreover, the long-run player's reputation depends only on the most recent time of news arrival in their models while it depends on
the history of her actions in mine.

These distinctions are caused by the difference sources of learning.
In my model, learning arises from the differences in different types' behaviors, while in their models,
all types adopt the same behavior but face different news arrival rates. In terms of the applications, my model fits into online platforms where feedback arrives frequently while their Poisson models fit into markets with infrequent news arrival.

In the bad news model of Board and Meyer-ter-Vehn (2013) and the Brownian model of Bohren (2018), the informed player's effort increases in her reputation. This differs from the active learning phase in
the equilibria I construct
where the high-cost types
shirk with probability one when her reputation is close to one.

\paragraph{Sustainability of Reputations:} Cripps, Mailath and Samuelson (2004) study models with commitment types.
They show
if the monitoring structure satisfies a full support condition and  player $1$'s actions are statistically identified, then in every equilibrium, player $2$s eventually learn player $1$'s type as $t \rightarrow \infty$, and the limiting equilibrium play converges to an
equilibrium of the game where  player $1$'s type is common knowledge.

In my baseline model, the uninformed players perfectly observe the outcome in each period, which means that the full support and identification conditions are not satisfied.
This leads to qualitative differences in equilibrium outcomes. For example, player $2$s \textit{may not} learn player $1$'s type as $t \rightarrow \infty$.
Such incomplete learning occurs in my constructed equilibrium when play reaches a Class 3 history directly from a Class 1 history,
after which learning about player $1$'s stops and
player $2$'s belief attaches strictly positive probability to multiple types.\footnote{Incomplete learning of player $1$'s type \textit{does not contradict} my previous claim that in all except for a bounded number of periods, player $2$'s prediction about player $1$'s future actions is arbitrarily close to player $1$'s action under the true state. This can happen when different types of player $1$ use the same equilibrium strategy, in which case player $2$s cannot learn player $1$'s  type by observing player $1$'s actions but can precisely predict player $1$'s action in the true state.}

Moreover, my analysis needs to take into account player $1$'s payoff and behavior in \textit{finite time}, as well as the speed with which player $2$s learn. This contrasts to Cripps, Mailath and Samuelson (2004)'s
results that focus exclusively on player $2$'s belief and the equilibrium play as $t \rightarrow \infty$.
To elaborate,
in every equilibrium in which type $\theta_j$ approximately attains $v_j^*$ for some $j \geq 2$,
the expected length of the active learning phase grows without bound
as $\delta \rightarrow 1$. This is because
type $\theta_j$'s continuation payoff cannot exceed $1-\theta_j$ after learning stops,
so she can only obtain payoff
\textit{strictly higher} than $1-\theta_j$ when active learning about her type takes place.
Despite player $1$'s continuation value eventually converges to a complete information game payoff as $t \rightarrow \infty$, the time it takes for such convergence to occur is crucial for a patient player $1$'s equilibrium payoff.
\section{Concluding Remarks}\label{sec5}
I conclude by discussing the robustness of my results to simultaneous-move stage games, forward-looking buyers, and imperfect monitoring of the seller's actions (section \ref{sub5.1}). Then I list some alternative applications of my model and results (section \ref{sub5.2}).
Generalizations beyond $2 \times 2$ trust games are stated in Appendix \ref{secA}.

\subsection{Robustness of Results}\label{sub5.1}
\paragraph{Simultaneous-Move Stage Game:} Consider a simultaneous-move trust game with stage-game payoffs:
\begin{center}
\begin{tabular}{| c | c | c |}
  \hline
  - & $T$ & $N$ \\
  \hline
  $H$ & $1-\theta,b$ & $-d(\theta),0$ \\
  \hline
  $L$ & $1,-c$ & $0,0$ \\
  \hline
\end{tabular}
\end{center}
where $b,c>0$, $\theta \in \Theta \equiv \{\theta_1,\theta_2,...,\theta_m\}\subset (0,1)$ is player $1$'s persistent private information, and $d(\theta) \geq 0$  measures player $1$'s loss when she exerts high effort while player $2$ does not trust.
In the repeated version of this game, players' past action choices are perfectly monitored and the public history $h^t \equiv \{a_{1,s},a_{2,s}\}_{s=0}^{t-1}$ consists of both players' past action choices.
Other features of the game remain the same as in the baseline model.

For the results on equilibrium payoffs, recall the definition of $v^*$ in (\ref{3.3.1}).
A construction similar to the one in section 4 implies that $v^*$ is approximately attainable when $\delta$ is close to $1$.
Under a supermodularity condition on the stage-game payoffs:
\begin{equation}\label{5.1}
    0 \leq d(\theta_j)-d(\theta_i) \leq \theta_j-\theta_i \quad \textrm{for every} \quad j<i,
\end{equation}
one can show that for every $j \in \{1,2,...,m\}$, $v_j^*$ is type $\theta_j$ patient long-run player's highest equilibrium payoff.\footnote{In a repeated incomplete information game with all types of player $1$ sharing the same ordinal preferences over stage-game outcomes, the lowest equilibrium payoff is $0$. This is because repeating the outcome of $(L,N)$ is always an equilibrium of the repeated game, no matter whether the stage-game is simultaneous-move or sequential-move.}

Under the supermodularity condition in (\ref{5.1}), the conclusions in Theorem \ref{Theorem3.2} and Theorem \ref{Theorem3.3} extend
to the simultaneous-move
stage game. In particular, no type of the long-run player uses a stationary strategy or has a completely best reply in any equilibrium that approximately attains $v^*$. For the bounds on the long-run player's action frequencies that apply to all of her pure-strategy best replies, one needs to replace $y_t=H$ and $y_t=L$ in (\ref{3.3.4}) and (\ref{3.3.5}) with $a_{1,t}=H$ and $a_{1,t}=L$, respectively.

\paragraph{Stage Game with Imperfect Monitoring:} Player $1$ is an agent, for example a worker, a supplier or a private contractor. In every period, a principal (player $2$, for example an employer or a final good producer) is randomly matched with the agent. The principal then decides whether to incur a fixed cost and interact with the agent or to skip the interaction.
The agent chooses her effort from a closed interval unbeknownst to the principal.
The probability with which the service quality being high
increases with her effort. In line with the literature on incomplete contracts,
the service quality is not contractible but is observable to the agent and all the subsequent principals.
The cost of effort is linear and the marginal cost of effort is the agent's persistent private information.\footnote{Chassang (2010) studies a game in which players face similar incentives. The main difference is that the agent's cost of effort is common knowledge but the set of actions that are available in each period is the agent's private information. Tirole (1996) uses a similar model to study the collective reputation of commercial firms and that of bureaucrats.}

Players move sequentially in the stage game. Different from the baseline model, after player $2$ chooses to trust, player $1$ chooses among a continuum of effort levels $e \in [0,1]$.
The quality of the output being produced is denoted by $z \in \{G,B\}$, which is \textit{good} (or $z=G$) with probability $e$
and is \textit{bad} (or $z=B$) with complementary probability.
The cost of effort for type $\theta_i$ is $\theta_i e$. Player $1$'s benefit from her opponent's trust is normalized to $1$. Therefore, her stage-game payoff under outcome $N$ is $0$ and that under outcome $(T,e)$ is $1-\theta_i e$. Player $2$'s payoff is $0$ if he chooses $N$. His benefit from good output is $b$ while his loss from bad output is $c$, with $b,c >0$. Therefore, player $2$ is willing to trust only when player $1$'s expected effort exceeds
$\gamma^* \equiv \frac{c}{b+c}$.

Consider the repeated version of this game in which the public history consists of player $2$'s actions and the realized output quality, i.e., player $1$'s effort choice is her private information. In period $t$,
let $a_{1,t}$ be player $1$'s action, let $a_{2,t}$ be player $2$'s action, and let $z_t$ be the realized output quality. Let $h^t=\{a_{2,s},z_s\}_{s=0}^{t-1} \in \mathcal{H}^t$ be a public history  with $\mathcal{H} \equiv \bigcup_{t=0}^{+\infty} \mathcal{H}^t$ the set of public histories.
Let $h_1^t=\{a_{1,s}, a_{2,s},z_s\}_{s=0}^{t-1} \in \mathcal{H}_1^t$ be player $1$'s private history with
$\mathcal{H}_1 \equiv \bigcup_{t=0}^{+\infty} \mathcal{H}_1^t$ the set of private histories.
Let $\sigma_2 : \mathcal{H} \rightarrow \Delta (A_2)$ be player $2$'s strategy and
let $\sigma_{\theta}: \mathcal{H}_1 \rightarrow \Delta (A_1)$ be type $\theta$ player $1$'s strategy, with $\sigma_1 \equiv (\sigma_{\theta})_{\theta \in \Theta}$.

The above game with a continuum of effort, linear effort cost, and imperfect monitoring is equivalent to the baseline model with binary effort and perfect monitoring. To see this, choosing effort level $e$ under imperfect monitoring is equivalent to choosing a mixed action $eH+(1-e)L$ under perfect monitoring.
In terms of the results on payoffs, one can show that $v_j^*$ is type $\theta_j$'s highest equilibrium payoff when she is patient, and payoff vector $v^*$ is approximately attainable when $\delta$ is close to $1$. In terms of behaviors,
the bounds on the relative frequencies can be applied to realized paths of public signals, namely,
one needs to replace $y_t=H$ and $y_t=L$ in (\ref{3.3.4}) and (\ref{3.3.5}) with $(a_{2,t},z_t)=(T,G)$ and $(a_{2,t},z_{t})=(T,B)$, respectively.

\paragraph{Forward-Looking Buyer:} My results are robust when the seller faces a single buyer whose discount factor $\delta_2$ is strictly positive but close to $0$. To begin with, the constructed equilibrium that approximately attains $v^*$ remains to be an equilibrium under any $\delta_2$. This is because at every off-path history, the buyer plays $N$ and all types of player $1$ play $L$, in which case the buyer receives his minmax payoff. Hence, the buyer's strategy in the constructed equilibrium maximizes his stage-game payoff
while it cannot lower his continuation payoff.

The necessity of constraint (\ref{1.2}) relies on the observation that at every on-path history, the buyer has no incentive to play $T$ unless he expects $H$ to be played with positive probability. This remains valid when $\delta_2<\gamma^*$.
Suppose toward a contradiction that at some on-path history $h^t$,
all types of seller play $L$ for sure, but
the buyer plays $T$ with strictly positive probability.
The buyer's discounted average payoff by playing $T$ at $h^t$ is at most:
      \begin{equation*}
        \underbrace{(1-\delta_2) (-c)}_{\textrm{P2's stage-game payoff if he plays $T$ while P1 plays $L$ for sure}}+\underbrace{\delta_2 b}_{\textrm{P2's maximal continuation payoff after playing $T$}}.
      \end{equation*}
Since $\delta_2<\gamma^* \equiv \frac{c}{b+c}$, the above expression is strictly less than $0$.
This contradicts the buyer's incentive to play $T$ at $h^t$
since he can secure payoff $0$ by playing $N$ in every subsequent period.

In addition, when $\delta_2$ is close to $0$, the buyer has no incentive to play $T$ at $h^t$ unless he expects $H$ to be played with probability more than $\gamma^*-\varepsilon$, with $\varepsilon$ vanishes to $0$ as $\delta_2 \rightarrow 0$. This implies
an \textit{approximate version} of constraint (\ref{1.3}) when
the seller's discount factor $\delta_1$ is close enough to $1$:
\begin{equation}
    \alpha^j(H) \geq \frac{\gamma^*-\varepsilon}{1-\gamma^*+\varepsilon}\alpha^j(L),
\end{equation}
with
\begin{equation*}
 \alpha^j (y) \equiv  \mathbb{E}^{(\sigma_{\theta_j},\sigma_2)} \big[
    \sum_{t=0}^{\infty} (1-\delta_1)\delta_1^t \mathbf{1}\{y_t=y\}
    \big] \textrm{ for every } y \in \{N,H,L\}.
\end{equation*}
Replacing (\ref{1.3}) with constraint (5.2), the value of the constrained optimization problem is close to $v_j^*$, which converges to $v_j^*$ as $\delta_2 \rightarrow 0$. This implies the robustness of Theorem 1 to perturbations of $\delta_2$.
Given that the proofs of Theorems 2 and 3 do not use buyers' incentive constraints aside from the conclusion that $v^*$ is a patient seller's highest equilibrium payoff, those results are also robust to small perturbations of $\delta_2$.

\subsection{Alternative Applications}\label{sub5.2}
\paragraph{Capital Taxation:} Player $1$ is a government and player $2$s are a sequence of foreign investors. The stage-game is depicted in Figure 4, where $\theta \in \{\theta_1,...,\theta_m\}$ is the government's private information that measures its benefit from expropriating investors via high tax rates. I assume that $0<\theta_1<...,<\theta_m$, namely, all types of government strictly benefit from expropriation.
\begin{figure}
\begin{center}
\begin{tikzpicture}[scale=0.22]
\draw [->, thick] (0,10)--(-5,5);
\draw [->, thick] (0,10)--(4,6);
\draw [->, thick] (-5,5)--(-10,0);
\draw [->, thick] (-5,5)--(0,0);
\draw (0,10)--(-2,8)node[left]{invest};
\draw (0,10)--(2,8)node[right]{not invest};
\draw (-5,5)--(-7,3)node[left]{low tax};
\draw (-5,5)--(-3,3)node[right]{high tax};
\draw [ultra thick] (0,9.9)--(0,10.1)node[above, blue]{P2};
\draw [ultra thick] (-5.1,5)--(-4.9,5)node[left, red]{P1};
\draw [ultra thick] (4,6)--(4,5.9)node[below]{({\color{red}{$0$}}, {\color{blue}{$0$}})};
\draw [ultra thick] (-10,0)--(-10,-0.1)node[below]{({\color{red}{$1$}}, {\color{blue}{$b$}})};
\draw [ultra thick] (0,0)--(0,-0.1)node[below]{({\color{red}{$1+\theta$}}, {\color{blue}{$-c$}})};
\end{tikzpicture}
\caption{Capital Taxation Game Between Government and Investors}
\end{center}
\end{figure}
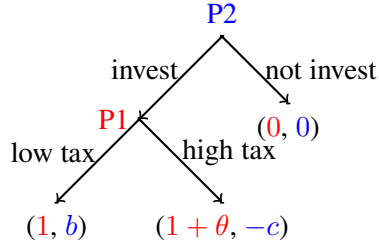
This game can be analyzed using similar techniques as my baseline model. Let $\gamma^* \equiv \frac{c}{b+c}$. One can show that type $\theta$ government's highest equilibrium payoff $v_j^*$ is the value of:
\begin{equation}\label{5.2}
    \max_{\alpha \in \Delta \{\textrm{not invest}, \textrm{high tax}, \textrm{low tax}\}} \big\{ \alpha(\textrm{low tax}) +(1+\theta_j)\alpha(\textrm{high tax}) \big\},
\end{equation}
subject to:
\begin{equation}\label{5.3}
     \alpha(\textrm{low tax}) +(1+\theta_1)\alpha(\textrm{high tax}) \leq 1,
\end{equation}
and
\begin{equation}\label{5.4}
 \alpha(\textrm{low tax}) \geq \frac{\gamma^*}{1-\gamma^*}\alpha(\textrm{high tax}).
\end{equation}
Solving this problem, one can obtain
\begin{equation}
v_j^* =\frac{1+\theta_j-\gamma^* \theta_j}{1+\theta_1-\gamma^* \theta_1}.
\end{equation}
Similar to the baseline model, $v_j^*$
depends only on $\theta_j$ and the lowest possible temptation to expropriate $\theta_1$, but not on the other aspects of incomplete information.
In addition, the lowest benefit type $\theta_1$ cannot strictly benefit from incomplete information, while types $\theta_2$ to $\theta_m$ can receive payoff strictly higher than $1$.
The results on the government's on-path behaviors in equilibria that approximately attain $v^* \equiv (v_1^*,...,v_m^*) $ extend as well.

\paragraph{Entry Deterrence/Limit Pricing Game} Player $1$ is an incumbent choosing between a low price (or \textit{fight}) and a normal price (or \textit{accommodate}). Player $2$ is an entrant deciding
whether to enter or not.
 Players' payoffs are:
      \begin{center}
\begin{tabular}{| c | c | c |}
  \hline
  - & Out & Enter \\
  \hline
  Low Price & $1-\theta,0$ & $-d(\theta),-b$ \\
  \hline
  Normal Price & $1,0$ & $0,c$ \\
  \hline
\end{tabular}
\end{center}
where $\theta$ and $d(\theta)$ are the incumbent's costs from lowering prices, which is interpreted as limit pricing if the entrant stays out, and is interpreted as predation if the entrant enters. As argued in Milgrom and Roberts (1982), $\theta$ depends on the efficiency of the incumbent's production technology, which tends to be its persistent private information.
This maps into the simultaneous-move version of the stage game once we replace Low Price with $H$, Normal Price with $L$, Out with $T$, and Enter with $N$.

\paragraph{Monetary Policy:} Player $1$ is a central bank, that interacts with a continuum of households (player $2$s), each has negligible mass.
  In every period, the central bank chooses the inflation level while at the same time, households form their expectations about inflation. To simplify matters, I assume that both actual inflation and expected inflation are binary variables. In line with the classic work of Barro (1986), players' stage game payoffs are:
\begin{center}
\begin{tabular}{| c | c | c |}
  \hline
  - & Low Expectation & High Expectation \\
  \hline
  Low Inflation & $1-\theta,x_1$ & $-d(\theta),-y_1$ \\
  \hline
  High Inflation & $1,-y_2$ & $0,x_2$ \\
  \hline
\end{tabular}
\end{center}
where $x_1,x_2,y_1,y_2>0$ are parameters,
$\theta \in \Theta \subset (0,1)$ is the central bank's private information.
This game maps into the simultaneous-move version of my stage game.

To interpret these payoffs, households want to match their expectations with the actual inflation. The central bank's payoff decreases with the actual inflation and increases with the amount of surprised inflation (defined as actual inflation minus expected inflation). As argued in Barro (1986), the central bank can strictly benefit from surprised inflation as it can  increase real economic activities, decrease unemployment rate and increase governmental revenue.
How the central bank trades-off these benefits with the costs of inflation is captured by $\theta$, which depends on the central banker's ideology and tends to be her persistent private information. The assumption  that $\theta <1$ implies that inflation is costly for the central bank if it is fully anticipated by households.
\appendix
\newpage
\section{Proof of Theorem 1: Constructing High-Payoff Equilibria}
\paragraph{Defining Constants:} Recall the definitions of $\gamma^*$, $v^H$, $v^N$, $v^L$, and $v(\gamma)$. There
exists a rational number $\widehat{n}/\widehat{k} \in (\gamma^*,\gamma)$ with $\widehat{n},\widehat{k} \in \mathbb{N}$. Hence, there exists an integer $j \in \mathbb{N}$ such that
\begin{equation*}
  \frac{\widehat{n}}{\widehat{k}} =\frac{\widehat{n}j}{\widehat{k}j} <  \frac{\widehat{n}j}{\widehat{k}j-1} < \gamma.
\end{equation*}
Let $n \equiv \widehat{n}j$ and $k \equiv \widehat{k}j$.
Let $\delta \in (0,1)$ be large enough such that:
\begin{equation}
        \frac{\delta+\delta^2+...+\delta^n}{\delta+\delta^2+...+\delta^k} <
        \widetilde{\gamma}
         <\frac{\delta^{k-n-1}(\delta+\delta^2+...+\delta^n)}{\delta+\delta^2+...+\delta^{k-1}}.
\end{equation}
Later on in the proof, I impose two other requirements on $\delta$, given by (\ref{A.25}). These are compatible with (A.2) since all of these requirements are satisfied when $\delta$ is above some cutoff.
Let
\begin{equation}
    \widetilde{\gamma} \equiv
       \frac{1}{2} \Big(
        \frac{n}{k}+\frac{n}{k-1}
        \Big) \textrm{ and }
    \widehat{\gamma} \equiv       \frac{1}{2} \Big(
        \frac{n}{k}+\gamma^*
        \Big).
\end{equation}
By construction, $\gamma^* <\widehat{\gamma}<\frac{n}{k}< \widetilde{\gamma} <\frac{n}{k-1}<\gamma$.
Let $\pi_j$ be the prior probability of type $\theta_j$.
For every $j \geq 3$, let
$k_j \in \mathbb{N}$ be large enough such that:
\begin{equation}\label{A.5}
   (1-\gamma^* \pi_1) \frac{(\pi_j/k_j)}{\sum_{n=2}^{k} \pi_n +(\pi_j/k_j)}
    \leq 1-\gamma^*.
\end{equation}
Let $K \equiv \sum_{j=3}^m k_j$.
Let $\eta^* \in [\gamma^* \pi_1,\pi_1)$ be large enough such that for every $\eta \in [\eta^*,\pi_1]$, we have:
\begin{equation}\label{A.6}
    \frac{\pi_1-\eta}{\pi_1 (1-\eta)}
    \leq \min_{j \in \{3,...,m\}}
    \Big\{
    \frac{\pi_j/k_j }{\pi_2+...+\pi_j}
    \Big\}
\end{equation}
Let $\lambda \in (0,\frac{1-\sqrt{\gamma^*}}{\gamma^*})$ be small enough such that:
\begin{equation}\label{A.8}
    \big(1-\lambda \gamma^*\big)^{1-\widehat{\gamma}} \big(1+\lambda (1-\gamma^*)\big)^{\widehat{\gamma}} > 1.
\end{equation}

\paragraph{State Variables:} The constructed equilibrium keeps track of three state variables:
\begin{enumerate}
  \item $\eta(h^t)$, which is the probability P2's belief attaches to type $\theta_1$ at $h^t$.
  \item P1's continuation value at $h^t$, $v(h^t) \equiv \{v_{j}(h^t)\}_{j=1}^m$. I show in Appendix A.4.1 that $v(h^t)$ can be written as
  $v(h^t) \equiv p^L(h^t) v^L +p^H (h^t) v^H +p^N (h^t) v^N$.
  \item $\overline{\theta}(h^t)$, which is the highest cost type in the support of P2's belief at $h^t$, and its probability.
\end{enumerate}
The third state variable is implied by the first one when there are \textit{only two types} in the support of buyers' prior belief.
I describe players' actions and the evolution of P1's continuation value at \textit{on-path histories}.
At \textit{off-path histories},
P2 plays $N$ and every type of P1 plays $L$.
I partition the set of on-path histories into three classes:
\begin{itemize}
  \item \textbf{Class 1 Histories:} $h^t$ is such that $p^L(h^t)  \geq 1-\delta$.
  \item \textbf{Class 2 Histories:} $h^t$ is such that $p^L(h^t) \in (0,1-\delta)$.
  \item \textbf{Class 3 Histories:} $h^t$ is such that $p^L(h^t)=0$.
\end{itemize}
Play starts from a Class 1 history $h^0$ and eventually reaches some Class 3 histories. Class 3 histories are absorbing in the sense that if $p^L(h^t)=0$, then $p^L(h^s)=0$ for all $h^s \succeq h^t$.
Active learning about P1's type happens at Class 1 and Class 2 histories, but stops after reaching Class 3 histories.

\subsection{Class 1 Histories}\label{subB.1}
\paragraph{Players' Actions:} At every $h^t$ that satisfies
$p^L(h^t) \geq 1-\delta$:
\begin{itemize}
  \item Player $2$ plays $T$ for sure.
  \item Type $\theta_1$  plays $H$ with probability:
\begin{equation}\label{A.1}
 \frac{\eta(h^t)-\eta(h^t,L)}{\eta(h^t,H)-\eta(h^t,L)} \cdot \frac{\eta(h^t,H)}{\eta(h^t)},
\end{equation}
and other types in the support of P2's belief play $H$ with the same probability, equal to:
\begin{equation}\label{A.2}
    \frac{\eta(h^t)-\eta(h^t,L)}{\eta(h^t,H)-\eta(h^t,L)} \cdot \frac{1-\eta(h^t,H)}{1-\eta(h^t)},
\end{equation}
where the posterior beliefs $\eta(h^t,H)$ and $\eta(h^t,L)$ are functions of $\eta(h^t)$, given by:
 \begin{equation}\label{A.3}
 \eta (h^t,H) = \eta^*+ \min \Big\{1-\eta^*, \big(1+\lambda (1-\gamma^*) \big) \big(\eta (h^t)-\eta^* \big) \Big\},
\end{equation}
\begin{equation}\label{A.4}
 \eta (h^t,L)  = \eta^* + (1-\lambda \gamma^*) (\eta (h^t)-\eta^*),
 \end{equation}
 with $\eta^*$ a constant that satisfies (\ref{A.6}), and $\lambda$ is a constant that satisfies
(\ref{A.8}).
\end{itemize}
One can use (\ref{A.1}), (\ref{A.2}), (\ref{A.3}), and (\ref{A.4}) to write the probability that each type of player $1$ playing $H$ at $h^t$ as a function of $\eta(h^t)$, i.e., P1's action at Class 1 histories only depends on P2's belief about her being type $\theta_1$.

\paragraph{P1's Continuation Value:} For every $h^t$ that satisfies
$p^L(h^t) \geq 1-\delta$:
\begin{enumerate}
  \item If P1 plays $L$ at $h^t$, then his continuation value is:
  \begin{equation}\label{B.9}
v(h^t,L)= \frac{p^N(h^t)}{\delta} v^N +  \frac{p^L(h^t)-(1-\delta)}{\delta} v^L + \frac{p^H(h^t)}{\delta} v^H.
\end{equation}
  \item If $h^t$ is such that $\eta(h^t,H)<1$, P1's continuation value after playing $H$ at $h^t$ is:
\begin{equation}\label{B.10}
v(h^t,H)=\frac{p^N(h^t)}{\delta} v^N  + \frac{p^L(h^t)}{\delta}v^L+ \frac{p^H(h^t)-(1-\delta)}{\delta} v^H.
\end{equation}
If $h^t$ is such that $\eta(h^t,H)=1$, P1's continuation value after playing $H$ at $h^t$ is:
\begin{equation}\label{B.11}
  v(h^{t},H) = \frac{v_{1}(h^{t},H)}{1-\theta_1} v^H +\Big(1-\frac{v_{1}(h^{t},H)}{1-\theta_1}\Big)v^N \in \mathbb{R}^m,
\end{equation}
\begin{equation}\label{B.12}
 \textrm{ with }  v_1(h^{t},H) \equiv \frac{v_1(h^{t})-(1-\delta)(1-\theta_1)}{\delta} \textrm{ and }
 v_1(h^t)\in \mathbb{R} \textrm{ is the first entry of } v(h^t).
\end{equation}
\end{enumerate}
\paragraph{Players' Incentives:} I verify players' incentive constraints at Class 1 histories:
\begin{enumerate}
\item If $p^L(h^t) \geq 1-\delta$ and $\eta(h^t,H)<1$, then according to (\ref{B.9}) and (\ref{B.10}),
all types of P1 are indifferent between playing $H$ and $L$ at $h^t$.
\item If $p^L(h^t) \geq 1-\delta$ and $\eta(h^t,H)=1$, then according to (\ref{B.9}) and  (\ref{B.11}),
type $\theta_1$ is indifferent between $H$ and $L$ at $h^t$, and other types in the support of P2's belief strictly prefer to play $L$ at $h^t$.
\item If P2's beliefs are updated according to
(\ref{A.3}) and (\ref{A.4}), then
$H$ is played at $h^t$ with probability at least $\gamma^*$, i.e., P2 has an incentive to play $T$
 at $h^t$. This is derived in section A.4.
\end{enumerate}
Belief updating formulas (\ref{A.3}) and (\ref{A.4}), together with (\ref{A.8}) lead to the following lemma:
\begin{Lemma}\label{LA.1}
For every $\underline{\eta} \in (\eta^*,1)$,
there exist $T \in \mathbb{N}$ and $\underline{\delta} \in (0,1)$,
s.t. when $\eta (h^r) \geq \underline{\eta}$ and $\delta>\underline{\delta}$,
if $h^t \equiv (y_0,...,y_{t-1}) \succ h^r$ and all histories between $h^r$ and $h^t$ belong to Class 1, then:
\begin{equation}\label{A.9}
    \underbrace{(1-\delta) \sum_{s=r}^{t-1} \delta^{s-r} \mathbf{1}\{y_s=H\}}_{\textrm{weight of $(T,H)$ played from $r$ to $t$}} \leq \underbrace{(1-\delta^T)}_{\textrm{weight of initial $T$ periods}}
    +   \underbrace{ (1-\delta)\sum_{s=r}^{t-1} \delta^{s-r} \mathbf{1}\{y_s=L\}}_{\textrm{weight of $(T,L)$ played from $r$ to $t$}}
    \cdot
    \frac{\widetilde{\gamma}}{1-\widetilde{\gamma}}.
\end{equation}
\end{Lemma}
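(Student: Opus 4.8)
The plan is to reduce (\ref{A.9}) to a simple multiplicative law for player $2$'s belief and then pass from undiscounted counts to discounted weights by summation by parts. First I would track the single state $x_s\equiv\eta(h^s)-\eta^*$, which is strictly positive at every Class $1$ history. Along a chain $h^r\prec\cdots\prec h^t$ lying entirely in Class $1$, player $2$ plays $T$, so every outcome $y_s$ is $H$ or $L$, and the updating rules (\ref{A.3})--(\ref{A.4}) act multiplicatively: $x_{s+1}=\big(1+\lambda(1-\gamma^*)\big)x_s$ after $H$ (whenever the cap in (\ref{A.3}) is not reached) and $x_{s+1}=(1-\lambda\gamma^*)x_s$ after $L$. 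Writing $a\equiv\log\big(1+\lambda(1-\gamma^*)\big)>0$, $b\equiv-\log(1-\lambda\gamma^*)>0$, and letting $N_H(\tau)$, $N_L(\tau)$ count the $H$'s and $L$'s along $h^r\prec\cdots\prec h^\tau$, taking logarithms gives $\log x_\tau-\log x_r=aN_H(\tau)-bN_L(\tau)$. Since $x_r\geq\underline{\eta}-\eta^*$ and $x_\tau\leq1-\eta^*$, the right-hand side never exceeds the constant $C\equiv\log(1-\eta^*)-\log(\underline{\eta}-\eta^*)$, which depends only on $\underline{\eta}$ and $\eta^*$.

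Next I would read condition (\ref{A.8}) in logarithmic form as $b/a<\widehat{\gamma}/(1-\widehat{\gamma})$, and since $\widehat{\gamma}<\widetilde{\gamma}$ and $x\mapsto x/(1-x)$ is increasing, deduce $b/a<\widetilde{\gamma}/(1-\widetilde{\gamma})$. Combining this with $N_L\geq0$ and the bound above yields the crucial \emph{uniform prefix estimate}: setting $S_\tau\equiv N_H(\tau)-\tfrac{\widetilde{\gamma}}{1-\widetilde{\gamma}}N_L(\tau)$ with $S_r=0$, one has $S_\tau\leq C/a$ for every $\tau\in\{r,\dots,t-1\}$. The one delicate point is the cap in (\ref{A.3}): if $H$ ever reaches it, $\eta$ jumps to $1$ and the successor is a Class $3$ history, so the cap can bind only at the terminal step $s=t-1$; this inflates the count by at most one, so setting $M\equiv C/a+1$ gives $S_\tau\leq M$ for every $\tau\in\{r,\dots,t\}$.

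Finally I would convert this undiscounted, uniform-in-$\tau$ estimate into the discounted conclusion by Abel summation. Putting $D_s\equiv\mathbf{1}\{y_s=H\}-\tfrac{\widetilde{\gamma}}{1-\widetilde{\gamma}}\mathbf{1}\{y_s=L\}=S_{s+1}-S_s$, summation by parts gives $\sum_{s=r}^{t-1}\delta^{s-r}D_s=\delta^{t-1-r}S_t+(1-\delta)\sum_{u=r+1}^{t-1}\delta^{u-1-r}S_u$. Bounding each $S_u$ by $M$ and using the identity $\delta^{t-1-r}+(1-\delta)\sum_{u=r+1}^{t-1}\delta^{u-1-r}=1$ shows $\sum_{s=r}^{t-1}\delta^{s-r}D_s\leq M$, i.e. $(1-\delta)\sum_{s=r}^{t-1}\delta^{s-r}D_s\leq(1-\delta)M$; rearranging, this is exactly (\ref{A.9}) with $(1-\delta)M$ playing the role of $1-\delta^T$. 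To close, I would fix any integer $T>M$ and note $1-\delta^T=(1-\delta)\big(1+\delta+\cdots+\delta^{T-1}\big)$, whose second factor tends to $T>M$ as $\delta\uparrow1$; hence there is $\underline{\delta}\in(0,1)$ with $1+\delta+\cdots+\delta^{T-1}\geq M$ for all $\delta>\underline{\delta}$, giving $1-\delta^T\geq(1-\delta)M$ and establishing (\ref{A.9}).

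I expect the main obstacle to be this last step: a naive bound on the undiscounted excess of $H$ over $L$ does not by itself control the discounted weight, since early $H$'s carry disproportionate discount mass. The resolution is to bound every \emph{prefix} $S_\tau$ uniformly and feed these into summation by parts, where the telescoping identity collapses the discount factors to $1$; the cap in (\ref{A.3}) is a minor but genuine bookkeeping point that must be confined to the final period.
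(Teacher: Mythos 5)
Your proof is correct, and it takes a genuinely different route from the paper's. The paper proves Lemma \ref{LA.1} by induction on the number of periods in which $L$ is played: supposing some history violates (\ref{A.9}), it first establishes a tail version of the inequality, then shows that every $k$ consecutive periods must contain at least $n+1$ plays of $H$, and finally applies an adjacent-swap operator $\Omega_t$ to front-load the $H$'s and contradict the induction hypothesis on the count of $L$'s. You replace this combinatorial machinery with two observations: (i) along a Class 1 chain the quantity $\log\big(\eta(h^s)-\eta^*\big)$ is an exact additive potential ($+a$ per $H$, $-b$ per $L$), with the cap in (\ref{A.3}) correctly confined to the terminal step because cap-binding sends play to Class 3; hence boundedness of beliefs together with (\ref{A.8}), read as $b/a<\widehat{\gamma}/(1-\widehat{\gamma})<\widetilde{\gamma}/(1-\widetilde{\gamma})$, yields the uniform prefix bound $S_\tau\le C/a$; and (ii) Abel summation, whose nonnegative coefficients telescope to total mass exactly one, converts those prefix bounds into the discounted inequality. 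This is precisely the right cure for the front-loading problem that the paper's swapping step is designed to handle: early $H$'s carry disproportionate discount mass, and it is the bound on \emph{every} prefix, not on the terminal count alone, that rules this out. What your route buys is a much shorter verification, explicit constants ($T$ can be any integer exceeding $C/a+1$, and $\underline{\delta}$ comes from a one-line comparison of $1-\delta^T$ with $(1-\delta)(C/a+1)$), and a clean separation between the undiscounted belief arithmetic and the role of discounting. The paper's induction is heavier and generates intermediate combinatorial facts along the way, but the downstream uses of the lemma (in Lemmas \ref{LA.4} and \ref{LA.5}) invoke only the stated inequality (\ref{A.9}), so your argument could be substituted without loss.
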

The proof is in Appendix A.5. For some intuition,
given the belief updating formulas (\ref{A.3}) and (\ref{A.4}),
player $2$'s posterior belief at $h^t$ depends only on her belief at $h^r$ and the number of times $H$ and $L$ have been played from period $r$ to $t$.
Since the choice of $\lambda$ satisfies the first inequality in (\ref{A.8}),
if player $1$ plays $H$ with  (undiscounted) frequency above $\widehat{\gamma}$, then player $2$'s belief at $h^t$ attaches higher probability to type $\theta_1$ compared to her belief at $h^r$. When P2's belief at $h^r$ attaches probability more than $\underline{\eta}$ to type $\theta_1$,
her posterior attaches probability $1$ to type $\theta_1$ before period $r+S$, where:
\begin{equation}\label{A.S}
S \equiv \Big\lceil  \frac{\log \frac{1-\eta^*}{\underline{\eta}-\eta^*}}{\displaystyle \log \Big\{ (1-\lambda \gamma^* )^{1-\widehat{\gamma}} (1+\lambda (1-\gamma^*))^{\widehat{\gamma}} \Big\} } \Big\rceil,
\end{equation}
after which P2's belief about type $\theta_1$ reaches $1$, and
the convex weight of $v^L$ equals $0$
according to (\ref{B.11}).

The requirement that all histories from $h^r$ to $h^t$ belonging to Class 1
not only leads to an upper bound on the \textit{undiscounted} frequency with which
$(T,H)$ being played from $r$ to $t$,
but also imposes constraints on how frontloaded outcome $(T,H)$ can be. For example, after P1 plays $H$ in the first
\begin{equation}\label{A.T}
    T \equiv \Big\lceil
    \frac{\log \frac{1}{\pi_1}}{\log \big(1+\lambda (1-\gamma^*)\big)}
    \Big\rceil
\end{equation}
periods,  P2's belief about type $\theta_1$ reaches $1$, and
the convex weight of $v^L$ equals $0$
according to (\ref{B.11}).
If $\delta$ is large enough, then the constraint on undiscounted frequency and the constraint on frontloadedness of outcome $(T,H)$
lead to an upper bound on the \textit{discounted
frequency} with which outcome
$(T,H)$ occurs from $r$ to $t$, with
$\widehat{\gamma}$ being replaced by a larger $\widetilde{\gamma}$
to provide extra slack caused by the discount factor $\delta$.

I apply Lemma A.1 by setting $h^r=h^0$ and $\underline{\eta}=\eta(h^0)$.
If $h^t$ and all its predecessors belong to Class 1, then:
  \begin{equation*}
  (1-\delta)\sum_{s=0}^{t-1} \delta^s \mathbf{1}\{y_s=(T,L)\} \leq p^L(h^0)=\frac{(1-\theta_1)(1-\gamma)}{1-\gamma \theta_1}.
  \end{equation*}
Lemma A.1 leads to an upper bound on $(1-\delta)\sum_{s=0}^{t-1} \delta^s \mathbf{1}\{y_s=(T,H)\}$,
which implies that if $\delta$ is large enough, then
\begin{equation}\label{A.10}
p^H(h^t) \geq     Y \equiv \frac{1}{2}
  \underbrace{  \Big(
    \gamma
    -(1-\gamma) \frac{\widetilde{\gamma}}{1-\widetilde{\gamma}}
    \Big)}_{>0}
    \frac{1-\theta_1}{1-\gamma \theta_1},
\end{equation}
for every $h^t$ such that $h^t$ and all its predecessors belonging to Class 1.

\subsection{Class 2 Histories}\label{subB.2}
\paragraph{Players' Actions:} If $h^t$ is such that $p^L(h^t) \in (0, 1-\delta)$, then at $h^t$,
\begin{itemize}
\item[1.] Player $2$ plays $T$ for sure.
\item[2.] Types in the support of P2's belief at $h^t$ \textit{except for} type $\overline{\theta}(h^t)$
  play $H$ for sure.
  Type $\overline{\theta}(h^t)$ potentially mixes between $H$ and $L$, with probabilities specified below.
\end{itemize}
 Let
\begin{equation}\label{A.13}
    l(h^t) \equiv \# \Big\{
    h^s \Big| h^s\prec h^t, h^s \textrm{ belongs to Class 2, and } \overline{\theta}(h^s)=\overline{\theta}(h^t)
    \Big\}
\end{equation}
be the number of histories that (1) strictly precede $h^t$, and (2) the highest-cost type in the support of P2's belief is also
$\overline{\theta}(h^t)$. Consider two cases separately, depending on whether $\overline{\theta}(h^t)$ is $\theta_2$ or not.
\begin{enumerate}
  \item If $\overline{\theta}(h^t)=\theta_j$ with $j \geq 3$, then
type $\overline{\theta}(h^t)$ plays $L$ at $h^t$ with probability
\begin{equation}\label{A.14}
    \frac{1}{k_j-l(h^t)},
\end{equation}
in which $k_j$ is the integer defined in (\ref{A.5}).
  \item If $\overline{\theta}(h^t)=\theta_2$, then
type $\overline{\theta}(h^t)$ plays $L$ at $h^t$ with probability
\begin{equation}\label{B.19}
    \min\{1, \frac{1-\gamma^*}{1-\eta(h^t)}\}.
\end{equation}
\end{enumerate}

\paragraph{P1's Continuation Value:} After player $1$ plays $L$ at $h^t$, P1's continuation value is
  \begin{equation}\label{A.11}
v(h^t,L) \equiv   \frac{Q(h^t)}{\delta} v^H + \frac{\delta-Q(h^t)}{\delta} v^N,
  \end{equation}
  where
  \begin{equation}\label{A.12}
    Q(h^t) \equiv p^H(h^t) - \frac{1-\delta-p^L(h^t)}{1-\overline{\theta}(h^t)}
  \end{equation}
After player $1$ plays $H$ at $h^t$, his continuation value depends on whether $\eta(h^t,H)$ equals $1$ or not, with
$\eta(h^t,H)$ computed via Bayes Rule given P2's belief at $h^t$ and type $\overline{\theta}(h^t)$'s mixing probability at $h^t$:
\begin{enumerate}
  \item If $\eta(h^t,H)<1$, then P1's continuation payoff at $(h^t,H)$, denoted by
  $v(h^t,H)$, is given by (\ref{B.10}).
  \item If $\eta(h^t,H)=1$, then
  P1's continuation payoff at $(h^t,H)$, denoted by
  $v(h^t,H)$, is given by (\ref{B.11}).
\end{enumerate}
By construction of player $1$'s equilibrium actions, it is clear that $\eta(h^t,H)=1$ at Class 2 history $h^t$ only when $\overline{\theta}(h^t)=\theta_2$. This is because when $\overline{\theta}(h^t)> \theta_3$, type $\theta_2$ plays $H$ at $h^t$ for sure, and $\eta(h^t,H)<1$.
\paragraph{Players' Incentives:} Lemma A.2 states that players' incentive constraints at Class 2 histories are satisfied.
\begin{Lemma}\label{LA.2}
At every Class 2 history $h^t$,
\begin{enumerate}
  \item P2 has an incentive to play $T$.
  \item If $h^t$ is such that $\eta(h^t,H)<1$, then type $\overline{\theta}(h^t)$ is indifferent between $H$ and $L$ at $h^t$, and types that have strictly lower cost than $\overline{\theta}(h^t)$ strictly prefer to play $H$ at $h^t$.
  \item If $h^t$ is such that $\eta(h^t,H)=1$, then type $\overline{\theta}(h^t)$ strictly prefers to play $L$ at $h^t$, and types that have strictly lower cost than $\overline{\theta}(h^t)$ strictly prefer to play $H$ at $h^t$.
\end{enumerate}
\end{Lemma}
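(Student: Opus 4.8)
The plan is to verify the three claims separately, treating player $1$'s incentives (parts 2 and 3) by a direct single-crossing computation and player $2$'s incentive (part 1) by bounding the posterior probability of outcome $L$. First I would write down, for each type $\theta_i$ in the support, her total discounted payoff from playing $L$ versus $H$ at $h^t$, combining the stage payoff with the prescribed continuation values. Using $v^N_i=0$, $v^H_i=1-\theta_i$, $v^L_i=1$ together with (\ref{A.11})--(\ref{A.12}) and (\ref{B.10}), the payoff from $L$ equals $(1-\delta)+Q(h^t)(1-\theta_i)$ and, when $\eta(h^t,H)<1$, the payoff from $H$ equals $p^L(h^t)+p^H(h^t)(1-\theta_i)$.

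The key step for part 2 is to subtract these and simplify. Substituting the definition of $Q(h^t)$ from (\ref{A.12}) gives
\[
\big[(1-\delta)+Q(h^t)(1-\theta_i)\big]-\big[p^L(h^t)+p^H(h^t)(1-\theta_i)\big]
=\big(1-\delta-p^L(h^t)\big)\,\frac{\theta_i-\overline{\theta}(h^t)}{1-\overline{\theta}(h^t)}.
\]
Since $h^t$ is a Class 2 history, $1-\delta-p^L(h^t)>0$, so the sign of this difference is the sign of $\theta_i-\overline{\theta}(h^t)$: it vanishes for the top type $\theta_i=\overline{\theta}(h^t)$, which is exactly how $Q(h^t)$ was calibrated, and is strictly negative for every lower type $\theta_i<\overline{\theta}(h^t)$, giving strict preference for $H$. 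This is the single-crossing logic, reflecting that lower-cost types have a comparative advantage in $H$.

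For part 3 I would isolate the effect of replacing (\ref{B.10}) by (\ref{B.11}) on the $H$-payoff, recalling that $\eta(h^t,H)=1$ forces $\overline{\theta}(h^t)=\theta_2$. Writing $v_1(h^t)=p^L(h^t)+p^H(h^t)(1-\theta_1)$, the $H$-payoff of type $\theta_i$ becomes $(1-\theta_i)\,v_1(h^t)/(1-\theta_1)$, which differs from the expression in part 2 by $p^L(h^t)\,(\theta_1-\theta_i)/(1-\theta_1)$. This correction is zero for $\theta_1$, so $\theta_1$ still strictly prefers $H$ as in part 2, and strictly negative for $\theta_2$, which turns the boundary indifference of the top type into a strict preference for $L$. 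Thus part 3 follows from part 2 together with this single comparison.

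The main obstacle is part 1 for a top type $\theta_j$ with $j\geq 3$, where constants (\ref{A.5}) and (\ref{A.6}) enter. Since every type other than $\overline{\theta}(h^t)$ plays $H$ for sure, player $2$'s incentive reduces to showing that the posterior probability of $L$, namely $\beta_j(h^t)/\big(k_j-l(h^t)\big)$ with $\beta_j(h^t)$ the posterior weight on $\theta_j$, never exceeds $1-\gamma^*$; for $\overline{\theta}(h^t)=\theta_2$ this is immediate from the cap $\min\{1,(1-\gamma^*)/(1-\eta(h^t))\}$ in (\ref{B.19}). The plan for $j\geq 3$ is to track the belief through the ``peeling from the top'' dynamics: the conditional weight of $\theta_j$ among the non-$\theta_1$ types weakly below it is preserved at every Class 1 history, because all non-$\theta_1$ types mix identically there, and during the removal of any strictly higher type, so it still equals its prior value at the instant $\theta_j$ first becomes the top type. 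A telescoping Bayes computation over the at most $k_j$ periods during which $\theta_j$ is peeled then bounds $\beta_j(h^t)/\big(k_j-l(h^t)\big)$ by the quantity that (\ref{A.5}) forces below $1-\gamma^*$, with (\ref{A.6}) controlling the feedback of $\theta_1$'s updating on these conditional weights. The delicate part is this bookkeeping of how $\beta_j$ moves across interleaved Class 1, Class 2, and higher-peeling histories; the single-crossing calculations for parts 2 and 3 are routine by comparison.
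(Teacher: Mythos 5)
Your proposal is correct and follows essentially the same route as the paper: parts 2 and 3 are exactly the paper's verification via the constructed continuation values (your explicit difference $\big(1-\delta-p^L(h^t)\big)\frac{\theta_i-\overline{\theta}(h^t)}{1-\overline{\theta}(h^t)}$ and the correction term $p^L(h^t)\frac{\theta_1-\theta_i}{1-\theta_1}$ are just the computations the paper leaves implicit when citing (\ref{B.10}), (\ref{B.11}), (\ref{A.11}) and (\ref{A.12})), and your ``peeling'' bookkeeping for part 1 is precisely the paper's Claim that $\pi(h^t)[\theta_j]/\pi(h^t)[\theta_i]=\frac{\pi_j}{\pi_i}\cdot\frac{k_j-l(h^t)}{k_j}$, combined with the bound $\eta(h^t)\geq\eta^*\geq\gamma^*\pi_1$ and the definition of $k_j$ in (\ref{A.5}).
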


\paragraph{Properties of Class 2 Histories:} I state three properties of Class 2 histories, all of which are shown in section \ref{subB.3}. Lemma \ref{LA.3} establishes a lower bound on P2's posterior belief after observing $H$ at any Class 2 history.
\begin{Lemma}\label{LA.3}
For any Class 2 history $h^t$.
\begin{itemize}
  \item If $\overline{\theta}(h^t)\geq \theta_3$, then $\eta(h^t,H) \geq \eta(h^0)$  and $\eta(h^t,L) =0$.
  \item If $\overline{\theta}(h^t)=\theta_2$, then $\eta(h^t,H) = \min \{1,\frac{\eta(h^t)}{\gamma^*}\}$ and $\eta(h^t,L) =0$.
\end{itemize}
\end{Lemma}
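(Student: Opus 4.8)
The plan is to read both posteriors directly off Bayes' rule applied to the prescribed Class~2 behavior, and then to reduce the one substantive inequality---$\eta(h^t,H)\ge\eta(h^0)$ in the case $\overline{\theta}(h^t)\ge\theta_3$---to the defining inequality (\ref{A.6}) of $\eta^*$.

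I would first dispose of $\eta(h^t,L)=0$, which holds in both cases for the same reason: at any Class~2 history only the highest-cost type $\overline{\theta}(h^t)$ plays $L$ with positive probability, while every lower-cost type, in particular $\theta_1$, plays $H$ for sure; hence $L$ is incompatible with $\theta_1$ and carries posterior weight $0$ on it. For the second bullet, $\overline{\theta}(h^t)=\theta_2$, the support is $\{\theta_1,\theta_2\}$ with weights $\eta(h^t)$ and $1-\eta(h^t)$, type $\theta_1$ plays $H$ for sure, and $\theta_2$ plays $L$ with probability $\min\{1,\frac{1-\gamma^*}{1-\eta(h^t)}\}$. When $\eta(h^t)\le\gamma^*$ this is $\frac{1-\gamma^*}{1-\eta(h^t)}$, so the total probability of $H$ equals $\eta(h^t)+(1-\eta(h^t))\frac{\gamma^*-\eta(h^t)}{1-\eta(h^t)}=\gamma^*$ and $\eta(h^t,H)=\eta(h^t)/\gamma^*\le1$; when $\eta(h^t)>\gamma^*$ type $\theta_2$ plays $L$ for sure, $H$ fully reveals $\theta_1$, and $\eta(h^t,H)=1$. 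The two branches combine into $\min\{1,\eta(h^t)/\gamma^*\}$.

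The heart of the lemma is $\overline{\theta}(h^t)=\theta_j$ with $j\ge3$, where the support is $\{\theta_1,\dots,\theta_j\}$, types $\theta_1,\dots,\theta_{j-1}$ play $H$ for sure, and $\theta_j$ plays $L$ with probability $\frac{1}{k_j-l(h^t)}$. Writing $\pi_i(h^t)$ for the period-$t$ belief on $\theta_i$, the probability of $H$ is $1-\pi_j(h^t)/(k_j-l(h^t))$, so $\eta(h^t,H)=\frac{\eta(h^t)}{1-\pi_j(h^t)/(k_j-l(h^t))}$. The decisive computation is to track the odds $\pi_j(h^t)/\pi_1(h^t)$ across the $\theta_j$-phase: observing $H$ when $l=l(h^s)$ multiplies these odds by $\frac{k_j-l-1}{k_j-l}$, so after the $s\equiv l(h^t)$ outcomes since the phase was entered at $\widehat{h}_j$ (where $l=0$) the product telescopes to $\frac{k_j-s}{k_j}$. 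This gives the identity $\frac{\pi_j(h^t)}{k_j-s}=\eta(h^t)\,c_j$ with $c_j\equiv\frac{\pi_j(\widehat{h}_j)}{\eta(\widehat{h}_j)\,k_j}$ constant in $s$, whence $\eta(h^t,H)=\frac{\eta(h^t)}{1-c_j\eta(h^t)}$, an increasing function of $\eta(h^t)$.

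It then remains to bound $\eta(h^t)$ from below and to evaluate $c_j$. Since each on-path $H$ strictly raises $\eta$ throughout Class~2, I have $\eta(h^t)\ge\eta(\widehat{h}_j)$; and since types $\theta_2,\dots,\theta_j$ have all pooled on $H$ up to $\widehat{h}_j$, their relative proportions are still those of the prior, so $\pi_j(\widehat{h}_j)=(1-\eta(\widehat{h}_j))\frac{\pi_j}{\pi_2+\dots+\pi_j}$. Inserting this and using monotonicity, the target $\eta(h^t,H)\ge\pi_1=\eta(h^0)$ reduces, after clearing denominators, to
\begin{equation*}
\frac{\pi_j/k_j}{\pi_2+\dots+\pi_j}\ \ge\ \frac{\pi_1-\eta(\widehat{h}_j)}{\pi_1\big(1-\eta(\widehat{h}_j)\big)},
\end{equation*}
which is exactly (\ref{A.6}) at $\eta=\eta(\widehat{h}_j)$ (and holds trivially when $\eta(\widehat{h}_j)>\pi_1$, the right side then being negative). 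I expect the main obstacle to be the bookkeeping that licenses this final step: one must confirm that $\eta(\widehat{h}_j)\in[\eta^*,1)$, so that (\ref{A.6}) applies---this follows from the induction behind the belief updates (\ref{A.3})--(\ref{A.4}) giving $\eta>\eta^*$ at every Class~1 and Class~2 history---and that before the $\theta_j$-phase begins every higher type $\theta_{j+1},\dots,\theta_m$ has already been driven to probability $0$ while every surviving type below $\theta_j$ has pooled on $H$, which is what keeps the proportions among $\theta_2,\dots,\theta_j$ at their prior values.
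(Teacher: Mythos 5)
Your disposal of $\eta(h^t,L)=0$ and your treatment of the case $\overline{\theta}(h^t)=\theta_2$ are correct and coincide with the paper's proof. The gap is in the heart of the lemma, the case $\overline{\theta}(h^t)=\theta_j$ with $j\ge 3$: both your telescoping identity for the odds $\pi_j(h^t)/\eta(h^t)$ and your monotonicity claim $\eta(h^t)\ge\eta(\widehat{h}_j)$ treat the Class 2 histories of the $\theta_j$-phase as if they were consecutive. They need not be. After $H$ is played at a Class 2 history, the weight $p^L$ jumps from $p^L(h^t)$ to $p^L(h^t)/\delta$ (see (\ref{B.10})), which is at least $1-\delta$ whenever $p^L(h^t)\ge\delta(1-\delta)$, so play re-enters Class 1 before the next Class 2 history arrives; this is exactly the pattern analyzed in Step 2 of the proof of Lemma \ref{LA.4}. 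At those intervening Class 1 histories, type $\theta_1$'s mixing probability (\ref{A.1}) strictly exceeds the common mixing probability (\ref{A.2}) of the other surviving types, so the odds $\pi_j/\pi_1$ \emph{do} move there (they shrink after each $H$ and grow after each $L$); hence your $c_j$ is not constant across the phase and the identity $\frac{\pi_j(h^t)}{k_j-s}=c_j\,\eta(h^t)$ fails. Worse, after an $L$ at an intervening Class 1 history, $\eta$ strictly falls by (\ref{A.4}), so $\eta(h^t)$ can lie strictly below $\eta(\widehat{h}_j)$, and your final appeal to (\ref{A.6}) evaluated at $\eta=\eta(\widehat{h}_j)$ does not cover the history $h^t$ at which the bound is actually needed.

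The repair, which is what the paper does, is to keep type $\theta_1$ out of the bookkeeping entirely. The only proportions that are invariant to intervening Class 1 histories are those among the non-$\theta_1$ types, because $\theta_2,\dots,\theta_m$ pool with one another at every Class 1 history and at every Class 2 history of an earlier phase; consequently, at any Class 2 history of the $\theta_j$-phase the support proportions of $\theta_2,\dots,\theta_j$ are $\pi_2:\cdots:\pi_{j-1}:\frac{k_j-l(h^t)}{k_j}\pi_j$ (this is the paper's Claim 1, and it is the part of your last paragraph that is right). Bayes rule then gives $\eta(h^t,H)=\frac{\eta(h^t)}{\eta(h^t)+(1-\eta(h^t))(1-X_{l(h^t)})}$ with $X_l\equiv\frac{\pi_j/k_j}{\pi_2+\cdots+\pi_{j-1}+\frac{k_j-l}{k_j}\pi_j}\ \ge\ \frac{\pi_j/k_j}{\pi_2+\cdots+\pi_j}$, an expression increasing in $\eta(h^t)$. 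Now split on $\eta(h^t)$ itself rather than on $\eta(\widehat{h}_j)$: if $\eta(h^t)\ge\eta(h^0)=\pi_1$ the conclusion is immediate, since observing $H$ raises the belief; if $\eta(h^t)<\pi_1$, the induction you already invoke gives $\eta(h^t)\ge\eta^*$, so (\ref{A.6}) applies at $\eta(h^t)$, and the same denominator-clearing algebra you performed yields $\eta(h^t,H)\ge\pi_1$. With that substitution—within-support proportions plus $\eta(h^t)\ge\eta^*$ in place of odds-telescoping plus $\eta(h^t)\ge\eta(\widehat{h}_j)$—your argument closes and becomes the paper's proof.
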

Lemma \ref{LA.4} establishes an upper bound on the number of Class 2 histories along every path of play.
\begin{Lemma}\label{LA.4}
There exist $\underline{\delta} \in (0,1)$ and $M \in \mathbb{N}$,
  such that when $\delta >\underline{\delta}$, the number of Class 2 histories along every path of equilibrium play is at most $M$.
\end{Lemma}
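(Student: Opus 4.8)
The plan is to bound, along any single path of equilibrium play, the number of Class 2 histories by decomposing them according to the value of the highest-cost type $\overline{\theta}(h^t)$ in the support of player $2$'s belief. Two structural facts organize the count. First, $\overline{\theta}(h^t)$ ranges over the finite set $\{\theta_2,\dots,\theta_m\}$ and is weakly decreasing along every path: at Class 1 histories all types $\theta_2,\dots,\theta_m$ play $H$ with the common probability in (\ref{A.2}), so their relative likelihoods---and hence $\overline{\theta}$---are unchanged; at Class 2 histories every type below $\overline{\theta}(h^t)$ plays $H$ for sure, so observing $H$ can only remove the top type and lower $\overline{\theta}$. Second, by (\ref{A.11}) the continuation value after $L$ at any Class 2 history is a convex combination of $v^H$ and $v^N$ alone, so $p^L(h^t,L)=0$ and $(h^t,L)$ is an absorbing Class 3 history. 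Thus, along a path, the Class 2 histories that share a fixed $\overline{\theta}$ are threaded together only by $H$-transitions (possibly routed through intervening Class 1 histories), and it suffices to bound their number for each fixed value of $\overline{\theta}$.

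For $\overline{\theta}(h^t)=\theta_j$ with $j\ge3$ this is immediate from the counter $l(h^t)$ of (\ref{A.13}). The successive Class 2 histories carrying $\overline{\theta}=\theta_j$ have $l(h^t)=0,1,2,\dots$, and by (\ref{A.14}) type $\theta_j$ plays $L$ with probability $1/(k_j-l(h^t))$, which reaches $1$ exactly when $l(h^t)=k_j-1$. At that history $\theta_j$ plays $L$ for sure while all lower types play $H$ for sure, so observing $H$ separates $\theta_j$ out of the support (lowering $\overline{\theta}$) and observing $L$ enters Class 3. Hence at most $k_j$ Class 2 histories along any path can carry $\overline{\theta}=\theta_j$. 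Because $l(h^t)$ counts all Class 2 predecessors with matching $\overline{\theta}$, this bound already absorbs any oscillation between Class 1 and Class 2.

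For $\overline{\theta}(h^t)=\theta_2$ there is no counter, and the plan is to use Lemma \ref{LA.3}, which gives $\eta(h^t,H)=\min\{1,\eta(h^t)/\gamma^*\}$: each $H$-transition that remains in the active-learning phase multiplies the reputation by $1/\gamma^*>1$, and once $\eta$ reaches $\gamma^*$ the posterior jumps to $1$ and play enters Class 3. Since $\eta(h^t)>\eta^*>0$ throughout the active-learning phase and $\eta(h^t)\le1$, only a constant number of such multiplications is possible. The hard part will be that two consecutive Class 2 histories carrying $\overline{\theta}=\theta_2$ need not be adjacent: play may detour through Class 1, where each $L$-step applies the contraction $\eta\mapsto\eta^*+(1-\lambda\gamma^*)(\eta-\eta^*)$ of (\ref{A.4}) and erodes the reputation back toward $\eta^*$, potentially undoing the gain. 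To close this gap I would follow $\eta$ across a full Class $2\to$ Class $1\to$ Class $2$ excursion and show it still grows by a factor bounded away from $1$ uniformly in $\delta$. The leverage is that entering Class 1 from Class 2 leaves $p^L$ only marginally above $1-\delta$ (it equals $p^L/\delta$ with $p^L<1-\delta$), so for large $\delta$ the Class 1 sojourn is short, while whenever play does linger there requirement (\ref{A.8}) and Lemma \ref{LA.1} force the frequency of $H$ to be high enough that the net belief movement favors $\theta_1$. Composing the guaranteed $1/\gamma^*$ gain of each Class 2 step with these facts should yield a uniform multiplicative gain $c>1$ per excursion, capping the number of Class 2 histories with $\overline{\theta}=\theta_2$ by a constant $R\equiv\lceil\log_{c}(1/\eta^*)\rceil$.

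Summing over the finitely many values of $\overline{\theta}$ yields the lemma with $M\equiv\sum_{j=3}^{m}k_j+R$, a quantity assembled only from the fixed constants $\gamma^*,\eta^*,\lambda$ and the integers $k_j$, hence independent of $\delta$ once $\delta$ exceeds the cutoff demanded by the large-$\delta$ estimates. The one genuinely delicate step is the belief bookkeeping in the $\overline{\theta}=\theta_2$ case: reconciling the fast multiplicative reputation gains ($1/\gamma^*$) at Class 2 histories with the slow dynamics $p^L\mapsto p^L/\delta$ that govern how long play can dwell in Class 1, so as to certify a net reputation gain bounded away from $1$ regardless of how large $\delta$ is.
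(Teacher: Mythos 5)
Your skeleton is the same as the paper's proof: the decomposition by $\overline{\theta}(h^t)$, the count of at most $k_j$ Class 2 histories per type $\theta_j$ with $j\geq 3$ via the counter $l(h^t)$ in (\ref{A.13})--(\ref{A.14}), and, for $\overline{\theta}=\theta_2$, the multiplicative-reputation argument built on Lemma \ref{LA.3}, ending with a bound of the form $M=\sum_{j\geq 3}k_j+\widehat{M}$. The gap is precisely the step you flag as delicate, and the leverage you propose for closing it would not work. ``The Class 1 sojourn is short for large $\delta$'' is not the right statement: the paper's bound on the excursion length is $T+N$ periods, a constant that does \emph{not} shrink as $\delta\rightarrow 1$, and length alone cannot certify a net gain. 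What matters is the \emph{number of $(T,L)$ outcomes} during the excursion, because each $L$ multiplies $\eta-\eta^*$ by $(1-\lambda\gamma^*)$ while the one-time gain at the Class 2 step is only $1/\gamma^*$. Since $T$ defined in (\ref{A.T}) is of order $\log(1/\pi_1)/\big(\lambda(1-\gamma^*)\big)$, an excursion with $n_L$ of order $T$ low-effort outcomes erodes $\eta-\eta^*$ by roughly $(1-\lambda\gamma^*)^{T}\approx \pi_1^{\gamma^*/(1-\gamma^*)}$, which is bounded away from $1$ uniformly in $\lambda$; for instance with $\gamma^*=1/2$ and $\pi_1<1/2$ this erosion swamps the gain $1/\gamma^*$, so no uniform $c>1$ exists from a length bound alone.

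The paper closes exactly this hole with the two conditions on $\delta$ in (\ref{A.25}). From (\ref{A.26}), entering Class 1 from a Class 2 history leaves $p^L<(1-\delta)/\delta$, hence the discounted weight of $(T,L)$ during the excursion is below $(1-\delta)/\delta$ (inequality (\ref{A.28})); combining this with Lemma \ref{LA.1} and the first condition in (\ref{A.25}) yields the length bound $T+N$. Then, if $(T,L)$ occurred \emph{twice} during the excursion, its discounted weight would be at least $2(1-\delta)\delta^{T+N+1}$, which contradicts (\ref{A.28}) by the second condition $2\delta^{T+N+2}>1$ of (\ref{A.25}) --- this is inequality (\ref{A.30}). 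With at most one $L$ per excursion, the net growth of the reputation between consecutive $\overline{\theta}=\theta_2$ Class 2 histories is at least $(1-\lambda\gamma^*)/\gamma^*\geq 1/\sqrt{\gamma^*}>1$, using the standing restriction $\lambda\in\big(0,(1-\sqrt{\gamma^*})/\gamma^*\big)$, and the cap $\widehat{M}$ follows since reputation starts at $\pi_1$ and cannot exceed $1$. Your argument needs this at-most-one-$L$ step (or an equivalent counting device) to become a proof; as written, the $\theta_2$ case is open.
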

Lemma \ref{LA.5} establishes a uniform lower bound on $p^H(h^t)$ for all Class 1 and Class 2 histories.
\begin{Lemma}\label{LA.5}
There exist $\underline{\delta} \in (0,1)$ and $\underline{Q}>0$, such that when $\delta > \underline{\delta}$, we have $p^H(h^t) \geq \underline{Q}$ for all $h^t$ belonging to Class 1 and Class 2.
\end{Lemma}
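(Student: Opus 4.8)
The plan is to reduce the claim to a lower bound on the single scalar quantity $Z(h^t)\equiv p^H(h^t)-\tfrac{\widetilde{\gamma}}{1-\widetilde{\gamma}}\,p^L(h^t)$. Since $p^L(h^t)\ge 0$ and $\tfrac{\widetilde{\gamma}}{1-\widetilde{\gamma}}>0$, we have $p^H(h^t)\ge Z(h^t)$, so it suffices to produce a uniform $\underline{Q}>0$ with $Z(h^t)\ge \underline{Q}$. The merit of working with $Z$ is that its one-step dynamics are clean: reading off the $v^H$- and $v^L$-coefficients from the promise-keeping recursions \eqref{B.9}, \eqref{B.10}, and \eqref{A.11}, an outcome $H$ (at a Class 1 or Class 2 history) sends $Z\mapsto (Z-(1-\delta))/\delta$, an outcome $L$ at a Class 1 history sends $Z\mapsto (Z+(1-\delta)\tfrac{\widetilde{\gamma}}{1-\widetilde{\gamma}})/\delta$, and an outcome $L$ at a Class 2 history leaves the active phase. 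Hence $Z$ is decreased only by $H$, and each such decrease is $\tfrac{(1-\delta)(1-Z)}{\delta}\le\tfrac{1-\delta}{\delta}$. At the root one computes $Z(h^0)=p^H(h^0)-\tfrac{\widetilde{\gamma}}{1-\widetilde{\gamma}}p^L(h^0)=2Y$, where $Y>0$ is the constant in \eqref{A.10}.

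Next I would partition the path to $h^t$ into maximal runs of consecutive Class 1 histories separated by Class 2 histories. Fix a run beginning at $h^r$ and let $h^t$ be one of its members. Iterating the linear recursions \eqref{B.9}--\eqref{B.10} along the run gives $\delta^{t-r}Z(h^t)=Z(h^r)-W_H+\tfrac{\widetilde{\gamma}}{1-\widetilde{\gamma}}W_L$, where $W_H$ and $W_L$ are the $\delta^{s-r}$-discounted weights of outcomes $H$ and $L$ realized from $r$ to $t$. The belief at the start of every run is bounded below by the fixed $\underline{\eta}\equiv\min\{\eta(h^0),\eta^*/\gamma^*\}>\eta^*$: for the first run this is $\pi_1$, and a later run begins only after an $H$ at a Class 2 history, where Lemma \ref{LA.3} gives $\eta(h^r)\ge\eta(h^0)$ when $\overline{\theta}(h^r)\ge\theta_3$ and $\eta(h^r)\ge\eta^*/\gamma^*$ when $\overline{\theta}(h^r)=\theta_2$. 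Lemma \ref{LA.1} therefore applies with this common $\underline{\eta}$ (hence a single $T$), giving $W_H\le (1-\delta^T)+\tfrac{\widetilde{\gamma}}{1-\widetilde{\gamma}}W_L$. Substituting and using $\delta^{-(t-r)}\ge 1$ yields, whenever $Z(h^r)\ge 1-\delta^T$, the within-run bound $Z(h^t)\ge Z(h^r)-(1-\delta^T)$.

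It remains to propagate this across runs. By Lemma \ref{LA.4} there are at most $M$ Class 2 histories along the path, hence at most $M+1$ runs; crossing from one run to the next passes only through Class 2 histories at which $H$ is played, each lowering $Z$ by at most $\tfrac{1-\delta}{\delta}$, while the $L$ that triggered the transition only raised $Z$. Telescoping the within-run bound against these at most $M$ decrements gives, for every Class 1 or Class 2 history $h^t$, the estimate $Z(h^t)\ge 2Y-(M+2)(1-\delta^T)-M\tfrac{1-\delta}{\delta}$. Because $T$ and $M$ are fixed while $(1-\delta^T)\to 0$ and $\tfrac{1-\delta}{\delta}\to 0$ as $\delta\to 1$, there is $\underline{\delta}\in(0,1)$ such that for $\delta>\underline{\delta}$ the right-hand side exceeds $Y$; the same choice keeps $Z$ above $1-\delta^T$ at every run start, legitimizing each invocation of the within-run bound. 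Setting $\underline{Q}\equiv Y$ then gives $p^H(h^t)\ge Z(h^t)\ge\underline{Q}$. The main obstacle is exactly this interleaving of Class 1 and Class 2 histories: Lemma \ref{LA.1} is a per-run statement, so the estimate must be organized as an induction over runs in which both the per-run slack $(1-\delta^T)$ and the per-Class-2-period loss $\tfrac{1-\delta}{\delta}$ are uniformly small and the number of runs is capped by Lemma \ref{LA.4}; introducing the scalar $Z$ is what makes the one-step dynamics and the frequency bound \eqref{A.9} line up cleanly.
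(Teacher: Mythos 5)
Your proof is correct and follows essentially the same route as the paper's: both arguments proceed by induction over the runs of Class 1 histories separated by the at most $M$ Class 2 histories (Lemma \ref{LA.4}), starting from the constant $Y$ of (\ref{A.10}), applying Lemma \ref{LA.1} within each run (with Lemma \ref{LA.3} supplying the belief lower bound needed to restart it), and absorbing the per-run slack $(1-\delta^T)$ together with the per-Class-2 losses of order $1-\delta$ into a total that vanishes as $\delta \rightarrow 1$. The only difference is bookkeeping: you track the scalar $Z(h^t)=p^H(h^t)-\frac{\widetilde{\gamma}}{1-\widetilde{\gamma}}p^L(h^t)$, which makes the $W_L$ terms cancel automatically, whereas the paper tracks $p^H(h^t)$ directly and instead exploits that $p^L \leq \frac{1-\delta}{\delta}$ at the start of each post-Class-2 run.
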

Lemma \ref{LA.5} also implies a lower bound on $p^H(h^t)$ if $h^t$ is the \textit{first history} that reaches Class 3, i.e., $h^t$ is such that $p^L(h^t)=0$ and $p^L(h^s)>0$ for all $h^s \prec h^t$.

\subsection{Class 3 Histories}\label{subB.4}
If $h^t$ is such that $p^L(h^t)=0$, then $v(h^t)$ is a convex combination of $v^H$ and $v^N$.
According to Fudenberg and Maskin (1991), stated as
Lemma 3.7.2 in
Mailath and Samuelson (2006, page 99),
when $\delta$ is large enough, there exist
$\{v^t\}_{t=0}^{\infty}$ with $v^t \in \{v^N,v^H\}$ such that
first, $v(h^t)=(1-\delta) \sum_{t=0}^{\infty} \delta^t v^t$,
and second, for every $s \in \mathbb{N}$,
$(1-\delta) \sum_{t=s}^{\infty} \delta^{t-s} v^t$ is $\varepsilon-$close to $v(h^t)$.
Players' continuation play following $h^t$ is given by:
\begin{itemize}
  \item For every $s \in \mathbb{N}$ such that $v^s=v^H$,
  P2 plays $T$ and all types of P1 play $H$ in period $t+s$.
  \item For every $s \in \mathbb{N}$ such that $v^s=v^N$,
    P2 plays $N$ and all types of P1 play $L$ in period $t+s$.
\end{itemize}
\paragraph{Players' Incentives:} P2's incentive at Class 3 histories are trivially satisfied. For P1's incentives,
pick $\varepsilon$ in Lemma 3.7.2 of Mailath and Samuelson (2006) to be small enough.
Lemma \ref{LA.5} implies that
P1's continuation value at every  Class 3 history is no less than $(\underline{Q}/2) v^H +\Big(1-\underline{Q}/2 \Big) v^N$.
When a patient P1 is asked to play $H$, she has a strict incentive to comply
since if she does not comply, then her continuation payoff is $0$; and if she complies, then her continuation payoff is strictly bounded away from $0$.

\subsection{Incentive Constraints \& Promise Keeping Constraints}\label{subB.3}
First, I verify that at every on-path history, P1's continuation payoff is  a convex combination of $v^N$, $v^H$, and $v^L$.
Next, I show that P2 has an incentive to play $T$ at every Class 1 history.
Then, I show Lemmas A.2 to A.5, which together with Lemma A.1 imply the promise keeping condition, that the continuation play delivers every type of player $1$ her promised continuation value at every on-path history.

\subsubsection{P1's Continuation Value}
P1's continuation value in the beginning $v(h^0)$ is a convex combination of $v^N$, $v^H$, and $v^L$.  I show that:
\begin{itemize}
  \item Suppose $h^t$ is an on-path history and $v(h^t)$ is a convex combination of $v^N$, $v^H$, and $v^L$, then for every outcome $y_t \in \{N,H,L\}$ that occurs with positive probability at $h^t$, P1's continuation value after $y_t$, given by $v(h^t,y_t)$, is also a convex combination of $v^N$, $v^H$, and $v^L$.
\end{itemize}
First, consider the case in which $h^t$ belongs to Class 3. Given that $p^L(h^t)=0$, or equivalently, $v(h^t)$ is a convex combination of $v^N$ and $v^H$, the only on-path outcomes at $h^t$ are $N$ and $(T,H)$. As a result, the continuation payoffs $v(h^t,N)$ and $v(h^t,H)$ are both convex combinations of $v^N$ and $v^H$.

Second, consider the case in which $h^t$ belongs to Class 1. There are two possible outcomes at $h^t$: $(T,H)$ and $(T,L)$.
If $h^t$ is such that $\eta(h^t,H) \neq 1$, then
according to (\ref{B.9}) and (\ref{B.10}),
P1's continuation value remains to be a convex combination of $v^N$, $v^H$, and $v^L$.
If $h^t$ is such that $\eta(h^t,H) = 1$,
then
according to (\ref{B.9}) and (\ref{B.11}),
P1's continuation value remains to be a convex combination of $v^N$, $v^H$, and $v^L$.

Third, consider the case in which $h^t$ belongs to Class 2. There are two possible outcomes at $h^t$: $(T,H)$ and $(T,L)$.
If player $1$ plays $L$, then his continuation value is (\ref{A.11}), which is a convex combination of $v^N$ and $v^H$. If he plays $H$, then his continuation value is (\ref{B.10}) if
$\eta(h^t,H) \neq 1$, and is (\ref{B.11}) if $\eta(h^t,H) = 1$. In both cases, $v(h^t,H)$ is a
convex combination of $v^N$, $v^L$, and $v^H$.
\subsubsection{P2's Incentives at Class 1 Histories}
I show that $H$ is played with probability at least $\gamma^*$ at every Class 1 history, which implies that P2 has an incentive to play $T$.
Let $p_H (h^t)$ be the probability that P1 plays $H$ at $h^t$ \textit{according to P2's belief}.
Since P2's belief is a martingale, we have:
\begin{equation*}
    p_H(h^t) \eta (h^t,H) +(1-p_H(h^t)) \eta (h^t,L)=\eta(h^t).
\end{equation*}
The above equality is equivalent to:
\begin{equation*}
    p_H(h^t) \Big(\eta (h^t,H)-\eta(h^t)\Big)+(1-p_H(h^t)) \Big(\eta (h^t,L)-\eta(h^t)\Big)=0
    \end{equation*}
    \begin{equation*}
    \Leftrightarrow \quad
     p_H(h^t) \Big(\eta (h^t,H)-\eta(h^t)\Big)=(1-p_H(h^t)) \Big(\eta(h^t)-\eta (h^t,L)\Big).
\end{equation*}
As long as $\eta(h^t,L) \neq \eta(h^t)$ and $p_H(h^t) \neq 0$, i.e., nontrivial learning happens at $h^t$, and $H$ is played at $h^t$ with positive probability, we have:
\begin{equation}
    \frac{\eta (h^t,H)-\eta(h^t)}{\eta(h^t)-\eta (h^t,L)}=\frac{1-p_H(h^t)}{p_H(h^t)}.
\end{equation}
If P2 plays $T$ with positive probability at $h^t$, then $p_H(h^t) \geq \gamma^*$. This implies that:
\begin{equation}
    \frac{\eta (h^t,H)-\eta(h^t)}{\eta(h^t)-\eta (h^t,L)}=\frac{1-p_H(h^t)}{p_H(h^t)} \leq \frac{1-\gamma^*}{\gamma^*}.
\end{equation}
The belief updating formulas in (\ref{A.3}) and (\ref{A.4}) satisfy (A.25), and therefore, P2 has an incentive to play $T$.
\subsubsection{Proof of Lemma A.2}
Let $\pi(h^t) \in \Delta (\Theta)$ be P2's belief at $h^t$. For every $\theta \in \Theta$, let $\pi(h^t)[\theta]$ be the probability it attaches to type $\theta$.
A useful observation from the constructed strategies is: for every Class 2 history $h^t$, and every $\theta_i<\theta_j$,
\begin{enumerate}
  \item if $\theta_j$ belongs to the support of P2's belief at $h^t$, then $\theta_i$ also belongs to the support of that belief.
  \item if $\overline{\theta}(h^t)=\theta_j$, then
  \begin{equation*}
    \frac{\pi(h^t)[\theta_j]}{\pi(h^t)[\theta_i]}=\frac{\pi_j}{\pi_i} \cdot \frac{k_j-l(h^t)}{k_j}.
  \end{equation*}
\end{enumerate}
I start from verifying P2's incentives using the observation that at every history $h^t$ belonging to Class 1 or Class 2,
\begin{equation*}
\eta(h^t) \underbrace{\geq}_{\textrm{by induction on $t$}} \eta^* \underbrace{\geq}_{\textrm{according to (\ref{A.6})}} \gamma^* \eta(h^0).
\end{equation*}
Suppose $\overline{\theta}(h^t)=\theta_2$, then only types $\theta_1$ and $\theta_2$ can occur with positive probability at $h^t$. Since
type $\theta_2$ plays $L$ at $h^t$ with probability
    $\min \{1,\frac{1-\gamma^*}{1-\eta(h^t)}\}$,
type $\theta_1$ plays $H$ for sure, and the probability of type $\theta_1$ is $\eta(h^t)$,
player $2$ believes that $L$ is played at $h^t$ with probability at most $1-\gamma^*$. This implies her incentive to play $T$ at $h^t$.

Next, I examine the case in which $\overline{\theta}(h^t)=\theta_j$ with $j \geq 3$.
By definition, types with cost higher than $\theta_j$
occur with probability $0$, and type $\theta_1$ occurs with probability at least $\gamma^* \pi_1$.
According to player $1$'s actions at Class 2 histories specified in section \ref{subB.2}, and using statement 2 in Claim 1,
the probability with which $L$ is played at $h^t$ is at most:
\begin{equation}\label{A.15}
  (1-\gamma^* \pi_1) \frac{(\pi_j/k_j)}{\pi_2+...+\pi_{j-1}+((k_j-l(h^t))\pi_j/k_j)}
\leq
    (1-\gamma^* \pi_1) \frac{(\pi_j/k_j)}{\pi_2+...+\pi_{j-1}+(\pi_j/k_j)}.
\end{equation}
The RHS is no more than $1-\gamma^*$ according to the definition of $k_j$ in (\ref{A.5}).
To verify P1's incentives, I consider two subcases:
\begin{enumerate}
  \item If $h^t$ is such that $\eta(h^t,H)<1$, then (\ref{B.10}), (\ref{A.11}) and (\ref{A.12}) imply that type $\overline{\theta}(h^t)$ is indifferent between $H$ and $L$ at $h^t$, and types that are strictly lower than $\overline{\theta}(h^t)$ strictly prefer $H$ to $L$.
  \item If $h^t$ is such that $\eta(h^t,H)=1$, then given that all types except for type $\overline{\theta}(h^t)$ play $H$ with probability $1$ at $h^t$, then we know that $\overline{\theta}(h^t)=\theta_2$. According to (\ref{B.10}), (\ref{A.11}) and (\ref{A.12}), type $\theta_2$ strictly prefers $L$ at $h^t$, and type $\theta_1$ strictly prefers $H$ at $h^t$.
\end{enumerate}

\subsubsection{Proof of Lemma A.3}
\paragraph{Case 1:} Consider the case in which  $\overline{\theta}(h^t) \geq \theta_3$. First, suppose $\eta(h^t) \geq \eta(h^0)$, then
the conclusion of
Lemma A.3 follows since
$\eta(h^t,H)>\eta(h^t) \geq \eta(h^0)$.
Second, suppose $\eta(h^t)< \eta(h^0)$, then given the value of $l(h^t)$
and the highest-cost type at $h^t$ being $\theta_j$,
the posterior probability of type $\theta_1$ is bounded from below by:
\begin{eqnarray*}
 & &   \frac{\eta(h^t)}{\displaystyle \eta(h^t) +(1-\eta(h^t)) \frac{\pi_2+...+\pi_{j-1} + \frac{k_j-l(h^t)-1}{k_j} \pi_j}{\pi_2+...+\pi_{j-1} + \frac{k_j-l(h^t)}{k_j} \pi_j}}
\geq
\frac{\eta(h^t)}{\displaystyle \eta(h^t) +(1-\eta(h^t)) \frac{\pi_2+...+\pi_{j-1} + \frac{k_j-1}{k_j} \pi_j}{\pi_2+...+\pi_{j-1} +\pi_j}}
\end{eqnarray*}
Let
\begin{equation*}
X \equiv 1- \frac{\pi_2+...+\pi_{j-1} + \frac{k_j-1}{k_j} \pi_j}{\pi_2+...+\pi_{j-1} +\pi_j}=
\frac{\pi_j}{k_j (\pi_2+...+\pi_{j-1} +\pi_j)}.
\end{equation*}
The lower bound on posterior belief $\frac{\eta(h^t)}{\eta(h^t)+(1-\eta(h^t)) (1-X)}$ is greater than $\pi_1$ if and only if:
\begin{equation*}
    X \geq 1-\frac{(1-\pi_1)\eta(h^t)}{\pi_1 (1-\eta(h^t))}
    =\frac{\pi_1-\eta(h^t)}{\pi_1 (1-\eta(h^t))}.
\end{equation*}
Given that $\eta(h^t)\geq \eta^*$ at every history $h^t$ that belongs to Class 2,  the above inequality is implied by (\ref{A.6}).

\paragraph{Case 2:}
Consider the case in which $\overline{\theta}(h^t)=\theta_2$. If $\eta(h^t) \geq \gamma^*$, then type $\theta_2$
plays $L$ with probability $\min \{1,\frac{1-\gamma^*}{1-\eta(h^t)}\}=1$, which implies that $\eta(h^t,H)=1$.
If $\eta(h^t) < \gamma^*$, then type $\theta_2$
plays $L$ with probability $\min \{1,\frac{1-\gamma^*}{1-\eta(h^t)}\}=\frac{1-\gamma^*}{1-\eta(h^t)}$, which implies that $\eta(h^t,H)=\eta(h^t)/\gamma^* \geq \gamma^* \eta(h^0)/\gamma^*=\eta(h^0)$.

\subsubsection{Proof of Lemma A.4}
\paragraph{Step 1:} If $h^t$ belongs to Class 2 and $\overline{\theta}(h^t) =\theta_j \geq \theta_3$, then according to (\ref{A.14}),
type $\overline{\theta}(h^t)$ plays $L$ with probability $1$
when $l(h^t)=k_j-1$, after which play reaches a Class 3 history. Therefore, along every path of play, there are at most $k_j$ Class 2 histories satisfying $\overline{\theta}(h^t)=\theta_j$, and there are at most $K \equiv k_3+...+k_m$ Class 2 histories that has $\overline{\theta}(h^t) \geq \theta_3$.

\paragraph{Step 2:} Let $h^t$ be a Class 2 history with $\overline{\theta}(h^t)=\theta_2$.
Let $N \equiv \lceil \frac{1}{1-\gamma}\rceil$, and recall $T$ in Lemma A.1. In addition to the requirements on $\delta$ mentioned earlier, I also require $\delta$ to satisfy:
\begin{equation}\label{A.25}
    \delta^{T+1} (1+\delta+...+\delta^N) > N \textrm{ and } 2 \delta^{T+N+2} > 1.
\end{equation}
These are compatible given that all of them require $\delta$ to be sufficiently large.

First, I show that after P1 plays $H$ at $h^t$, it takes at most $T+N$ periods for play to reach a history that belongs to \textit{either Class 2 or Class 3}. According to the continuation value at $(h^t,H)$, given by (\ref{B.10}), we have:
\begin{equation}\label{A.26}
p^L(h^t,H)=\frac{p^L(h^t)}{\delta} < \frac{1-\delta}{\delta}.
\end{equation}
The last inequality comes from $h^t$ belonging to Class 2, so that $p^L(h^t)<1-\delta$ by definition.
According to Lemma A.1, for every Class 1 history $h^s$ such that $h^s \succ (h^t,H)$ and all histories between $(h^t,H)$ and $h^s$ belong to Class 1, \begin{equation}\label{A.27}
    (1-\delta) \sum_{r={t+1}}^{s} \delta^{r-(t+1)} \mathbf{1}\{y_r=(T,H)\}
    \leq (1-\delta^T) +     (1-\delta) \sum_{r={t+1}}^{s} \delta^{r-(t+1)} \mathbf{1}\{y_r=(T,L)\} \frac{\widetilde{\gamma}}{1-\widetilde{\gamma}}
\end{equation}
Moreover, (\ref{A.26}) and the requirement that all histories between $(h^t,H)$ and $h^s$ belong to Class 1 imply that
\begin{equation}\label{A.28}
    (1-\delta) \sum_{r={t+1}}^{s} \delta^{r-(t+1)} \mathbf{1}\{y_r=(T,L)\}
    < \frac{1-\delta}{\delta}.
\end{equation}
Given that only outcomes $(T,L)$ and $(T,H)$ occur at Class 1 and Class 2 histories:
\begin{equation*}
    1-\delta^{s-(t+1)}
    =   (1-\delta) \sum_{r={t+1}}^{s} \delta^{r-(t+1)} \mathbf{1}\{y_r=(T,L)\}
    + (1-\delta) \sum_{r={t+1}}^{s} \delta^{r-(t+1)} \mathbf{1}\{y_r=(T,H)\}
\end{equation*}
\begin{equation}\label{A.29}
    \leq (1-\delta^T)+ \frac{1-\delta}{\delta}+ \frac{1-\delta}{\delta} \frac{\widetilde{\gamma}}{1-\widetilde{\gamma}}
    \leq (1-\delta^T) +\frac{1-\delta}{\delta} \frac{1}{1-\widetilde{\gamma}} \leq (1-\delta^T) +\frac{1-\delta}{\delta} \frac{1}{1-\gamma}
\end{equation}
To show that $s-(t+1) \leq T+N$, suppose toward a contradiction that $s-(t+1) \geq T+N+1$, then
\begin{equation*}
 (1-\delta^T) +\frac{1-\delta}{\delta}N \geq (1-\delta^T) +\frac{1-\delta}{\delta} \frac{1}{1-\gamma}\geq  1-\delta^{s-(t+1)} \geq 1-\delta^{T+N+1},
\end{equation*}
which yields:
 \begin{equation*}
\frac{1-\delta}{\delta}N \geq  \delta^T(1-  \delta^{N+1}).
 \end{equation*}
Dividing both sides by $\frac{1-\delta}{\delta}$, we have:
\begin{equation*}
 N \geq    \delta^{T+1} (1+\delta+...+\delta^N),
\end{equation*}
which contradicts the first inequality of (\ref{A.25}). The above contradiction implies that  $s-(t+1) \leq T+N$.

Second, I focus on history $h^s$ that has the following two features:
\begin{itemize}
  \item[1.] $h^s$ belongs to Class 2,
  \item[2.] $h^s \succeq (h^t,H)$ and
  all histories between $(h^t,H)$ and $h^s$, excluding $h^s$, belong to Class 1.
\end{itemize}
I show that there exists at most one period from $(h^t,H)$ to $h^s$ such that the stage-game outcome is $(T,L)$. Suppose toward a contradiction that there exist two or more such periods, then
\begin{equation*}
    (1-\delta) \sum_{r={t+1}}^{s} \delta^{r-(t+1)} \mathbf{1}\{y_r=(T,L)\} \geq 2(1-\delta)\delta^{T+N+1}.
\end{equation*}
The last inequality comes from the previous conclusion that $s -(t+1) \leq T+N$. This is because $h^s$ belongs to Class 2 and $h^{s-1}$ belongs to Class 1, and therefore, $(s-1)-(t+1) \leq T+N$, or equivalently, $s-(t+1) \leq T+N+1$.
According to (\ref{A.28}),
\begin{equation}\label{A.30}
2(1-\delta)\delta^{T+N+1}  <  (1-\delta) \sum_{r={t+1}}^{s} \delta^{r-(t+1)} \mathbf{1}\{y_r=(T,L)\} < \frac{1-\delta}{\delta}.
\end{equation}
The above inequality contradicts the second inequality of (\ref{A.25}) that $2 \delta^{T+N+2} > 1$.

Let $h^t$ be the first time play reaches a history that belongs to Class 2 with $\overline{\theta}(h^t)=\theta_2$. According to Lemma A.3, $\eta(h^t,H) \geq \frac{\eta^*}{\gamma^*} \geq \eta(h^0)=\pi_1$.
Let $h^s$ be the next history that belongs to Class 2 with $h^s \succ (h^t,H)$. Since we have shown that $(T,L)$ occurs at most once between $(h^t,H)$ and $h^s$, we know that
\begin{equation*}
    \eta(h^s, H) =\min\{1, \frac{\eta(h^s)}{\gamma^*} \}
    \geq \min \{1, \frac{\eta(h^t,H)}{\gamma^*} (1-\lambda \gamma^*)\}
\end{equation*}
Therefore, conditional on $(h^s, H)$ is not a Class 3 history, player $2$'s belief at $(h^s, H)$ attaches
probability at least:
\begin{equation}\label{A.31}
 \eta(h^s, H) \geq   \eta(h^t,H) \frac{1-\lambda \gamma^*}{\gamma^*} \geq  \eta(h^t,H) \sqrt{\frac{1}{\gamma^*}}
\end{equation}
to type $\theta_1$,
where the last inequality comes from $\lambda \in (0,\frac{1-\sqrt{\gamma^*}}{\gamma^*})$.
Let
\begin{equation*}
    \widehat{M} \equiv \frac{\log (1/\pi_1)}{\log \sqrt{\frac{1}{\gamma^*}}}+1.
\end{equation*}
Since $\eta(h^t,H) \geq \pi_1$ for the first Class $2$ history $h^t$ satisfying $\overline{\theta}(h^t)=\theta_2$, there can be at most $\widehat{M}$ Class 2 histories
with $\theta_2$ being the highest-cost type along every path of play. This is because otherwise,
P2's posterior belief attaches probability greater than
\begin{equation*}
    \pi_1 \Big(\frac{1}{\sqrt{\gamma^*}}\Big)^{\widehat{M}} >1
\end{equation*}
at the $\widehat{M}+1$th such history, which leads to a contradiction.
Summarizing the conclusions of the two parts, there exist at most $M \equiv K+\widehat{M}$ Class 2 histories along every path of equilibrium play.

\subsubsection{Proof of Lemma A.5}
To start with,
consider Class 2 history $h^t$ such that no predecessor of $h^t$ belongs to Class 2, in another word, all predecessors of $h^t$ belong to Class 1. According to (\ref{A.10}), $p^H(h^{t-1}) \geq Y$, which implies that $p^H(h^{t}) \geq Y-(1-\delta)$. As a result
\begin{equation*}
    Q(h^t)=p^H(h^t) - \frac{1-\delta-p^L(h^t)}{1-\overline{\theta}(h^t)} \geq Y-(1-\delta)\big(
    1+ \frac{1}{1-\theta_m}
    \big) >0.
\end{equation*}
If play remains at Class 1 or Class 2 history after $h^t$, then player $1$ must be playing $H$ at $h^t$, after which
\begin{equation*}
    p^H(h^t,H) \geq p^H(h^t)-(1-\delta) \geq Y-2(1-\delta) \textrm{ and } p^L(h^t,H) \leq \frac{1-\delta}{\delta}.
\end{equation*}
According to Lemma A.3, $\eta(h^t,H) \geq \eta(h^0)=\pi_1$. One can then apply Lemma A.1 again, which implies that at every Class $1$ history $h^s$ such that only one predecessor of $h^s$ belongs to Class 2, we have:
\begin{equation*}
    p^H(h^s) \geq Z \equiv Y-2(1-\delta)-\frac{1-\delta}{\delta}\frac{\widetilde{\gamma}}{1-\widetilde{\gamma}}-(1-\delta^T),
\end{equation*}
with $T $ and $\widetilde{\gamma}$ being the same as in the previous step. When $\delta$ is large enough, $Z \geq Y/2$. One can then show that for every Class 2 history $h^s$ such that there is only one strict predecessor history belongs to Class 2,
\begin{equation*}
        Q(h^s)=p^H(h^s) - \frac{1-\delta-p^L(h^s)}{1-\overline{\theta}(h^s)} \geq Z-(1-\delta)\big(
    1+ \frac{1}{1-\theta_m}
    \big) >0.
\end{equation*}
Iteratively apply this process. Since
\begin{enumerate}
  \item the number of Class 2 histories along every path of play is bounded from above by $M$ (Lemma A.4),
  \item for every Class 2 history $h^t$, $p^L(h^t,H) =\frac{1-\delta}{\delta}$ and
 $\eta(h^t,H) \geq \eta(h^0)$,
\end{enumerate}
there exist $\underline{\delta} \in (0,1)$ and $\underline{Q}>0$ such that when $\delta >\underline{\delta}$, $p^H(h^t) \geq \underline{Q}$ for every Class 1 or Class 2 history $h^t$.

\subsection{Proof of Lemma A.1}
For every $h^t$, let $\Delta (h^t) \equiv \eta(h^t)-\eta^*$.
For every $t \in \mathbb{N}$, let $N_{L,t}$ and $N_{H,t}$ be the number of periods in which $L$ and $H$ are played from period $0$ to $t-1$, respectively. The proof is done by induction on $N_{L,t}$.

When $N_{L,t} \leq 2(k-n)$, then the conclusion holds as $N_{H,t} \geq 2n+X$. Moreover, $\Delta (h^T)$ reaches $1-\eta^*$ before period $T$, after which play reaches a Class 3 history.

Suppose the conclusion holds for when $N_{L,t} \leq N$ with $N \geq 2(k-n)$, and suppose toward a contradiction that there exists $h^T$ with $T \geq k+X$ and $N_{L,T} =N+1$, such that every $h^t \preceq h^T$ belongs to Class 1,
but
\begin{equation}\label{A.41}
    (1-\delta) \sum_{t=0}^{T-1} \delta^t \mathbf{1}\{y_t=H\}-
    (1-\delta^X)
    >   (1-\delta)\sum_{t=0}^{T-1} \delta^t \mathbf{1}\{y_t=L\}
    \cdot
    \frac{\widetilde{\gamma}}{1-\widetilde{\gamma}},
\end{equation}
I obtain a contradiction in three steps.

\paragraph{Step 1:} I show that for every $s<T$,
\begin{equation}\label{A.42}
     (1-\delta) \sum_{t=s}^{T-1} \delta^t \mathbf{1}\{y_t=H\}
    \geq   (1-\delta)\sum_{t=s}^{T-1} \delta^t \mathbf{1}\{y_t=L\}
    \frac{\widetilde{\gamma}}{1-\widetilde{\gamma}}.
\end{equation}
Suppose toward a contradiction that (\ref{A.42}) fails.
Then together with (\ref{A.41}), we have:
\begin{equation}\label{A.43}
    (1-\delta) \sum_{t=0}^{s-1} \delta^t \mathbf{1}\{y_t=H\}-
    (1-\delta^X)
    >   (1-\delta)\sum_{t=0}^{s-1} \delta^t \mathbf{1}\{y_t=L\}
    \frac{\widetilde{\gamma}}{1-\widetilde{\gamma}}
\end{equation}
and
\begin{equation}\label{A.44}
     (1-\delta)\sum_{t=s}^{T-1} \delta^t \mathbf{1}\{y_t=L\}>0.
\end{equation}
According to (\ref{A.44}), $N_{L,s}< N_{L,T}$. Since $N_{L,T}=N+1$, we have $N_{L,s} \leq N$. Applying the induction hypothesis and (\ref{A.43}), we know that play reaches a Class 3 history before $h^s$, leading to a contradiction.

\paragraph{Step 2:} I show that for every $k$ consecutive periods
\begin{equation*}
    \{y_r,y_{r+1},...,y_{r+k-1}\} \subset h^T,
\end{equation*}
the number of outcome $(T,H)$ in this sequence is at least $n+1$.
According to (\ref{A.42}) shown in the previous step, outcome $(T,H)$ occurs at least $n+1$ times in the last $k$ periods, namely, in the set $\{y_{T-k+1},...,y_T\}$.

Suppose toward a contradiction that there exists $k$ consecutive periods in which outcome $(T,H)$ occurs no more than $n$ times, then the conclusion above that outcome $(T,H)$ occurs at least $n+1$ times in the last $k$ periods implies that there exists
$k$ consecutive periods
$\{y_r,...,y_{r+k-1}\}$
in which $(T,H)$ occurs exactly $n$ times and $(T,L)$ occurs exactly $k-n$ times. According to (A.2), we have
\begin{equation}\label{A.45}
    (1-\delta) \sum_{t=r}^{r+k-1} \delta^t \mathbf{1}\{y_t=H\}
    <  (1-\delta) \sum_{t=r}^{r+k-1} \delta^t \mathbf{1}\{y_t=L\} \frac{\widetilde{\gamma}}{1-\widetilde{\gamma}},
\end{equation}
but then
\begin{equation}\label{A.46}
    \Delta (h^{r+k}) > \Delta (h^{r+1}).
\end{equation}
Next, let us consider the following new sequence with length $T-k$:
\begin{equation*}
 \widetilde{h}^{T-k} \equiv  \{\widetilde{y}_0,\widetilde{y}_1,...,\widetilde{y}_{T-k-1}\}  \equiv \{y_0,y_1,...,y_{r-1},y_{r+k},...,y_{T-1}\}
\end{equation*}
which is obtained by removing $\{y_r,...,y_{r+k-1}\}$ from the original sequence and front-loading the subsequent play $\{y_{r+k},...,y_{T-1}\}$. The number of $(T,L)$ in this new sequence is at most $N+1-(n-k)$, which is no more than $N$. According to the conclusion in Step 1:
\begin{equation}\label{A.47}
    (1-\delta) \sum_{t=r+k}^{T-1} \delta^t \mathbf{1}\{y_t=H\}
    >   (1-\delta)\sum_{t=r+k}^{T-1} \delta^t \mathbf{1}\{y_t=L\}
    \frac{\widetilde{\gamma}}{1-\widetilde{\gamma}}.
\end{equation}
This together with (\ref{A.45}) and (\ref{A.41}) imply that
\begin{equation*}
    (1-\delta) \sum_{t=0}^{T-k-1} \delta^t \mathbf{1}\{\widetilde{y}_t=H\}-
    (1-\delta^X)
    >   (1-\delta)\sum_{t=0}^{T-k-1} \delta^t \mathbf{1}\{\widetilde{y}_t=L\}
    \frac{\widetilde{\gamma}}{1-\widetilde{\gamma}}.
\end{equation*}
According to the induction hypothesis, play will reach a Class 3 history before period $T-k$ if player $1$ plays according to $\{\widetilde{y}_0,\widetilde{y}_1,...,\widetilde{y}_{T-k-1}\}$.
\begin{enumerate}
  \item Suppose $\widetilde{h}^{T-k}$ reaches a Class 3 history before period $r$, then play will also reach a Class 3 history before period $r$ according to the original sequence.
  \item Suppose $\widetilde{h}^{T-k}$ reaches a Class 3 history in period $s$, with $s>t$,
then according to (\ref{A.46}), we have $\Delta (\widetilde{h}^s) \leq \Delta (h^{s+k})$. This implies that
play will reach a Class 3 history in period $s+k$ according to the original sequence.
\end{enumerate}
This contradicts the hypothesis that play has never reached a Class 3 history before $h^T$.

\paragraph{Step 3:} For every history $h^T \equiv \{y_0,y_1,...,y_{T-1}\} \in \{H,L\}^T$ and $t \in \{1,...,T-1\}$, define the operator $\Omega_t: \{H,L\}^T  \rightarrow \{H,L\}^T$ as:
\begin{equation}\label{A.48}
    \Omega_t (h^T) =(y_0,...,y_{t-2},y_t,y_{t-1},y_{t+1},...,y_{T-1}),
\end{equation}
in another word, swapping the order between $y_{t-1}$ and $y_t$. Recall the belief updating formula in Class 1 histories and let
\begin{equation}\label{A.49}
    \mathcal{H}^{T,*} \equiv \Big\{h^T \Big|
    \Delta (h^t) < 1-\eta^* \textrm{ for all } h^t \prec h^T
    \Big\}.
\end{equation}
If $h^T \in \mathcal{H}^{T,*}$, then $\Omega_t (h^T) \in \mathcal{H}^{T,*}$ unless:
\begin{itemize}
  \item $y_{t-1}=L$, $y_t=H$.
  \item \textit{and}, $\Big(1+\lambda (1-\gamma^*)\Big)
  \Delta (h^{t-1}) \geq 1-\eta^*$.
\end{itemize}
Next, I show that the above situation cannot occur besides in the last $k$ periods. Suppose toward a contradiction that there exists $t \leq T-k$ such that $h^T \in \mathcal{H}^{T,*}$ but $\Omega_t (h^T) \notin \mathcal{H}^{T,*}$. Based on the conclusion in Step 2, outcome $(T,H)$ occurs at least $n+1$ times in the sequence $\{y_t,...,y_{t+k-1}\}$.
Consider another sequence $\{y_{t-1},...,y_{t+k-1}\}$, in which outcome $(T,H)$ occurs at least $n+1$ times and outcome $(T,L)$ occurs at most $k-n$ times. This implies that:
\begin{eqnarray}\label{A.50}
    \Delta (h^{t+k}) &\geq&  \Delta (h^{t-1}) \Big(1+\lambda (1-\gamma^*)\Big)^{n+1}
    \Big(1-\lambda \gamma^*\Big)^{k-n}
     {}
\nonumber\\
&=& \Delta (h^{t-1}) \underbrace{\Big(1+\lambda (1-\gamma^*)\Big)^{n}
    \Big(1-\lambda \gamma^*\Big)^{k-n}}_{\geq 1} \Big(1+\lambda (1-\gamma^*)\Big)
         {}
\nonumber\\
& \geq & \Delta (h^{t-1}) \Big(1+\lambda (1-\gamma^*)\Big)
 {}
\nonumber\\
&\geq&
1-\eta^*,
\end{eqnarray}
where second inequality follows from $n/k > \widehat{\gamma}$, and the 3rd inequality follows from the hypothesis that $\Omega_t (h^T) \notin \mathcal{H}^{T,*}$. Inequality (\ref{A.50}) implies that play reaches the high phase before period $t+k \leq T$, contradicting the hypothesis that $h^T \in \mathcal{H}^{T,*}$.

To summarize, for every $t \leq T-k$, if $h^T \in \mathcal{H}^{T,*}$, then $\Omega_t (h^T) \in \mathcal{H}^{T,*}$. For every $t > T-k$, if $h^T \in \mathcal{H}^{T,*}$, then $\Omega_t (h^T) \in \mathcal{H}^{T,*}$ unless $y_{t-1}=L$ and $y_t=H$.
Therefore, one can freely front-load outcome $(T,H)$ from period $0$ to $T-k-1$ and obtain the following revised sequence:
\begin{equation}\label{A.51}
    \{H,H,...H,L,...,L,y_{T-k},...,y_{T-1}\},
\end{equation}
which meets the following two requirements. First, the revised sequence (\ref{A.51}) still belongs to set $\mathcal{H}^{T,*}$. Second, the sequence in (\ref{A.51}) satisfies (\ref{A.41}).

According to the conclusion in Step 2, the number of outcome $(T,L)$ from period $0$ to $T-k-1$ cannot exceed $k-n-1$, and
the number of outcome $(T,L)$ from period $T-k$ to $T-1$ cannot exceed $k-n-1$.
This is because otherwise, there exists a sequence of length $k$ that has at most $n$ periods of outcome $(T,H)$, contradicting the two conditions that the revised sequence in (\ref{A.51}) must satisfy. Therefore, the total number of outcome $(T,L)$ in this sequence is at most $2(k-n-1)$. This contradicts the induction hypothesis that the number of outcome $(T,L)$ exceeds $2(k-n)$.

\section{Proof of Theorem 1:  Necessity of Constraints}
In section B.1, I establish the necessity of constraint (\ref{1.2}). In section B.2, I establish the necessity of constraint (\ref{1.3}). In section B.3, I show Lemma \ref{L4.1}, namely, $v_j^*$ is the value of the constrained optimization problem.
\subsection{Necessity of Constraint (\ref{1.2})}
For every strategy profile $\sigma$, let $\mathcal{H}^{\sigma}$ be the set of histories that occur with positive probability under $\sigma$. For every $h^t \in \mathcal{H}^{\sigma}$, let $\Theta^{\sigma}(h^t) \subset \Theta$ be the support of player $2$'s belief at $h^t$.
The necessity of (\ref{1.2}) is implied by:
\begin{Proposition}
For every prior belief $\pi$, including those that \textit{do not have full support},
if type $\theta_i$ is
the lowest-cost type in the support of this prior belief, then
her equilibrium payoff is no more than $1-\theta_i$ in all BNEs.
\end{Proposition}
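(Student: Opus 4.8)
The plan is to prove the bound by induction on the number of types in the support of $\pi$, with the core of the argument being a recursive inequality that caps type $\theta_i$'s payoff whenever she is willing to play $H$ after being trusted. I would isolate the engine as a lemma: fix a BNE $\sigma$, and for the lowest-support type $\theta_i$ let $\bar v \equiv \sup_{h^t \in \mathcal{H}^{\sigma}} V_{\theta_i}(h^t)$ be the supremum of her continuation value over on-path histories. Suppose that at \emph{every} on-path history at which player $2$ trusts, type $\theta_i$ plays $H$ with positive probability. Take a history $h^t$ whose continuation value is within $\varepsilon$ of $\bar v$. If player $2$ plays $N$ for sure there, her value is $\delta V_{\theta_i}(h^{t+1}) \le \delta \bar v$, which is incompatible with near-optimality once $\bar v>0$; hence player $2$ trusts with positive probability. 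Conditional on trust, since $H$ lies in the support of $\theta_i$'s action and she best responds, her conditional value equals her value from playing $H$, namely $(1-\delta)(1-\theta_i)+\delta V_{\theta_i}(h^t,H)\le (1-\delta)(1-\theta_i)+\delta\bar v$. Combining the two continuation possibilities gives $V_{\theta_i}(h^t)\le (1-\delta)(1-\theta_i)+\delta\bar v$, and letting $\varepsilon\to 0$ yields $\bar v\le (1-\delta)(1-\theta_i)+\delta\bar v$, i.e. $\bar v\le 1-\theta_i$.

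This reduces the whole proposition to showing that type $\theta_i$ necessarily plays $H$ with positive probability at every on-path history where player $2$ trusts. Here I would use that player $2$ trusts only when the posterior-weighted probability of $H$ is at least $\gamma^*>0$, so at such a history some in-support type plays $H$ with positive probability. The remaining step is to rule out that this is done entirely by higher-cost types while $\theta_i$ shirks for sure and enjoys a high continuation value. I would close this gap with the single-crossing/comparative-advantage structure of the trust game: a lower-cost type has a comparative advantage in playing $H$, so in equilibrium the probability of $H$ is non-increasing in type at every on-path history. Consequently, if the lowest in-support type $\theta_i$ played $L$ for sure, every in-support type would play $L$ for sure, the aggregate probability of $H$ would be $0<\gamma^*$, and player $2$ would strictly prefer $N$, contradicting that she trusts. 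Hence $\theta_i$ plays $H$ with positive probability whenever trusted, and the lemma applies.

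The number of types enters through the induction that grounds the comparative-advantage step and the base case. The base case is a single type, i.e. the complete-information repeated trust game: player $2$ trusts only when $H$ is expected with probability at least $\gamma^*>0$, so the unique type plays $H$ with positive probability at every trusting history, and the lemma delivers $1-\theta_i$, recovering the Fudenberg, Kreps and Maskin (1990) benchmark of footnote $3$. For the inductive step with a support of size $K$, I would verify that the equilibrium action is monotone in type (so the lowest in-support type plays $H$ with the highest probability at every on-path history) and then apply the lemma; the induction hypothesis covers the continuation games reached once some type separates, whose supports are strictly smaller, which is convenient both for establishing the monotone structure and for checking that player $2$'s trust is sustained only when the lowest type contributes to the probability of $H$.

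I expect the main obstacle to be the comparative-advantage step. It is tempting to argue that the lowest-cost type always weakly prefers $H$ because her current cost of $H$ is smallest, but the continuation-value differential $V_\theta(h^t,H)-V_\theta(h^t,L)$ is itself type-dependent, so monotonicity of the equilibrium action in type does not follow from the current payoffs alone. This is exactly what the monotone-supermodularity condition of Liu and Pei (2020) and its repeated-signaling implication (Pei 2019) are designed to supply, and I would invoke them as in the proof of Theorem $2$. Once monotonicity is in place the rest is the elementary recursive bound above: whenever player $2$ trusts, the lowest in-support type plays $H$ with positive probability, she is therefore willing to incur the cost $\theta_i$ this period against a continuation worth at most $\bar v$, and iterating this one-period trade-off across all histories forces $\bar v \le 1-\theta_i$.
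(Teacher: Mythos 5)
Your recursive ``engine'' lemma is sound: if at every history that is on path for type $\theta_i$ and at which player $2$ trusts, type $\theta_i$ plays $H$ with positive probability, then the one-period decomposition $\bar v \le (1-\delta)(1-\theta_i)+\delta \bar v$ indeed forces $\bar v \le 1-\theta_i$. The genuine gap is the premise. You reduce it to the claim that in every BNE the probability of $H$ is monotone in the type at every on-path history, and you propose to import this from Liu and Pei (2020) and Pei (2019) ``as in the proof of Theorem 2.'' But those results do not deliver per-history monotonicity of mixing probabilities in an arbitrary BNE of the repeated game: Liu--Pei is a one-shot signalling result, and the implication actually used in the proof of Theorem 2 is derived under a much stronger hypothesis --- that some type mixes at \emph{every} on-path history, so that both the all-$H$ and the all-$L$ strategies are \emph{global} best replies for that type --- and the single-crossing comparison is then made between these two global strategies, not history by history. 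In a general BNE, continuation values are type-dependent and can reverse the static ranking, so nothing you have shown prevents an on-path trusted history at which the lowest in-support type shirks for sure while trust is sustained entirely by higher-cost types; this is exactly the configuration your argument cannot handle, and ruling it out seems to require continuation-payoff caps of the very kind the proposition is supposed to establish, which is circular.

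The paper's induction is built precisely to avoid your premise. Instead of arguing that type $\theta_i$ is willing to play $H$ whenever trusted, it works with the modified best reply $\widehat{\sigma}_{\theta_i}$ that plays the highest on-path action $\overline{a}_1^{\sigma}(h^t)$ whenever that action lies in the support of $\sigma_{\theta_i}(h^t)$, and follows $\sigma_{\theta_i}$ otherwise. Along paths where she is always willing, stage payoffs are at most $1-\theta_i$, as in your lemma. At a history $h^s$ where she is \emph{not} willing (your problematic case), the paper does not need her to play $H$: playing the highest action there would remove $\theta_i$ from the posterior, so the continuation game after that action has a strictly smaller support whose lowest-cost type $\theta_j$ plays that action with positive probability; the induction hypothesis caps $\theta_j$'s continuation value at $h^s$ by $1-\theta_j$, optimality then caps $\theta_j$'s payoff from \emph{any} continuation strategy (including the one $\widehat{\sigma}_{\theta_i}$ induces) by $1-\theta_j$, and the uniform payoff-difference bound between the two types (at most $\theta_j-\theta_i$ per period) transfers this cap to type $\theta_i$, giving $1-\theta_j+(\theta_j-\theta_i)=1-\theta_i$. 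This transfer through the next-lowest type is the idea your proposal is missing; without it, or a genuine proof of your monotonicity claim, your argument does not go through.
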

\begin{proof} Rank player $1$'s actions according to $H \succ L$. Given strategy profile $\sigma$ and history $h^t \in \mathcal{H}^{\sigma}$, let
\begin{equation}\label{B.1}
\overline{a}_{1}^{\sigma}(h^{t}) \equiv \max  \Big\{\bigcup_{\theta \in \Theta^{\sigma}(h^t)} \textrm{supp} \Big(\sigma_{\theta}(h^t) \Big) \Big\}
\end{equation}
be the highest action played by player $1$ with positive probability at $h^t$. By definition,
for every BNE $\sigma$ and $h^t \in \mathcal{H}^{\sigma}$, if $\sigma_2(h^t)$ assigns positive probability to $T$, then
$\overline{a}_1^{\sigma}(h^t)=H$.
The rest of
my proof is done by induction on
the number of types in the support of player $2$'s prior belief.
\begin{itemize}
  \item[1.] I establish the conclusion when $|\Theta^{\sigma}(h^0)|=1$.
  \item[2.] Suppose the conclusion holds when $|\Theta^{\sigma}(h^0)| \leq n$, it also holds when $|\Theta^{\sigma}(h^0)|=n+1$.
\end{itemize}

\paragraph{Step 1:} I show that when $|\Theta^{\sigma}(h^0)|=1$, the only type in the support of player $2$'s prior belief, denoted by $\theta_i$, receives payoff no more than $1-\theta_i$. This also implies that for every equilibrium $\sigma$ and for every $h^t \in \mathcal{H}^{\sigma}$, if $\Theta^{\sigma}(h^t)=\{\theta_i\}$ for some $\theta_i \in \Theta$, then type $\theta_i$'s continuation payoff at $h^t$ cannot exceed $1-\theta_i$.

This is because $\Theta^{\sigma}(h^0)=\{\theta_i\}$ implies that $\Theta^{\sigma}(h^t)=\{\theta_i\}$ for every $h^t \in \mathcal{H}^{\sigma}$. Therefore, $\overline{a}_1^{\sigma}(h^t)$ is played by type $\theta_i$ with positive probability at every $h^t \in \mathcal{H}^{\sigma}$. Given type $\theta_i$'s equilibrium strategy $\sigma_{\theta_i}$, the following strategy $\widetilde{\sigma}_{\theta_i}: \mathcal{H} \rightarrow \Delta (A_1)$, defined as:
\begin{equation}\label{B.2}
    \widetilde{\sigma}_{\theta_i}(h^t) \equiv \left\{ \begin{array}{ll}
\overline{a}_1^{\sigma}(h^t) & \textrm{ if } h^t \in \mathcal{H}^{\sigma} \\
\sigma_{\theta_i}(h^t) & \textrm{ otherwise}.
\end{array} \right.
\end{equation}
also best replies against player $2$'s equilibrium strategy $\sigma_2$, from which type $\theta_i$ receives his equilibrium payoff.
If type $\theta_i$ plays according to $\widetilde{\sigma}_{\theta_i}$ against $\sigma_2$,
then according to Step 1, the outcome at every history in $\mathcal{H}^{\sigma}$ is either $(T,H)$ or $N$. Therefore,
type $\theta_i$'s stage-game payoff at every history in $\mathcal{H}^{\sigma}$ cannot exceed $1-\theta_i$, so his discounted average payoff cannot exceed $1-\theta_i$.

\paragraph{Step 2:} I show that if the conclusion holds when $|\Theta^{\sigma}(h^0)| \leq n$, then it also holds when $|\Theta^{\sigma}(h^0)|=n+1$.
I define $\overline{\mathcal{H}}_t^{\sigma}$ for every $t \in \mathbb{N}$ recursively.
Let $\overline{\mathcal{H}}_0^{\sigma} \equiv \{h^0\}$. Given the definition of $\overline{\mathcal{H}}^{\sigma}_{t}$, let
\begin{equation*}
\overline{\mathcal{H}}^{\sigma}_{t+1} \equiv
\Big\{h^{t+1} \in \mathcal{H}^{\sigma} \Big|\exists h^t \in \overline{\mathcal{H}}^{\sigma}_t \textrm{ s.t. } h^{t+1} \succ h^t \textrm{ and either } h^{t+1}=(h^t,N) \textrm{ or } h^{t+1}=\big(h^t,(T,\overline{a}_1^{\sigma}(h^t))\big) \Big\}.
\end{equation*}
Intuitively, $\overline{\mathcal{H}}^{\sigma}_{t+1}$ is the set of period $t+1$ on-path histories such that player $1$ has played his \textit{highest on-path action} from period $0$ to $t$.
Let $\overline{\mathcal{H}}^{\sigma} \equiv \cup_{t=0}^{\infty}\overline{\mathcal{H}}^{\sigma}_{t}$.

Recall that $\theta_i$ is the notation for the lowest-cost type in the support of player $2$'s prior belief.
Given type $\theta_i$'s equilibrium strategy $\sigma_{\theta_i}$, let $\widehat{\sigma}_{\theta_i}: \mathcal{H} \rightarrow \Delta (A_1)$ be defined as:
\begin{equation}\label{B.3}
    \widehat{\sigma}_{\theta_i}(h^t) \equiv \left\{ \begin{array}{ll}
\overline{a}_1^{\sigma}(h^t) & \textrm{ if } h^t \in \mathcal{H}^{\sigma} \textrm{ and } \overline{a}_1^{\sigma}(h^t) \in \textrm{supp}\big(\sigma_{\theta_i}(h^t) \big)\\
\sigma_{\theta_i}(h^t) & \textrm{ otherwise}.
\end{array} \right.
\end{equation}
By construction, $\widehat{\sigma}_{\theta_i}$ is type $\theta_i$'s best reply against $\sigma_2$.
Let $\mathcal{H}^{(\widehat{\sigma}_{\theta_i},\sigma_2)}$ be the set of histories that occur with positive probability under
$(\widehat{\sigma}_{\theta_i},\sigma_2)$. Let
\begin{equation}\label{B.4}
    \overline{\mathcal{H}}^{\sigma,\theta_i} \equiv \Big\{
    h^t \in \overline{\mathcal{H}}^{\sigma}
    \Big| \theta_i \in \Theta^{\sigma}(h^t) \textrm{ and } \overline{a}_1^{\sigma}(h^t) \notin \textrm{supp}\big(\sigma_{\theta_i}(h^t) \big)
    \Big\}.
\end{equation}
Intuitively, $h^t \in \overline{\mathcal{H}}^{\sigma,\theta_i}$ if and only if
\begin{enumerate}
  \item At every $h^s \prec h^t$,
type $\theta_i$'s equilibrium strategy $\sigma_{\theta_i}$ plays $\overline{a}_1^{\sigma}(h^s)$ with positive probability. This comes from $h^t \in \overline{\mathcal{H}}^{\sigma}$ and $\theta_i \in \Theta^{\sigma}(h^t)$.
  \item At $h^t$,
type $\theta_i$ plays $\overline{a}_1^{\sigma}(h^t)$ with zero probability. This comes from $\overline{a}_1^{\sigma}(h^t) \notin \textrm{supp}\big(\sigma_{\theta_i}(h^t) \big)$.
\end{enumerate}
Consider type $\theta_i$'s payoff if he plays $\widehat{\sigma}_{\theta_i}$ and player $2$ plays $\sigma_2$. For any given $h^t \in \mathcal{H}^{(\widehat{\sigma}_{\theta_i},\sigma_2)}$,
\begin{enumerate}
  \item If there \textit{does not exist} $h^s \preceq h^t$ such that $h^s \in \overline{\mathcal{H}}^{\sigma,\theta_i}$, then  type $\theta_i$'s stage-game payoff at $h^t$ and at all histories preceding $h^t$ is no more than $1-\theta_i$.
  \item If there \textit{exists} $h^s \preceq h^t$ such that $h^s \in \overline{\mathcal{H}}^{\sigma,\theta_i}$, then I show below that type $\theta_i$'s continuation payoff at $h^s$ is no more than $1-\theta_i$.

      First, since $h^s \in \overline{\mathcal{H}}^{\sigma,\theta_i}$,
      after player $2$ observes $\overline{a}_1^{\sigma}(h^s)$ at $h^s$, $\theta_i$ is no longer in the support of player $2$'s posterior belief.
      Therefore, for every $h^{s+1} \succ h^s$ with $\overline{a}_1^{\sigma}(h^s)$ being played at $h^s$, there exist at most $n$ types in the support of player $2$'s posterior belief at $h^{s+1}$.

      Let $\theta_j$ be the lowest-cost type in the support of P2's belief at $h^{s+1}$.
According to the induction hypothesis, type $\theta_j$'s continuation payoff \textit{after} playing
      $\overline{a}_1^{\sigma}(h^s)$ at $h^s$ is no more than $1-\theta_j$. Type $\theta_j$'s stage-game payoff by playing $\overline{a}_1^{\sigma}(h^s)$ at $h^s$ is also no more than $1-\theta_j$. This implies that his continuation payoff at $h^s$ is at most $1-\theta_j$.

      Therefore, type $\theta_j$'s continuation payoff by deviating to $\widehat{\sigma}_{\theta_i}$ starting from $h^s$ is no more than $1-\theta_j$. Since $\theta_i< \theta_j$, and the maximal difference between type $\theta_i$ and $\theta_j$'s stage-game payoff is $\theta_j-\theta_i$, we know that type $\theta_i$'s continuation payoff at $h^s$ by playing $\widehat{\sigma}_{\theta_i}$ is no more than $1-\theta_i$.
\end{enumerate}
The two parts together imply that when $|\Theta^{\sigma}(h^0)|=n+1$.
\begin{enumerate}
  \item type $\theta_i$'s stage-game payoff before reaching any history that belong to $\overline{\mathcal{H}}^{\sigma,\theta_i}$ is no more than $1-\theta_i$,
  \item type $\theta_i$'s continuation payoff at any history belonging to $\overline{\mathcal{H}}^{\sigma,\theta_i}$ is no more than $1-\theta_i$.
\end{enumerate}
Therefore, $\theta_i$'s discounted average payoff in period $0$ is no more than $1-\theta_i$.
\end{proof}

\subsection{Necessity of Constraint (\ref{1.3})}
Suppose toward a contradiction that $(v_1,...,v_m)$ is an equilibrium payoff and there exists
$j \in \{1,2,...,m\}$ such that $v_j > v_j(\gamma^*)$. Then given the constraint established in the first part that $v_1 \leq 1-\theta_1$, we know that $j >1$.
Under the probability measure over $\mathcal{H}$ induced by $(\sigma_{\theta_j},\sigma_2)$, let
 $X^{(\sigma_{\theta_j},\sigma_2)}$ be the occupation measure of outcome $(T,H)$ and let $Y^{(\sigma_{\theta_j},\sigma_2)}$ be the occupation measure of outcome $(T,L)$. Since $v_j >v_j(\gamma^*)$, we have:
\begin{equation}\label{B.3}
    \frac{\displaystyle X^{(\sigma_{\theta_j},\sigma_2)} }{\displaystyle Y^{(\sigma_{\theta_j},\sigma_2)}} < \frac{\gamma^*}{1-\gamma^*}.
\end{equation}
Let the value of the left-hand-side be $\frac{\gamma}{1-\gamma}$ for some $\gamma \in [0, \gamma^*)$.

For every $h^{\tau} \in \mathcal{H}$,
let $\sigma_{\theta_j}(h^{\tau}) \in \Delta (A_1)$ be the (mixed) action prescribed by $\sigma_{\theta_j}$ at $h^{\tau}$ and let
$\alpha_1(\cdot|h^{\tau})$ be player $2$'s expected action of player $1$'s at $h^{\tau}$.
Let $d(\cdot \| \cdot)$ be the Kullback-Leibler divergence between two action distributions.
Suppose player $1$ plays according to $\sigma_{\theta_j}$, the result in Gossner (2011) implies that:
\begin{equation}\label{B.4}
  \mathbb{E}^{(\sigma_{\theta_j},\sigma_2)} \Big[   \sum_{\tau=0}^{+\infty} d(\sigma_{\theta_j}(h^{\tau})||\alpha_1(\cdot|h^{\tau}))
    \Big] \leq -\log \pi_0(\theta_j).
\end{equation}
This implies that for every $\epsilon >0 $,
the expected number of periods such that $ d(\sigma_{\theta_j}(h^{\tau})||\alpha_1(\cdot|h^{\tau})) > \epsilon$ is no more than
\begin{equation}\label{B.5}
    T(\epsilon) \equiv \Big\lceil \frac{-\log \pi_0(\theta_j)}{\epsilon}\Big\rceil.
\end{equation}
Let
\begin{equation}\label{B.6}
    \epsilon \equiv d\Big(
    \frac{\gamma+2\gamma^*}{3} H +(1- \frac{\gamma+2\gamma^*}{3}) L
    \Big\|
     \gamma^* H +(1- \gamma^*) L
    \Big),
\end{equation}
and let $\delta$ be large enough such that:
\begin{equation}\label{B.7}
      \frac{\displaystyle X^{(\sigma_{\theta_j},\sigma_2)} }{\displaystyle Y^{(\sigma_{\theta_j},\sigma_2)}-(1-\delta^{T(\epsilon)})} < \frac{2\gamma +\gamma^*}{3-2\gamma-\gamma^*}.
\end{equation}
According to (\ref{B.4}) and (\ref{B.5}), if type $\theta_j$ plays according to her equilibrium strategy, then there are at most $T(\epsilon)$ periods in which player $2$'s expectation over player $1$'s action differs from $\sigma_{\theta_j}$
by more than $\epsilon$. According to (\ref{B.6}), aside from $T(\epsilon)$ periods, player $2$ will trust player $1$ at $h^t$ only when $\sigma_{\theta_j}(h^t)$ assigns probability at least $\frac{\gamma + 2\gamma^*}{3}$ to $H$.
Therefore, under the probability measure induced by $(\sigma_{\theta_j},\sigma_2)$, the occupation measure with which player $2$ trusts player $1$ is at most:
\begin{equation}\label{B.8}
    \underbrace{(1-\delta^{T(\epsilon)})}_{\textrm{periods in which player 2's prediction is wrong}} +\underbrace{\Big(
    X^{(\sigma_{\theta_j},\sigma_2)}+Y^{(\sigma_{\theta_j},\sigma_2)}- (1-\delta^{T(\epsilon)})
    \Big)\frac{2\gamma +\gamma^*}{\gamma +2\gamma^*}}_{\textrm{maximal frequency with which player 2 trusts after he learns}},
\end{equation}
which is strictly less than $X^{(\sigma_{\theta_j},\sigma_2)}+Y^{(\sigma_{\theta_j},\sigma_2)}$ when $\delta$ is close enough to $1$. This leads to a contradiction.

\subsection{Proof of Lemma \ref{L4.1}}
Constraint (\ref{1.3}) implies that:
\begin{equation*}
    (1-\theta_1) \alpha(H) +\alpha(L) \leq (1-\theta_1) \alpha(H) + \frac{1-\gamma^*}{\gamma^*}\alpha(H) = (\frac{1}{\gamma^*}-\theta_1)\alpha(H),
\end{equation*}
or equivalently,
\begin{equation*}
    \alpha(H) \geq \frac{\gamma^* }{1-\gamma^* \theta_1} \Big( (1-\theta_1) \alpha(H) +\alpha(L) \Big).
\end{equation*}
Recall that $\theta_1$ is the lowest-cost type.
The objective function (\ref{1.1}) can be rewritten as:
\begin{equation*}
    (1-\theta_j) \alpha(H) +\alpha(L)
    =(1-\theta_1) \alpha(H) +\alpha(L)-\underbrace{(\theta_j-\theta_1)}_{\geq 0} \underbrace{\alpha(H)}_{\geq \frac{\gamma^* }{1-\gamma^* \theta_1} \Big( (1-\theta_1) \alpha(H) +\alpha(L) \Big)}
\end{equation*}
\begin{equation*}
    \leq \Big( 1-(\theta_j-\theta_1) \frac{\gamma^*}{1-\gamma^* \theta_1} \Big) \Big( (1-\theta_1) \alpha(H) +\alpha(L) \Big)
    =\frac{1-\gamma^* \theta_j}{1-\gamma^* \theta_1} \Big( (1-\theta_1) \alpha(H) +\alpha(L) \Big) .
\end{equation*}
According to constraint (\ref{1.2}) that $(1-\theta_1) \alpha(H) +\alpha(L)  \leq 1-\theta_1$, we have the following upper bound of the objective function (\ref{1.1}):
\begin{equation*}
(1-\theta_j) \alpha(H) +\alpha(L)\leq
\frac{1-\gamma^* \theta_j}{1-\gamma^* \theta_1} \Big( (1-\theta_1) \alpha(H) +\alpha(L) \Big)  \leq   (1-\gamma^* \theta_j ) \frac{1-\theta_1}{1-\gamma^* \theta_1}=v_j^*.
\end{equation*}
The above upper bound is attained by the following distribution over action profiles:
\begin{equation*}
    \alpha(H)=\frac{(1-\theta_1) \gamma^*}{1-\gamma^* \theta_1}, \alpha(L)=\frac{(1-\theta_1) (1-\gamma^*)}{1-\gamma^* \theta_1}, \textrm{ and } \alpha(N)=\frac{\theta_1 (1-\gamma^*)}{1-\gamma^* \theta_1},
\end{equation*}
which satisfies constraints (\ref{1.2}) and (\ref{1.3}). Therefore, the value of the optimization problem is $v_j^*$.

\section{Results on Equilibrium Behavior}\label{secD}
In section \ref{secC.1}, I show Theorem 3. In section \ref{secC.2}, I allow player $1$ to have arbitrary payoff functions, and show that in any equilibrium in which a \textit{normal type} approximately attains his Stackelberg payoff, any payoff type who plays the Stackelberg action in every period has to be indifferent between all outcomes in the one-shot game.
\subsection{Proof of Theorem 3}\label{secC.1}
For the first statement, suppose there exists type $\theta_i \neq \theta_m$ and
a pure strategy $\widehat{\sigma}_{\theta_i}$ that is type $\theta_i$'s best reply to $\sigma_2$, such that
\begin{equation}\label{C.3}
    \frac{\mathbb{E}^{(\widehat{\sigma}_{\theta_i},\sigma_2)} \Big[\sum_{t=0}^{\infty} (1-\delta)\delta^t \mathbf{1}\{y_t=H\}\Big] }{ \mathbb{E}^{(\widehat{\sigma}_{\theta_i},\sigma_2)} \Big[\sum_{t=0}^{\infty} (1-\delta)\delta^t \mathbf{1}\{y_t=L\}\Big]}
    = \frac{\gamma_i}{1-\gamma_i}
\end{equation}
for some $\gamma_i < \gamma^*$. Let $p_i$ be the discounted average frequency with which player $2$ plays $T$ under $(\widehat{\sigma}_{\theta_i},\sigma_2)$.

Let $\widehat{\sigma}_{\theta_m}$ be an arbitrary pure-strategy best reply of type $\theta_m$ against $\sigma_2$.
Let $p_m$ be the discounted average frequency with which player $2$ plays $T$ under $(\widehat{\sigma}_{\theta_m},\sigma_2)$ and let $\gamma_m$ be pinned down via:
\begin{equation}\label{C.4}
    \frac{\mathbb{E}^{(\widehat{\sigma}_{\theta_m},\sigma_2)} \Big[\sum_{t=0}^{\infty} (1-\delta)\delta^t \mathbf{1}\{y_t=H\}\Big] }{ \mathbb{E}^{(\widehat{\sigma}_{\theta_m},\sigma_2)} \Big[\sum_{t=0}^{\infty} (1-\delta)\delta^t \mathbf{1}\{y_t=L\}\Big]}
   = \frac{\gamma_m}{1-\gamma_m}.
\end{equation}
The long-run player's ex ante incentive constraints, namely, first,
type $\theta_i$ prefers $\widehat{\sigma}_{\theta_i}$ to $\widehat{\sigma}_{\theta_m}$,
and second,
type $\theta_m$ prefers $\widehat{\sigma}_{\theta_m}$ to $\widehat{\sigma}_{\theta_i}$
imply that
$p_i \geq p_m$ and $\gamma_i \geq \gamma_m$. This further implies that according to type $\theta_m$'s equilibrium strategy $\sigma_{\theta_m}$,
\begin{equation*}
    \frac{\mathbb{E}^{(\sigma_{\theta_m},\sigma_2)} \Big[\sum_{t=0}^{\infty} (1-\delta)\delta^t \mathbf{1}\{y_t=H\}\Big] }{ \mathbb{E}^{(\sigma_{\theta_m},\sigma_2)} \Big[\sum_{t=0}^{\infty} (1-\delta)\delta^t \mathbf{1}\{y_t=L\}\Big]}
  \leq \frac{\gamma_i}{1-\gamma_i},
\end{equation*}
or equivalently,
\begin{equation}\label{C.5}
    \gamma_i \mathbb{E}^{(\sigma_{\theta_m},\sigma_2)} \Big[\sum_{t=0}^{\infty} (1-\delta)\delta^t \mathbf{1}\{y_t=L\}\Big]-   (1-\gamma_i) \mathbb{E}^{(\sigma_{\theta_m},\sigma_2)} \Big[\sum_{t=0}^{\infty} (1-\delta)\delta^t \mathbf{1}\{y_t=H\}\Big] >0.
\end{equation}
Since type $\theta_m$'s payoff from $\sigma_{\theta_m}$ is at least $v_m^*-\varepsilon$, which is strictly greater than $1-\theta_m$ when $\varepsilon$ is small enough. This places a lower bound on $p_m$. If type $\theta_m$ plays according to $\sigma_{\theta_m}$, then the learning arguments in Fudenberg and Levine (1992) and Gossner (2011) imply that for every $\varepsilon>0$, there exists $\overline{\delta}$ such that when $\delta > \overline{\delta}$,
\begin{equation}\label{C.6}
 \gamma^* \mathbb{E}^{(\sigma_{\theta_m},\sigma_2)} \Big[\sum_{t=0}^{\infty} (1-\delta)\delta^t \mathbf{1}\{y_t=L\}\Big]-   (1-\gamma^*) \mathbb{E}^{(\sigma_{\theta_m},\sigma_2)} \Big[\sum_{t=0}^{\infty} (1-\delta)\delta^t \mathbf{1}\{y_t=H\}\Big]
 < \varepsilon.
\end{equation}
This contradicts (\ref{C.5}) once we pick $\varepsilon$ to be small enough, which establishes the lower bound on the relative frequencies of actions.

For the second statement, suppose towards a contradiction that according to one of type $\theta$ $(\neq \theta_1)$'s
pure-strategy best reply to $\sigma_2$, denoted by $\widehat{\sigma}_{\theta}$,
\begin{equation}\label{C.7}
     \frac{\mathbb{E}^{(\widehat{\sigma}_{\theta},\sigma_2)} \Big[\sum_{t=0}^{\infty} (1-\delta)\delta^t \mathbf{1}\{y_t=H\}\Big] }{ \mathbb{E}^{(\widehat{\sigma}_{\theta},\sigma_2)} \Big[\sum_{t=0}^{\infty} (1-\delta)\delta^t \mathbf{1}\{y_t=L\}\Big]}
    =\frac{\gamma}{1-\gamma}
\end{equation}
where $\gamma > \gamma^*$. Let
\begin{equation*}
    p \equiv \mathbb{E}^{(\widehat{\sigma}_{\theta},\sigma_2)} \Big[\sum_{t=0}^{\infty} (1-\delta)\delta^t \mathbf{1}\{y_t=H\}\Big]
    +\mathbb{E}^{(\widehat{\sigma}_{\theta},\sigma_2)} \Big[\sum_{t=0}^{\infty} (1-\delta)\delta^t \mathbf{1}\{y_t=L\}\Big].
\end{equation*}
If type $\theta_1$ plays according to $\widehat{\sigma}_{\theta}$, her payoff is $p(1-\gamma \theta_1)$. According to Theorem 1,
\begin{equation}\label{C.8}
    p(1-\gamma \theta_1) \leq 1-\theta_1.
\end{equation}
If type $\theta$ plays according to $\widehat{\sigma}_{\theta}$, she receives her equilibrium payoff, which is $p(1-\gamma \theta)$. The equilibrium payoff is within $\varepsilon$ of $v^*$ implies that:
\begin{equation}\label{C.9}
    p(1-\gamma \theta) \geq \frac{1-\theta_1}{1-\gamma^* \theta_1} (1-\gamma^* \theta) -\varepsilon.
\end{equation}
Inequalities (\ref{C.8}) and (\ref{C.9}) together imply that:
\begin{equation}\label{C.10}
    \varepsilon > (1-\theta_1) \Big\{
    \frac{1-\gamma^* \theta}{1-\gamma^* \theta_1} -\frac{1-\gamma \theta}{1-\gamma \theta_1}
    \Big\}.
\end{equation}
The RHS is strictly positive since $\gamma > \gamma^*$ and $\theta > \theta_1$. As a result, inequality (\ref{C.10}) cannot hold for $\varepsilon$ smaller than the RHS. For every $\gamma > \gamma^*$, take $\varepsilon$ to be smaller than
\begin{equation*}
    \min_{\theta \neq \theta_1} (1-\theta_1) \Big\{
    \frac{1-\gamma^* \theta}{1-\gamma^* \theta_1} -\frac{1-\gamma \theta}{1-\gamma \theta_1}
    \Big\},
\end{equation*}
we obtain a contradiction. This establishes the upper bound on the relative frequencies.
\subsection{Payoff Types who Behave Like Stackelberg Commitment Type}\label{secC.2}
I explore when can a rational-type patient player mixes between $H$ and $L$ at every history in the trust game, i.e., behaves like the Stackelberg commitment type in canonical reputation models. Players' stage-game payoffs are shown in Figure 5. Player $1$'s stage-game payoff is a function of her type $\theta \in \Theta$, where $\Theta$ is finite.
Different from the baseline model which requires all types of player $1$ sharing the same ordinal preferences over stage-game outcomes,
the current setup allows for arbitrary preferences for player $1$, and only requires one type of player $1$ having the ordinal preference in the baseline model, which I call the \textit{normal type}.
\begin{figure}
\begin{center}
\begin{tikzpicture}[scale=0.22]
\draw [->, thick] (0,10)--(-5,5);
\draw [->, thick] (0,10)--(5,5);
\draw [->, thick] (-5,5)--(-10,0);
\draw [->, thick] (-5,5)--(0,0);
\draw (0,10)--(-2,8)node[left]{T};
\draw (0,10)--(2,8)node[right]{N};
\draw (-5,5)--(-7,3)node[left]{H};
\draw (-5,5)--(-3,3)node[right]{L};
\draw [ultra thick] (0,9.9)--(0,10.1)node[above, blue]{P2};
\draw [ultra thick] (-5.1,5)--(-4.9,5)node[left, red]{P1};
\draw [ultra thick] (5,5)--(5,4.9)node[below]{({\color{red}{$z(\theta)$}}, {\color{blue}{$0$}})};
\draw [ultra thick] (-10,0)--(-10,-0.1)node[below]{({\color{red}{$x(\theta)$}}, {\color{blue}{$b$}})};
\draw [ultra thick] (0,0)--(0,-0.1)node[below]{({\color{red}{$y(\theta)$}}, {\color{blue}{$-c$}})};
\end{tikzpicture}
\caption{Generalized Stage Game}
\end{center}
\end{figure}
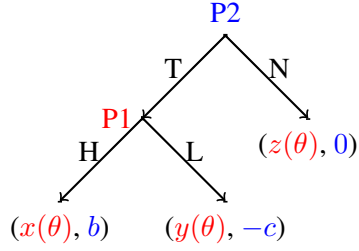
\begin{Proposition}\label{PropC.1}
Suppose there exists $\theta^* \in \Theta$, such that $y(\theta^*)>x(\theta^*)>z(\theta^*)$.
For every small enough $\varepsilon>0$, there exists $\underline{\delta} \in (0,1)$, such that when $\delta >\underline{\delta}$, if there exists BNE such that:
\begin{itemize}
  \item type $\theta^*$ attains within $\varepsilon$ of her Stackelberg payoff,
  \item there exists $\theta' \in \Theta$ such that type $\theta'$ mixes between $H$ and $L$ at every history,
\end{itemize}
then it must be the case that $x(\theta')=y(\theta')=z(\theta')$.
\end{Proposition}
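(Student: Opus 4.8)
The plan is to show that if $\theta'$ mixes between $H$ and $L$ at \emph{every} on-path history, then $\theta'$'s payoff is a single affine functional of the realized outcome frequencies that is constant on a two-dimensional set, which forces $x(\theta')=y(\theta')=z(\theta')$. The enabling observation is that, because $\theta'$ plays both $H$ and $L$ with positive probability whenever she is trusted, the set of histories reached with positive probability under $(\sigma_{\theta'},\sigma_2)$ coincides with the set of all histories consistent with $\sigma_2$. Hence $\theta'$ is indifferent between $H$ and $L$ at every such history, and by the one-shot deviation principle every strategy of player $1$ is a best reply against $\sigma_2$ and yields $\theta'$ the same value $V$ (her equilibrium payoff). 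Writing $\alpha\equiv(\alpha(H),\alpha(L),\alpha(N))$ for the discounted frequencies of outcomes $(T,H)$, $(T,L)$, $N$ induced against $\sigma_2$ by any player-$1$ strategy, I would record the identity
\[
x(\theta')\,\alpha(H)+y(\theta')\,\alpha(L)+z(\theta')\,\alpha(N)=V
\]
for \emph{every} achievable $\alpha$. The remaining task is to exhibit three achievable $\alpha$'s that are affinely independent in the simplex $\{\alpha(H)+\alpha(L)+\alpha(N)=1\}$: then the displayed affine map is constant $=V$ on the whole simplex plane, and its values at the three vertices, namely $x(\theta')$, $y(\theta')$, $z(\theta')$, must all equal $V$, which is the claim.

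For the three points I would use (i) always $H$, giving $\alpha_H=(p_H,0,1-p_H)$; (ii) always $L$, giving $\alpha_L=(0,p_L,1-p_L)$; and (iii) the normal type's equilibrium strategy $\sigma_{\theta^*}$, giving $\alpha^*=(a^*,b^*,n^*)$. Here $p_H,p_L$ are the discounted trust frequencies against $\sigma_2$ along the two constant action paths, and crucially they are \emph{type-independent}. First, since $\theta^*$ attains within $\varepsilon$ of her Stackelberg payoff $\gamma^* x(\theta^*)+(1-\gamma^*)y(\theta^*)$, the linear program of maximizing $a\,x(\theta^*)+b\,y(\theta^*)+n\,z(\theta^*)$ subject to $a+b+n=1$ and the buyer-incentive constraint $a/b\ge\gamma^*/(1-\gamma^*)$ (the analogue of constraint (\ref{1.3}), established as in Appendix B.2) has the unique maximizer $(\gamma^*,1-\gamma^*,0)$; LP sensitivity then gives $\alpha^*\to(\gamma^*,1-\gamma^*,0)$ as $\varepsilon\to0$, $\delta\to1$, so in particular $a^*,b^*>0$ and $n^*\to0$. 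Second, because the same deviation $\sigma_L$ is available to $\theta^*$ and $p_L$ does not depend on the type, her incentive constraint reads $p_L\,y(\theta^*)+(1-p_L)z(\theta^*)\le V^*+\varepsilon$, and since $y(\theta^*)>x(\theta^*)$ this yields $p_L\le 1-\kappa$ for a constant $\kappa>0$ once $\varepsilon<\gamma^*\big(y(\theta^*)-x(\theta^*)\big)$. Third, $p_H,p_L>0$: player $2$ must trust with positive probability (otherwise $\theta^*$ would earn only $z(\theta^*)<V^*$), and the first instance of trust on any path occurs at an all-$N$ history, which is reached identically under $\sigma_H$, $\sigma_L$ and $\sigma_{\theta^*}$.

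With these facts, affine independence is a short computation: projecting to the $(\alpha(H),\alpha(L))$-plane gives $(p_H,0)$, $(0,p_L)$, $(a^*,b^*)$, which are collinear if and only if $a^*/p_H+b^*/p_L=1$. Using $p_H\le1$, $p_L\le1-\kappa$ and $a^*+b^*=1-n^*\to1$,
\[
\frac{a^*}{p_H}+\frac{b^*}{p_L}\ \ge\ a^*+\frac{b^*}{1-\kappa}\ =\ (1-n^*)+\frac{b^*\kappa}{1-\kappa}\ >\ 1
\]
for $\varepsilon$ small and $\delta$ large (the boundary cases $p_H=0$ or $p_L=0$ are handled directly, using $a^*,b^*>0$, and $p_H=p_L=0$ is excluded by the first-trust argument). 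Hence the three vectors are affinely independent, and the argument of the first paragraph delivers $x(\theta')=y(\theta')=z(\theta')$.

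The main obstacle I expect is exactly step two: guaranteeing that the three achievable frequency vectors genuinely span the plane, i.e., ruling out the degenerate configuration $p_H=p_L=1$ in which player $2$ trusts regardless of play and $\theta'$ is pinned only to $x(\theta')=y(\theta')$ with $z(\theta')$ free. The idea that resolves this without any delicate belief-convergence estimate is that $p_H$ and $p_L$ are type-independent, so the normal type's incentive constraint against always cheating—combined with the fact that her best-reply value cannot exceed her Stackelberg payoff—forces $p_L$ strictly below $1$. I anticipate that the full-support reachability claim (underlying ``every strategy is a best reply'') and the LP-sensitivity bound for $\alpha^*$ are the careful-but-routine components of the write-up.
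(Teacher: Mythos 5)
Your proposal is correct, and it reaches the conclusion by a genuinely different organization of the same underlying facts. The paper argues by contradiction with a case analysis on the configuration of $(x(\theta'),y(\theta'),z(\theta'))$: the case $x(\theta')=y(\theta')\neq z(\theta')$ is eliminated because $\theta'$'s indifference between $\sigma_{\theta^*}$ and $\sigma_L$ forces the trust frequency under $\sigma_L$ up to $1-O(\varepsilon)$, so type $\theta^*$ could profitably deviate to $\sigma_L$ (payoff near $y(\theta^*)$, above Stackelberg); the case $x(\theta')\neq y(\theta')$ is split into three subcases according to where $z(\theta')$ sits, the last of which is killed by the same ``first trust occurs at an all-$N$ history, which is reachable under every player-$1$ strategy'' observation you invoke to get $p_H,p_L>0$. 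You replace this bookkeeping with a single geometric step: $\theta'$'s indifference across \emph{all} strategies makes $\alpha\mapsto x(\theta')\alpha(H)+y(\theta')\alpha(L)+z(\theta')\alpha(N)$ constant on the set of achievable occupation measures, and exhibiting three affinely independent achievable points forces the functional to be constant on the whole plane, hence equal at the vertices. The facts you need for the rank condition are precisely the ones the paper deploys case by case — the merging/ratio constraint and $n^*=O(\varepsilon)$ for $\sigma_{\theta^*}$ (the paper's (D.1)--(D.2)), type $\theta^*$'s incentive constraint against $\sigma_L$ (your bound $p_L\le 1-\kappa$ is the paper's Case 1 computation run in the contrapositive direction), and the first-trust reachability argument — so neither proof is more demanding in its inputs. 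What the paper's route buys is that each case names the specific incentive that breaks; what yours buys is that all three equalities drop out at once, and the criterion (affine independence of achievable frequency vectors) would extend mechanically to stage games with more than three outcomes, where a payoff-configuration case analysis would proliferate. One point to make airtight in a write-up: the claim that every strategy of $\theta'$ attains the same value $V$ needs the observation that the on-path histories of $(\sigma_{\theta'},\sigma_2)$ contain the reachable histories of $(\widehat{\sigma},\sigma_2)$ for \emph{every} $\widehat{\sigma}$ (true here, by induction, exactly because $\theta'$ mixes whenever trusted), plus discounting/continuity at infinity to pass from pointwise Bellman indifference to equality of lifetime values; this is the step your phrase ``by the one-shot deviation principle'' is carrying.
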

\begin{proof}
Suppose there exists such an equilibrium $\sigma$. For every $\varepsilon>0$, there exists $\underline{\delta}$ such that when $\delta>\underline{\delta}$, such that
\begin{equation}
    \frac{\mathbb{E}^{(\sigma_{\theta^*},\sigma_2)} \Big[\sum_{t=0}^{\infty} (1-\delta)\delta^t \mathbf{1}\{y_t=H\}\Big]}{ \mathbb{E}^{(\sigma_{\theta^*},\sigma_2)} \Big[\sum_{t=0}^{\infty} (1-\delta)\delta^t \mathbf{1}\{y_t=L\}\Big]}
    \geq \frac{\gamma^*-\varepsilon}{1-\gamma^*+\varepsilon}.
\end{equation}
Since type $\theta^*$'s payoff is within $\varepsilon$ of her Stackelberg payoff $\gamma^* x(\theta^*)+(1-\gamma^*) y(\theta^*)$, we have:
\begin{equation}\label{D.2}
    \mathbb{E}^{(\sigma_{\theta^*},\sigma_2)} \Big[\sum_{t=0}^{\infty} (1-\delta)\delta^t \mathbf{1}\{y_t=N\}\Big] \leq
    \underbrace{\frac{1+y(\theta^*)-x(\theta^*)}{(1-\gamma^*+\varepsilon) y(\theta^*) +(\gamma^*-\varepsilon) x(\theta^*)-z(\theta^*)}}_{\equiv C}
\varepsilon    .
\end{equation}
For type $\theta'$ to mix at every history, she is indifferent between $\sigma_{\theta^*}$, the strategy of playing $H$ at every history, denoted by $\sigma_H$, and the strategy of playing $L$ at every history, denoted by $\sigma_L$. Inequality (\ref{D.2}) implies that unless $x(\theta')=y(\theta')$, type $\theta'$ receives equilibrium payoff \textit{strictly} between $x(\theta')$ and $y(\theta')$.
If type $\theta'$ plays $\sigma_H$, then her payoff is between $x(\theta')$ and $z(\theta')$. If type $\theta'$ plays $\sigma_L$, then her payoff is between $y(\theta')$ and $z(\theta')$, and it cannot be equal $y(\theta')$.
These three payoffs coincide, which I denote by $\Pi(\theta')$.
I consider two cases separately.

First, suppose $x(\theta') = y(\theta') \neq z(\theta')$. Then $\Pi(\theta')$ is a convex combination of $y(\theta')$ and $z(\theta')$, with the convex weight on $z(\theta')$ being less than the RHS of (\ref{D.2}). Type $\theta'$ being indifferent between $\sigma_{\theta^*}$ and $\sigma_L$ implies that type $\theta^*$'s payoff by playing $\sigma_L$ is at least:
\begin{equation*}
   (1-C\varepsilon) y(\theta^*) +C\varepsilon z(\theta^*),
\end{equation*}
which is strictly greater than his Stackelberg payoff when $\varepsilon$ is small enough. This implies that type $\theta^*$ has a strict incentive to deviate to $\sigma_L$, which leads to a contradiction.

Next, suppose $x(\theta') \neq y(\theta')$. Consider three subcases. First, if $z(\theta') \geq \max\{y(\theta'),x(\theta')\}$, then any convex combination between $z(\theta')$ and the larger one among $x(\theta')$ and $y(\theta')$ is strictly larger than a number strictly between $x(\theta')$ and $y(\theta')$. This leads to a contradiction.
Similarly, if $z(\theta') \leq \min \{y(\theta'),x(\theta')\}$, then  any convex combination between $z(\theta')$ and the smaller one among $x(\theta')$ and $y(\theta')$ is strictly smaller than a number strictly between $x(\theta')$ and $y(\theta')$. This leads to a contradiction.
Third, if $z(\theta')$ is strictly between $x(\theta')$ and $y(\theta')$. Then for a convex combination between $z(\theta')$ and $x(\theta')$ to equal a convex combination between $z(\theta')$ and $y(\theta')$, both of which attach convex weight $1$ to $z(\theta')$, i.e., unless player $1$ has played different actions in the past, otherwise, player $2$ never plays $T$. On the other hand, if player $1$ plays according to $\sigma_{\theta^*}$, then the discounted average frequency of outcome $N$ is close to $0$. Hence, there exists $h^t$ that occurs with positive probability under $(\sigma_{\theta^*},\sigma_2)$, such that $T$ is played with positive probability under $h^t$ but not at any predecessor of $h^t$. This leads to a contradiction since by using strategy $\sigma_H$, player $1$ can also reach history $h^t$, which implies that the discounted average frequency of $N$ under $(\sigma_H,\sigma_2)$ is strictly smaller than $1$.
\end{proof}

\section{Generalizations \& Robustness}\label{secA}
I state generalizations of Theorems \ref{Theorem3.1} and \ref{Theorem3.2} to a class of games.
I setup the model and state Theorems 1' and 2' when players move \textit{simultaneously} in the stage game. Analogous results also hold in sequential-move stage games. Due to length restrictions, the proofs of Theorems 1' and 2' are
available upon request.


Consider a repeated game in discrete time between an informed player $1$ with discount factor $\delta_1 \in (0,1)$, and
an uninformed player $2$ with discount factor $\delta_2 \in [0,1)$.
\begin{itemize}
  \item Player $1$ has a perfectly persistent type $\theta \in \Theta$. Player $2$'s full support prior is $\pi \in \Delta (\Theta)$.
  \item Player $1$'s action $a_1 \in A_1$, player $2$'s action $a_2 \in A_2$. A pure action profile is $a \in A\equiv A_1 \times A_2$, and
 player $1$'s mixed action is denoted by $\alpha_1 \in \Delta(A_1)$.
  \item Players' stage game payoffs are $u_1(\theta,a_1,a_2)$ and $u_2(a_1,a_2)$, i.e., values are private.
  \item Player 2's stage-game best reply to $\alpha_1 \in \Delta (A_1)$ is $\textrm{BR}_2(\alpha_1)$, which is a nonempty subset of $A_2$.
  \item For every $\theta \in \Theta$, the set of type $\theta$'s pure Stackelberg action is $\arg\max_{a_1 \in A_1} \min_{a_2 \in \textrm{BR}_2(a_1)}u_1(\theta,a_1,a_2)$.
  \item Players' pure actions are perfectly monitored. Players \textit{cannot} observe each other's mixed actions.
  \item I assume that $\Theta$ and $A_1$ are finite sets, and $|A_2|=2$.
\end{itemize}
I start from Assumption \ref{Ass1} that is satisfied for generic $u_1$ and $u_2$:
\begin{Assumption}\label{Ass1}
Players' stage-game payoff functions $u_1$ and $u_2$ satisfy:
\begin{enumerate}
  \item For every pure action $a_1 \in A_1$, $\textrm{BR}_2(a_1)$ is a singleton.
  \item For every $\theta \in \Theta$, type $\theta$ has a unique pure Stackelberg action.
\end{enumerate}
\end{Assumption}
Assumption \ref{Ass2} is called \textit{monotone-supermodularity} (MSM), which captures the lack-of-commitment problem.
\begin{Assumption}[MSM]\label{Ass2}
$\Theta$ and $A_2$ are fully ordered sets, and $A_1$ is a lattice, such that:
\begin{itemize}
\item[1.] $u_1(\theta,a_1,a_2)$ is strictly decreasing in $a_1$, and is strictly increasing in $a_2$.
\item[2.] $u_1(\theta,a_1,a_2)$ has strictly increasing differences in $\theta$ and $a_1$,\\
$u_1(\theta,a_1,a_2)$ has weakly increasing differences in $\theta$ and $a_2$.
\item[3.] $u_2(a_1,a_2)$ has strictly increasing differences in $a_1$ and $a_2$.
\end{itemize}
\end{Assumption}
To map Assumption \ref{Ass2} into the seller-buyer application, let $a_1$ be the quality of good the seller supplies, $a_2$ represents whether the buyer makes the purchase or not, or whether she purchases the customized or standardized version, and $\theta$ measures the efficiency of the seller's production technology.
Assumption \ref{Ass2} requires that:
\begin{enumerate}
  \item supply high quality is strictly costly for the seller, but he strictly benefits from buyers' trusting action;
  \item a higher type faces lower cost to supply high quality, and values buyers' trust weakly more;
  \item a buyer has stronger incentive to trust when her expectation of product quality is higher.
\end{enumerate}
That being said, my general framework allows the seller to have
\begin{enumerate}
  \item any finite number of rational types,
  \item any finite number of actions, and the seller's effort can be multi-dimensional.
\end{enumerate}
My framework surpasses the generality of Mailath and Samuelson (2001), Phelan (2006), Ekmekci (2011), Liu (2011), and Jehiel and Samuelson (2012) that focus on $2 \times 2$ stage-games with player $1$ having only one rational type.
It also allows for non-separability between $\theta$ and $a_2$, and between $a_1$ and $a_2$ in seller's payoff,
i.e., my framework can accommodate, but is not limited to the case in
which player $1$'s benefit from player $2$s' trust being independent of her type and her action.

Let $\Theta \equiv \{\theta_1,\theta_2,...,\theta_m\}$
with $\theta_1 \succ \theta_2 \succ ... \succ \theta_m$.
For $i \in \{1,2\}$,
let $\overline{a}_i \equiv \max A_i$ and $\underline{a}_i \equiv \min A_i$.
Under Assumption \ref{Ass2}, $(\underline{a}_1,\underline{a}_2)$ is all types of player $1$'s minmax outcome.

Let $a_1^*(\theta) \in A_1$ be
type $\theta$'s \textit{pure Stackelberg action}, namely, her optimal commitment \textit{if she can only commit to pure actions}.
This is uniquely defined under the second statement of Assumption \ref{Ass1}. I assume that the most efficient type of seller finds it optimal to commit to supply the highest quality.
\begin{Assumption}\label{Ass3}
$a_1^*(\theta_1)=\overline{a}_1$.
\end{Assumption}
Assumption \ref{Ass3} allows some types of the seller to be \textit{inefficient}, i.e, their cost of supplying high quality is so high that they
strictly prefer the minmax outcome $(\underline{a}_1,\underline{a}_2)$ to the highest outcome $(\overline{a}_1,\overline{a}_2)$.

I generalize Theorem 1 by characterizing every type of P1's \textit{highest equilibrium payoff} when $\delta_1$ is close to $1$ and $\delta_2$ is close to or equal to $0$.
For every $j \in \{1,2,...,m\}$, let $v_j^*$ be the value of the following problem:
\begin{equation}\label{3.1}
  \max_{\alpha \in \Delta (A_1 \times A_2)} \sum_{a \in A} \alpha(a) u_1(\theta_j,a)
\end{equation}
subject to:
\begin{equation}\label{3.2}
    \sum_{a \in A} \alpha(a) u_1(\theta_1,a) \leq u_1(\theta_1,\overline{a}_1,\overline{a}_2),
\end{equation}
and for every $a_2^* \in A_2$ such that the marginal distribution of $\alpha$ on $A_2$ attaches positive probability to $a_2^*$,
\begin{equation}\label{3.3}
    a_2^* \in \arg\max_{a_2 \in A_2} \sum_{a_1 \in A_1} \alpha_{1} (a_1|a_2^*) u_2(a_1,a_2),
\end{equation}
with $\alpha_{1}(\cdot|a_2^*) \in \Delta (A_1)$ the distribution over player $1$'s actions conditional on $a_2^*$ induced by joint distribution $\alpha$.
\begin{Theorem1}
If players' stage-game payoffs satisfy Assumptions 1, 2 and 3, then for every $\varepsilon>0$, there exist $\underline{\delta}_1 \in (0,1)$ and
$\overline{\delta}_2 \in (0,1)$
such that for every $\delta_1 \in( \underline{\delta}_1, 1)$ and $\delta_2 \in [0,\overline{\delta}_2)$,
\begin{enumerate}
\item There exists no BNE in which type $\theta_1$'s payoff is strictly more than $v_1^*$. There exists no BNE in which type $\theta_j$'s payoff is strictly more than $v_j^*+\varepsilon$ for some $j \in \{2,3,...,m\}$.
\item There exists sequential equilibrium in which P1 attains payoff within $\varepsilon$ of $v^* \equiv (v_1^*,...v_m^*)$.
\end{enumerate}
\end{Theorem1}
The proof uses similar ideas as that of Theorem 1, and is available upon request.
According to Theorem 1', $v_j^*$ is type $\theta_j$'s highest equilibrium payoff.
Similar to the baseline model, the most efficient type $\theta_1$
cannot strictly benefit from incomplete information and her maximal payoff in the repeated incomplete information game coincides with her highest equilibrium payoff in the repeated complete information game. Second, all types except for type $\theta_1$ can strictly benefit from incomplete information. Moreover, every type's highest equilibrium payoff depends only on their own type and the most efficient type in the support of player $2$s' prior belief. Third,
 $v_j^*$ is pinned down by two constraints, which generalize constraints (\ref{1.2}) and (\ref{1.3}) in the baseline model. Constraint (\ref{3.2}) says that the most efficient type of P1 receives payoff no more than $u_1(\theta_1,\overline{a}_1,\overline{a}_2)$ under the distribution over stage-game outcome induced by type $\theta_j$. Similar to the baseline model,
this arises due to type $\theta_1$'s incentive constraint and that he cannot benefit from private information. This constraint is absent in commitment-type models as commitment types face no incentive constraint. Constraint (\ref{3.3}) says that conditional on every $a_2$ that occurs with positive probability under $\alpha$,
P2 has an incentive to play
$a_2$ against the conditional distribution over player $1$'s actions. This comes from P2's learning, namely, her prediction about P1's action is arbitrarily close to P1's action in the true state in all except for a bounded number of periods.

Let $\Theta^*$ be the set of types whose \textit{mixed Stackelberg payoffs} are strictly greater than their minmax payoffs:
\begin{equation}\label{3.10}
    \Theta^* \equiv \Big\{
    \theta
    \Big|
    \textrm{ there exist } \alpha_1 \in \Delta (A_1) \textrm{ and } a_2 \in \textrm{BR}_2(\alpha_1) \textrm{ such that }
    u_1(\theta,\alpha_1,a_2)> u_1(\theta, \underline{a}_1,\underline{a}_2)
    \Big\},
\end{equation}
Given $\sigma_2 :\mathcal{H} \rightarrow \Delta (A_2)$,
I say that $\sigma_{\theta}: \mathcal{H} \rightarrow \Delta(A_1)$ is \textit{stationary} with respect to $\sigma_2$
if it takes the same value for every $h^t$ that occurs with positive probability under $(\sigma_{\theta},\sigma_2)$.
I say that $\sigma_{\theta}: \mathcal{H} \rightarrow \Delta(A_1)$ is \textit{completely mixed} with respect to $\sigma_2$
if $\sigma_{\theta}(h^t)$ has full support for every $h^t$ that occurs with positive probability under $(\sigma_{\theta},\sigma_2)$.
\begin{Theorem2}
If $|\Theta^*| \geq 2$ and stage-game payoffs satisfy Assumptions \ref{Ass1}, \ref{Ass2}, and \ref{Ass3},
then
for every small enough $\varepsilon >0$, there exist $\underline{\delta}_1 \in (0,1)$ and $\overline{\delta}_2 \in (0,1)$ such that when $\delta_1 \in ( \underline{\delta}_1,1)$ and $\delta_2 < [0,\overline{\delta}_2)$,
in any BNE $\sigma \equiv \big((\sigma_{\theta})_{\theta \in \Theta}, \sigma_2 \big)$ that attains payoff within $\varepsilon$ of $v^*$:
\begin{enumerate}
  \item for every $\theta \in \Theta^*$ and every $\widehat{\sigma}_{\theta}$ that is type $\theta$'s best reply against $\sigma_2$, $\widehat{\sigma}_{\theta}$ is not completely mixed.
  \item  for every $\theta \in \Theta^*$, $\sigma_{\theta}$ is not stationary.
\end{enumerate}
\end{Theorem2}
Similar to that of Theorem \ref{Theorem3.2}, the proof of Theorem 2' uses the following ideas. First,
for any given type of player $1$, she \textit{cannot} strictly benefit from incomplete information after she becomes the most efficient type in the support of player $2$'s posterior belief. Second, if stage-game payoffs are \textit{monotone-supermodular} (Assumption 2) and player $2$'s stage-game action choice is binary, then in every BNE of this repeated signaling game:
\begin{itemize}
  \item if a less efficient type finds it weakly optimal to play his highest action $\overline{a}_1$ in every period,
then a more efficient type plays $\overline{a}_1$ with probability $1$ at every on-path history.
\item if a more efficient type finds it weakly optimal to play his lowest action $\underline{a}_1$ in every period,
then a less efficient type plays $\underline{a}_1$ with probability $1$ at every on-path history.
\end{itemize}
In general, this step uses a result on 1-shot signaling games developed in Liu and Pei (2020), that provides sufficient conditions on the sender's and the receiver's payoff functions, under which the sender's action is \textit{non-decreasing} with respect to her type in all equilibria.

Next, I use the following learning argument in Gossner (2011). Suppose $\theta$ is player $1$'s true type and $\sigma_{\theta}: \mathcal{H} \rightarrow \Delta (A_1)$ is type $\theta$'s equilibrium strategy, the expected number of periods in which player $2$s fail to play an $\varepsilon$-stage-game best reply against $\sigma_{\theta}(h^t)$ is uniformly bounded from above. When $\delta \rightarrow 1$, type $\theta$'s payoff from playing her equilibrium strategy cannot exceed her payoff from \textit{pre-committing} to strategy $\sigma_{\theta}$.
\end{spacing}
\newpage


\begin{thebibliography}{99}
\bibitem{pa} Aumann, Robert and Michael Maschler (1965) \textit{Repeated Games with Incomplete Information}, MIT Press.
\bibitem{pa} Bar-Isaac, Heski and Steven Tadelis (2008) ``Seller Reputations,'' \textit{Foundations and Trends in Micro}.
\bibitem{pa} Barro, Robert (1986) ``Reputation in a Model of Monetary Policy with Incomplete Information,'' \textit{Journal of Monetary Economics}, 17, 3-20.
\bibitem{pa} Benabou, Roland and Guy Laroque (1992) ``Using Privileged Information to Manipulate Markets: Insiders, Gurus, and Credibility,'' \textit{Quarterly Journal of Economics}, 107(3), 921-958.
\bibitem{pa} Board, Simon and Moritz Meyer-ter-Vehn (2013) ``Reputation for Quality,'' \textit{Econometrica}, 81(6), 2381-2462.
\bibitem{pa} Bohren, Aislinn (2018) ``Using Persistence to Generate Incentives in a Dynamic Moral Hazard Problem,'' \textit{Theoretical Economics}, forthcoming.
\bibitem{pa} Caruana, Guillermo and Liran Einav (2008) ``A Theory of Endogenous Commitment,'' \textit{Review of Economic Studies}, 75(1), 99-116.
\bibitem{pa} Chassang, Sylvain (2010) ``Building Routines: Learning, Cooperation, and the Dynamics of Incomplete Relational Contracts,'' \textit{American Economic Review}, 100(1), 448-465.
\bibitem{pa} Cripps, Martin, George Mailath and Larry Sameulson (2004) ``Imperfect Monitoring and Impermanent Reputations,'' \textit{Econometrica}, 72(2), 407-432.
\bibitem{pa} Cripps, Martin and Jonathan Thomas (2003) ``Some Asymptotic Results in Discounted Repeated Games of One-Sided Incomplete Information,'' \textit{Mathematics of Operations Research}, 28, 433-462.
\bibitem{pa} Dellarocas, Chrysanthos (2006) ``Reputation Mechanisms,''  \textit{Handbook on Information Systems and Economics}, T. Hendershott edited, Elsevier Publishing, 629-660.
\bibitem{pa} Dilm\'{e}, Francesc (2018) ``Reputation Building through Costly Adjustment,'' \textit{Journal of Economic Theory}, forthcoming.
\bibitem{pa} Ekmekci, Mehmet (2011) ``Sustainable Reputations with Rating Systems,'' \textit{Journal of Economic Theory}, 146(2), 479-503.
\bibitem{pa} Fudenberg, Drew, David Kreps and Eric Maskin (1990) ``Repeated Games with Long-Run and Short-Run Players,'' \textit{Review of Economic Studies}, 57(4), 555-573.
\bibitem{pa} Fudenberg, Drew and David Levine (1989) ``Reputation and Equilibrium Selection in Games with a Patient Player,'' \textit{Econometrica}, 57(4), 759-778.
\bibitem{pa} Fudenberg, Drew and David Levine (1992) ``Maintaining a Reputation when Strategies are Imperfectly Observed,'' \textit{Review of Economic Studies}, 59(3), 561-579.
\bibitem{pa} Fudenberg, Drew and Eric Maskin (1991) ``On the Dispensability of Public Randomization in Discounted Repeated Games,'' \textit{Journal of Economic Theory}, 53(2), 428-438.
\bibitem{pa} Gossner, Olivier (2011) ``Simple Bounds on the Value of a Reputation,'' \textit{Econometrica}, 79(5), 1627-1641.
\bibitem{pa} Hart, Sirgiu (1985) ``Nonzero-Sum Two-Person Repeated Games with Incomplete Information,'' \textit{Mathematics of Operations Research}, 10(1), 117-153.
\bibitem{pa} H\"{o}rner, Johannes, Stefano Lovo and Tristan Tomala (2011) ``Belief-free Equilibria in Games with Incomplete Information: Characterization and Existence,'' \textit{Journal of Economic Theory}, 146, 1770-1795.
\bibitem{pa} Jehiel, Philippe and Larry Samuelson (2012) ``Reputation with Analogical Reasoning,'' \textit{Quarterly Journal of Economics}, 127(4), 1927-1969.
\bibitem{pa} Kamenica, Emir and Matthew Gentzkow (2011) ``Bayesian Persuasion,'' \textit{American Economic Review}, 101(6), 2590–2615.
\bibitem{pa} Liu, Qingmin (2011) ``Information Acquisition and Reputation Dynamics,'' \textit{Review of Economic Studies}, 78(4), 1400-1425.
\bibitem{pa} Liu, Qingmin and Andrzej Skrzypacz (2014) ``Limited Records and Reputation Bubbles,'' \textit{Journal of Economic Theory} 151, 2-29.
\bibitem{pa} Liu, Shuo and Harry Pei (2020) ``Monotone Equilibria in Signalling Games,'' \textit{European Economic Review}, forthcoming.
\bibitem{pa} Mailath, George  and Larry Samuelson (2001) ``Who Wants a Good Reputation?'' \textit{Review of Economic Studies}, 68(2), 415-441.
\bibitem{pa} Mailath, George  and Larry Samuelson (2006) ``Repeated Games and Reputations: Long-Run Relationships,'' Princeton University Press.
\bibitem{pa} Mathevet, Laurent, David Pearce, and Ennio Stacchetti (2019) ``Reputation and Information Design,'' Working Paper, New York .
\bibitem{pa} Milgrom, Paul and John Roberts (1982) ``Limit Pricing and Entry under Incomplete Information: An Equilibrium Analysis,'' \textit{Econometrica}, 50(2), 443-459.
\bibitem{pa} Pei, Harry (2019) ``Reputation Effects under Interdependent Values,'' Working Paper, Northwestern University.
\bibitem{pa} Phelan, Christopher (2006) ``Public Trust and Government Betrayal,''  \textit{Journal of Economic Theory}, 130(1), 27-43.
\bibitem{pa} P\c{e}ski, Marcin (2014) ``Repeated Games  with Incomplete Information and Discounting,'' \textit{Theoretical Economics}, 9(3), 651-694.
\bibitem{pa} Shalev, Jonathan (1994) ``Nonzero-Sum Two-Person Repeated Games with Incomplete Information and Known-Own Payoffs,''
\textit{Games and Economic Behavior}, 7(2), 246-259.
\bibitem{pa} Sobel, Joel (1985) ``A Theory of Credibility,'' \textit{Review of Economic Studies}, 52(4), 557-573.
\bibitem{pa} Tirole, Jean (1996) ``A Theory of Collective Reputations (with applications to the persistence of corruption and to firm quality),'' \textit{Review of Economic Studies}, 63(1), 1-22.
\bibitem{pa} Weinstein, Jonathan and Muhamet Yildiz (2016) ``Reputation without Commitment in Finitely Repeated Games,'' \textit{Theoretical Economics}, 11(1), 157-185.
\bibitem{pa} Wiseman, Thomas (2005) ``A Partial Folk Theorem for Games with Unknown Payoff Distributions,'' \textit{Econometrica}, 73(2), 629-645.
\bibitem{pa} Wiseman, Thomas (2012) ``A Partial Folk Theorem for Games with Private Learning,'' \textit{Theoretical Economics}, 7, 217-239.
\end{thebibliography}
\end{document}